\numberwithin{equation}{section}
\newcommand{\NN}{\mathbb{N}}
\newcommand{\RR}{\mathbb{R}}
\DeclareMathOperator*{\argmax}{arg\,max}
\DeclareMathOperator*{\argmin}{arg\,min}
\newcommand{\RC}[2]{\mathrm{RC}_{#2}(#1)}
\newtheorem{proposition}{Proposition}[section]
\newtheorem{corollary}{Corollary}[section]
\theoremstyle{remark}
\newtheorem*{remark}{Remark}
\theoremstyle{definition}
\begin{document}

\def\spacingset#1{\renewcommand{\baselinestretch}%
{#1}\small\normalsize} \spacingset{1}


\title{Improving estimation for asymptotically independent bivariate extremes via global estimators for the angular dependence function}
\author[1,2*]{C. J. R. Murphy-Barltrop}
\author[3]{J. L. Wadsworth}
\author[3]{E. F. Eastoe}
\affil[1]{Technische Universität Dresden, Institut Für Mathematische Stochastik, Helmholtzstraße 10, 01069 Dresden, Germany}
\affil[2]{Center for Scalable Data Analytics and Artificial Intelligence (ScaDS.AI) Dresden/Leipzig, Germany}
\affil[3]{Department of Mathematics and Statistics, Lancaster University LA1 4YF, United Kingdom}
\affil[*]{Correspondence to: callum.murphy-barltrop@tu-dresden.de}
\date{\today}

\maketitle

\bigskip
\begin{abstract}
    Modelling the extremal dependence of bivariate variables is important in a wide variety of practical applications, including environmental planning, catastrophe modelling and hydrology. The majority of these approaches are based on the framework of bivariate regular variation, and a wide range of literature is available for estimating the dependence structure in this setting. However, such procedures are only applicable to variables exhibiting asymptotic dependence, even though asymptotic independence is often observed in practice. In this paper, we consider the so-called `angular dependence function'; this quantity summarises the extremal dependence structure for asymptotically independent variables. Until recently, only pointwise estimators of the angular dependence function have been available. We introduce a range of global estimators and compare them to another recently introduced technique for global estimation through a systematic simulation study, and a case study on river flow data from the north of England, UK. 
\end{abstract}

\noindent%
{\it Keywords:}  Bivariate Extremes, Dependence Modelling, Asymptotic Independence, Angular Dependence Function
\vfill

\newpage
\spacingset{1.8}

\section{Introduction} \label{Sec1}
Bivariate extreme value theory is a branch of statistics that deals with the modelling of dependence between the extremes of two variables. This type of analysis is useful in a variety of fields, including finance \citep{Castro-Camilo2018}, engineering \citep{Ross2020}, and environmental science \citep{Brunner2016}, where understanding and predicting the behaviour of rare, high-impact events is important. 

In certain applications, interest lies in understanding the risk of observing simultaneous extreme events at multiple locations; for example, in the context of flood risk modelling, widespread flooding can result in damaging consequences to properties, businesses, infrastructure, communications and the economy \citep{Lamb2010,Keef2013a}. To support resilience planning, it it imperative to identify locations at high risk of joint extremes. 

Classical theory for bivariate extremes is based on the framework of regular variation. Given a random vector $(X,Y)$ with standard exponential margins, we say that $(X,Y)$ is bivariate regularly varying if, for any measurable $B \subset [0,1]$,
\begin{equation} \label{eqn:MRV} 
    \lim_{r \to \infty} \Pr (V \in B, R > sr \mid R>r) = H(B) s^{-1}, \; s \geq 1,
\end{equation}
with $R:= e^X+e^Y$, $V := e^X/R$  and $H(\partial B) = 0$, where $\partial B$ is the boundary of $B$ \citep{Resnick1987}. Note that bivariate regular variation is most naturally expressed on standard Pareto margins, and the mapping $(X,Y) \mapsto (e^X,e^Y)$ performs this transformation. We refer to $R$ and $V$ as radial and angular components, respectively. Equation \eqref{eqn:MRV} implies that for the largest radial values, the radial and angular components are independent. Furthermore, the quantity $H$, which is known as the spectral measure, must satisfy the moment constraint $\int_0^1v\mathrm{d}H(v) = 1/2$. 

The spectral measure summarises the extremal dependence of $(X,Y)$, and a wide range of approaches exist for its estimation \citep[e.g.,][]{Einmahl2009,Carvalho2014,Eastoe2014}. Equivalently, one can consider Pickands' dependence function \citep{Pickands1981}, which has a direct relationship to $H$ via
\begin{equation*}
    A(t) = \int_0^1\max\{vt, (1-v)(1-t) \}2\mathrm{d}H(v), \hspace{1em} t \in [0,1],
\end{equation*}
where $A$ is a convex function satisfying $\max(t,1-t) \leq A(t) \leq 1$. This function again captures the extremal dependence of $(X,Y)$, and many approaches also exist for its estimation \citep[e.g.,][]{Guillotte2016,Marcon2016,Vettori2018}. Moreover, estimation procedures for the spectral measure and Pickands' dependence function encompass a wide range of statistical methodologies, with parametric, semi-parametric, and non-parametric modelling techniques proposed in both Bayesian and frequentist settings. 

However, methods based on bivariate regular variation are limited in the forms of extremal dependence they can capture. This dependence can be classified through the coefficient $\chi$ \citep{Joe1997}, defined as
\begin{equation*}
    \chi := \lim_{u \to \infty}\Pr(Y > u \mid X > u) \in [0,1],
\end{equation*}
where this limit exists. If $\chi >0$, then $X$ and $Y$ are asymptotically dependent, and the most extreme values of either variable can occur simultaneously. If $\chi = 0$, $X$ and $Y$ are asymptotically independent, and the most extreme values of either variable occur separately. 

Under asymptotic independence, the spectral measure $H$ places all mass on the points $\{0\}$ and $\{1\}$; equivalently, $A(t) = 1$ for all $t \in [0,1]$. Consequently, for this form of dependence, the framework given in equation \eqref{eqn:MRV} is degenerate and is unable to accurately extrapolate into the joint tail \citep{Ledford1996,Ledford1997}. Practically, an incorrect assumption of asymptotic dependence between two variables is likely to result in an overly conservative estimate of joint risk.

To overcome this limitation, several models have been proposed that can capture both classes of extremal dependence. The first was given by \citet{Ledford1996}, in which they assume that as $u \to \infty$, the joint tail can be represented as 
\begin{equation} \label{eqn:led_tawn}
    \Pr (X > u, Y > u) = \Pr\{\min(X,Y)>u\} = L(e^u)e^{-u/\eta},
\end{equation}
where $L$ is a slowly varying function at infinity, i.e., $\lim_{u \to \infty}L(cu)/L(u) = 1$ for $c>0$, and $\eta \in (0,1]$. The quantity $\eta$ is termed the coefficient of tail dependence, with $\eta=1$ and $\lim_{u \to \infty}L(u) > 0$ corresponding to asymptotic dependence and either $\eta < 1$ or $\eta = 1$ and $\lim_{u\to \infty}L(u) = 0$ corresponding to asymptotic independence. Many extensions to this approach exist \citep[e.g.,][]{Ledford1997,Resnick2002,Ramos2009}; however, all such approaches are only applicable in regions where both variables are large, limiting their use in many practical settings. Since many extremal bivariate risk measures, such as environmental contours \citep{Haselsteiner2021} and return curves \citep{Murphy-Barltrop2023}, are defined both in regions where both variables are extreme and in regions where only one variable is extreme, methods based on equation \eqref{eqn:led_tawn} are inadequate for their estimation.   

Several copula-based models have been proposed that can capture both classes of extremal dependence, such as those given in \citet{Coles2002}, \citet{Wadsworth2017} and \citet{Huser2019}. Unlike equation \eqref{eqn:led_tawn}, these can be used to evaluate joint tail behaviour in all regions where at least one variable is extreme. However, these techniques typically require strong assumptions about the parametric form of the bivariate distribution, thereby offering reduced flexibility.

\citet{Heffernan2004} proposed a modelling approach, known as the conditional extremes model, which also overcomes the limitations of the framework described in equation \eqref{eqn:led_tawn}. This approach assumes the existence of normalising functions $a:\RR_+ \to \RR$ and $b:\RR_+ \to \RR_+$ such that
\begin{equation} \label{eqn:heff_tawn}
    \lim_{u \to \infty}\Pr\left[\{Y-a(X)\}/b(X) \leq z, \; X-u > x \mid X>u\right] = D(z)e^{-x}, \; x>0, 
\end{equation}
where $D$ is a non-degenerate distribution function that places no mass at infinity. Note that the choice of conditioning on $X>u$ is arbitrary, and an equivalent formulation exists for normalised $X$ given $Y>u$. This framework can capture both asymptotic dependence and asymptotic independence, with the former arising when $a(x) = x$ and $b(x) = 1$, and can also be used to describe extremal behaviour in regions where only one variable is large. 

Finally, \citet{Wadsworth2013} proposed a general extension of equation \eqref{eqn:led_tawn}. As $u \to \infty$, they assume that for any $(\beta,\gamma) \in \RR_+^2 \setminus \{ \boldsymbol{0}\}$,
\begin{equation} \label{eqn:wads_tawn_kappa}
    \Pr(X > \beta u,Y> \gamma u) = L(e^u;\beta,\gamma)e^{-\kappa(\beta,\gamma)u},
\end{equation}
where $L(\cdot \; ;\gamma,\beta)$ is slowly varying and the function $\kappa$ provides information about the joint tail behaviour of $(X,Y)$. One can observe that equation \eqref{eqn:led_tawn} is a special case of \eqref{eqn:wads_tawn_kappa} with $\beta = \gamma$. The dependence function $\kappa$ satisfies several theoretical properties: for instance, it is non-decreasing in each argument, satisfies the lower bound $\kappa(\beta,\gamma) \geq \max\{\beta,\gamma\}$, and is homogeneuous of order 1, i.e., $\kappa(h\beta,h\gamma) = h\kappa(\beta,\gamma)$ for any $h > 0$. Setting $w := \beta/(\beta + \gamma) \in [0,1]$, the latter property implies that $\kappa(\beta,\gamma) = (\beta + \gamma)\kappa(w,1-w)$, motivating the definition of the so-called angular dependence function (ADF) $\lambda(w) = \kappa(w,1-w), \; w \in [0,1]$. Using this representation, equation \eqref{eqn:wads_tawn_kappa} can be rewritten as 
\begin{equation} \label{eqn:wads_tawn}
    \Pr(\min\{X/w,Y/(1-w)\}>u) = L(e^u;w)e^{-\lambda(w)u},  \; \; w \in [0,1], \; \lambda(w) \geq \max(w,1-w),
\end{equation}
as $u \to \infty$, where $L(\cdot \; ;w)$ is slowly varying. The ADF generalises the coefficient $\eta$, with $\eta = 1/\{2\lambda(0.5)\}$. This extension captures both extremal dependence regimes, with asymptotic dependence implying the lower bound, i.e., $\lambda(w) = \max(w,1-w)$ for all $w \in [0,1]$. Evaluation of the ADF for rays $w$ close to $0$ and $1$ corresponds to regions where one variable is larger than the other.


The ADF can be viewed as the counterpart of the Pickands' dependence function for asymptotically independent variables, and shares many of its theoretical properties \citep{Wadsworth2013}. Specifically, $\lambda(0) = \lambda(1) = 1$ and $\max(w,1-w) \leq \lambda(w)$, although there is no requirement for $\lambda(w)$ to be convex, or that it is bounded above, unlike $A(t)$. There do, however, exist shape constraints that $\lambda$ must satisfy; see Section \ref{sec:theory} for further details. The ADF can be used to differentiate between different forms of asymptotic independence, with both positive and negative associations captured, alongside complete independence, which implies $\lambda(w) = 1$ for all $w \in [0,1]$. Figure \ref{fig:Sec1_lambdas} illustrates the ADFs for three copulas. We observe a variety in shapes, corresponding to differing degrees of positive extremal dependence in the underlying copulas. The weakest dependence is observed for the inverted logistic copula, while the ADF for the asymptotically dependent logistic copula is equal to the lower bound. 

\begin{figure}[!h]
    \centering
    \includegraphics[width=\textwidth]{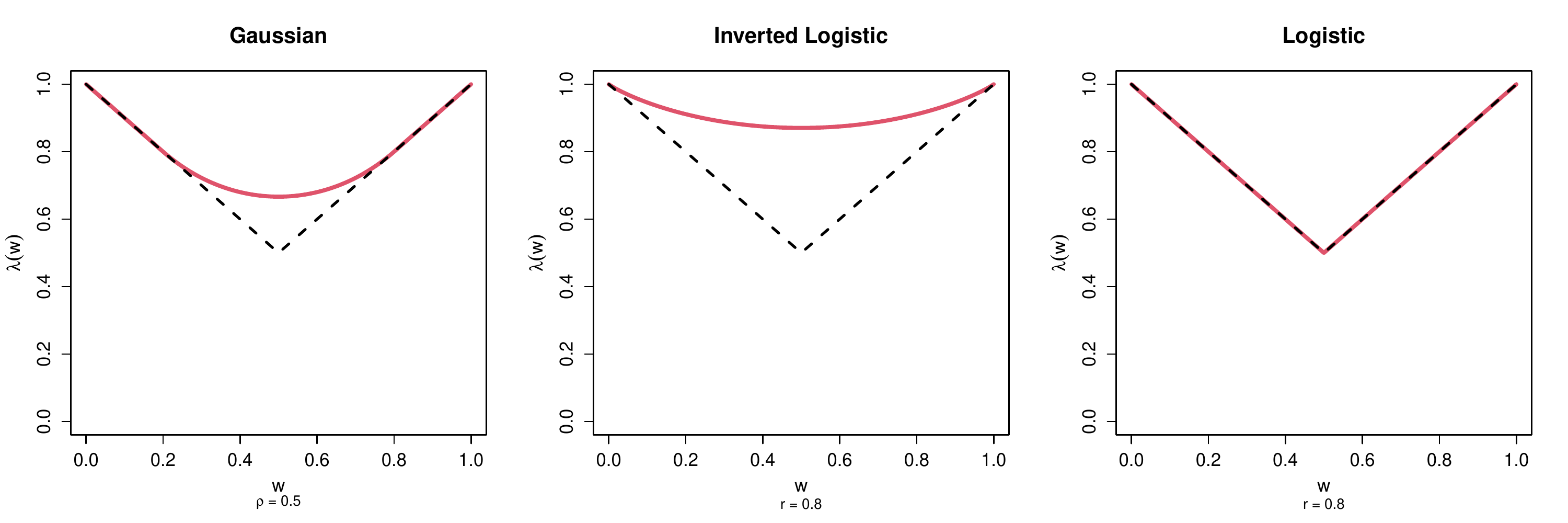}
    \caption{The true ADFs (given in red) for three example copulas. Left: bivariate Gaussian copula with coefficient $\rho = 0.5$. Centre: inverted logistic copula with dependence parameter $r = 0.8$. Right: logistic copula with dependence parameter $r = 0.8$. The lower bound for the ADF is denoted by the black dotted line. }
    \label{fig:Sec1_lambdas}
\end{figure}

Despite these modelling advances, the majority of approaches for quantifying the risk of bivariate extreme events still require bivariate regular variation. Many of the 
procedures that do allow for asymptotic independence use the conditional extremes model of equation \eqref{eqn:heff_tawn} despite some well known limitations of this approach \citep{Liu2014}. 

One particular application of the model described in equation \eqref{eqn:wads_tawn} is the estimation of so-called bivariate return curves, $\RC{p}{} := \left\{ (x,y) \in \RR^2: \Pr(X>x,Y>y) = p \right\}$, which requires knowledge of extremal dependence in regions where either variable is large; see Section \ref{Subsec5.4}. \citet{Murphy-Barltrop2023} obtain estimates of return curves, finding that estimates derived using equation \eqref{eqn:wads_tawn} were preferable to those from the conditional extremes model. \citet{Mhalla2019} and \citet{Murphy-Barltrop2022} also provide non-stationary extensions and inference methods for the ADF.

In this paper, we propose a global methodology for ADF estimation in order to improve extrapolation into the joint upper tail for bivariate random vectors exhibiting asymptotic independence. Until recently, the ADF has been estimated only in a pointwise manner using the Hill estimator \citep{Hill1975} on the tail of $\min\{X/w,Y/(1-w)\}$, resulting in unrealistic rough functional estimates and, as we demonstrate in Section \ref{Sec4}, high degrees of variability. Further, \citet{Murphy-Barltrop2023} showed that pointwise ADF estimates result in non-smooth return curve estimates, which are again unrealistic.

The first smooth ADF estimator was proposed recently in \citet{Simpson2022} based on a theoretical link between a limit set derived from the shape of appropriately scaled sample clouds and the ADF \citep{Nolde2022}. The authors introduce global estimation techniques for the limit set, from which smooth ADF estimates follow; see Section \ref{Sec2} for further details. 

We introduce several novel smooth ADF estimators, and compare their performance with the pointwise Hill estimator, as well as the estimator given in \citet{Simpson2022}. In Section \ref{Sec2}, we review the literature on ADF estimation. In Section \ref{sec:theory}, we introduce new theoretical results that the ADF must satisfy to be valid. In Section \ref{Sec3}, we introduce a range of novel estimators, and select tuning parameters for each proposed estimation technique. In Section \ref{Sec4}, we compare each of the available estimators through a systematic simulation study, finding certain estimators to be favourable over others. A subset of estimators are then applied to river flow data sets in Section \ref{Sec5} and used to obtain estimates of return curves for different combinations of river gauges. We conclude in Section \ref{Sec6} with a discussion. 

\section{Existing techniques for ADF estimation} \label{Sec2}

In this section, we introduce existing estimators for the ADF, with $(X,Y)$ denoting a random vector with standard exponential margins throughout. To begin, for any ray $w \in [0,1]$, define the min-projection at $w$ as $T_w:= \min\{X/w,Y/(1-w)\}$. Equation \eqref{eqn:wads_tawn} implies that for any $w \in [0,1]$ and $t > 0$, 
\begin{equation} \label{eqn:cond_wads_tawn}
    \Pr( T_w > u + t \mid T_w > u) = \frac{L(e^{u+t};w)}{L(e^u;w)} e^{-\lambda(w)t} \to e^{-\lambda(w)t} = t_{*}^{-\lambda(w)}, 
\end{equation}
as $u \to \infty$, with $t_{*} := e^t$. Since the expression in equation \eqref{eqn:cond_wads_tawn} has a univariate regularly varying tail with positive index, \citet{Wadsworth2013} propose using the Hill estimator \citep{Hill1975} to obtain a pointwise estimator of the ADF; we denote this `base' estimator $\hat{\lambda}_{H}$. A major drawback of this technique is that the estimator is pointwise, that is, $\lambda(w)$ is estimated separately for each $w$, leading to rough and often unrealistic estimates of the ADF. In particular, no information is shared across different rays, increasing the variability in the resulting estimates.  Furthermore, this estimator need not satisfy the theoretical constraints on the ADF identified in \citet{Wadsworth2013}, such as the endpoint conditions $\lambda(0) =\lambda(1) = 1$. 

\citet{Simpson2022} recently proposed a novel estimator for the ADF using a theoretical link with the limiting shape of scaled sample clouds. Let $C_n := \{ (X_i,Y_i)/\log n; \; i = 1, \hdots, n\}$ denote $n$ scaled, independent copies of $(X,Y)$. \citet{Nolde2022} explain how, as $n \to \infty$, the asymptotic shape of $C_n$ provides information on the underlying extremal dependence structure. In many situations, $C_n$ converges onto the compact limit set $G = \{(x,y): g(x,y) \leq 1\} \subseteq [0,1]^2$, where $g$ is the gauge function of $G$. A sufficient condition for this convergence to occur is that the joint density, $f$, of $(X,Y)$ exists, and that 
\begin{equation} \label{eqn:gauge}
    -\log f(tx,ty) \sim tg(x,y), \; \; x,y \geq 0, \; t \to \infty, 
\end{equation}
for continuous $g$. Following \citet{Nolde2014}, we also define the unit-level, boundary set $\partial G = \{(x,y): g(x,y) = 1\} \subset [0,1]^2.$ Given fixed margins, the shapes of $G$, and hence $\partial G$, are completely determined by the extremal dependence structure of $(X,Y)$. Furthermore, \citet{Nolde2022} show that the shape of $G$ is also directly linked to the modelling frameworks described in equations \eqref{eqn:led_tawn}, \eqref{eqn:heff_tawn} and \eqref{eqn:wads_tawn}, as well as the approach of \citet{Simpson2020}. In particular, letting $R_w := (w/\max(w,1-w),\infty] \times ((1-w)/\max(w,1-w),\infty]$ for all $w \in [0,1]$, we have that 
\begin{equation} \label{eqn:gauge_lambda}
    \lambda(w) = \max(w,1-w) \times r_w^{-1},
\end{equation}
where 
\begin{equation*}
    r_w = \min \{ r \in [0,1] : rR_w \cap G = \emptyset \}.
\end{equation*}
The boundary sets $\partial G$ for each of the copulas in Figure \ref{fig:Sec1_lambdas} are given in Figure \ref{fig:Sec2_gauges}, alongside the coordinates $(w/\lambda(w),(1-w)/\lambda(w))$ for all $w \in [0,1]$; these coordinates represent the relationship between $G$ and the ADF via equation \eqref{eqn:gauge_lambda}. One can again observe the variety in shapes. For the asymptotically dependent logistic copula, we have that $(1,1) \in \partial G$; this is true for all asymptotically dependent bivariate random vectors with a limit set. 

\begin{figure}[!h]
    \centering
    \includegraphics[width=\textwidth]{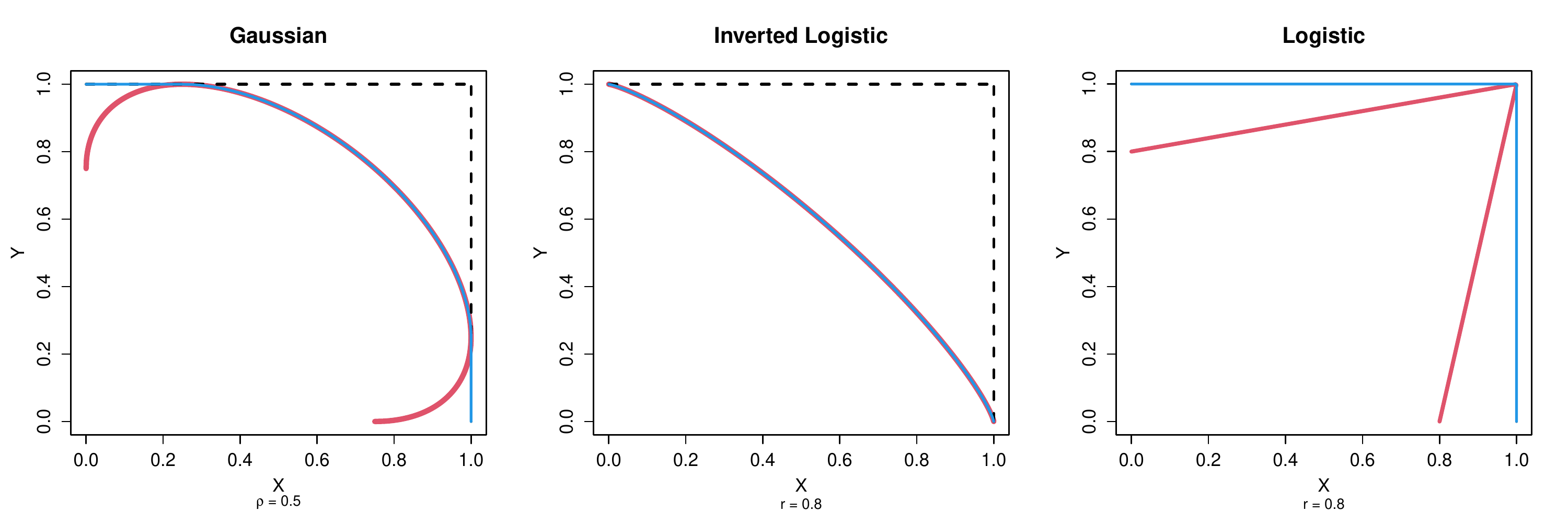}
    \caption{The boundary set $\partial G$ (given in red) for three example copulas, with coordinate limits denoted by the black dotted lines and the blue lines representing the coordinates $(w/\lambda(w),(1-w)/\lambda(w))$ for all $w \in [0,1]$. Left: bivariate Gaussian copula with coefficient $\rho = 0.5$. Centre: inverted logistic copula with dependence parameter $r = 0.8$. Right: logistic copula with dependence parameter $r = 0.8$.}
    \label{fig:Sec2_gauges}
\end{figure}

In practice, both the limit set, $G$, and its boundary, $\partial G$, are unknown. \citet{Simpson2022} propose an estimator for $\partial G$, which is then used to derive an estimator $\hat{\lambda}_{ST}$ for the ADF via equation \eqref{eqn:gauge_lambda}. The resulting estimator $\hat{\lambda}_{ST}$ was shown to outperform $\hat{\lambda}_{H}$ in a wide range of scenarios \citep{Simpson2022}. 

Estimation of $\partial G$ uses an alternative radial-angular decomposition of $(X,Y)$, with $R^*:= X+Y$ and $V^*:=X/(X+Y)$. \citet{Simpson2022} assume the tail of $R^* \mid V^*=v^*$, $v^* \in [0,1]$, follows a generalised Pareto distribution \citep{Davison1990} and then use generalised additive models to capture trends over angles in both the threshold and generalised Pareto distribution scale parameter \citep{Youngman2019}. Next, high quantile estimates from the conditional distributions $R^* \mid V^*=v^*$, $v^* \in [0,1]$ are computed using the fitted generalised Pareto distributions. They are then transformed back to the original scale using $X=R^*V^*$ and $Y = R^*(1-V^*)$ and finally scaled onto the set $[0,1]^2$ to give an estimate of $\partial G$; see \citet{Simpson2022} for further details.  

\citet{Wadsworth2023} also provide methodology for estimation of $\partial G$, though their focus is on estimation of tail probabilities more generally, including in dimensions greater than two. Furthermore, their approach requires prior selection of a parametric form for $g$. We therefore restrict our attention to the work of \citet{Simpson2022} as their main focus is semi-parametric estimation for $\partial G$ in two dimensions.

When applying the estimators $\hat{\lambda}_{H}$ and $\hat{\lambda}_{ST}$ in Section \ref{Sec4} and \ref{Sec5}, we use the tuning parameters suggested in the original approaches. In the case of $\hat{\lambda}_H$, we set $u$ to be the empirical 90\% quantile of $T_w$. The default tuning parameters for $\hat{\lambda}_{ST}$ can be found in \citet{Simpson2022}, and example estimates of the set $\partial G$ obtained using the suggested parameters are given in the Supplementary Material. For calculating this estimator, we used the code available at \url{https://github.com/essimpson/self-consistent-inference}.

\section{Novel theoretical results pertaining to the ADF} \label{sec:theory}
In this section, we outline some new theoretical results on the ADF. These results impose shape constraints that the ADF must satisfy, and follow from the properties of $\kappa$ introduced in Section \ref{Sec1}. 

\begin{proposition} \label{prop:kappa_set}
    For any $w_1,w_2 \in [0,1]$ such that $w_1 \leq w_2$, we have 
    \begin{equation*}
        w_1/\lambda(w_1) \leq w_2/\lambda(w_2) \hspace{1em} \text{and} \hspace{1em} (1-w_1)/\lambda(w_1) \geq (1-w_2)/\lambda(w_2). 
    \end{equation*}
\end{proposition}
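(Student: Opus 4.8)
The plan is to reduce each of the ratios $w/\lambda(w)$ and $(1-w)/\lambda(w)$ to a function of a single variable that is manifestly monotone, using only the two properties of $\kappa$ recalled in Section~\ref{Sec1}: homogeneity of order one, and monotonicity (non-decreasing) in each argument. Recall $\lambda(w) = \kappa(w,1-w)$ and $\lambda(w) \ge \max(w,1-w) \ge 1/2$, so all denominators below are strictly positive.

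First I would treat the interior case $0 < w_1 \le w_2 < 1$. Applying homogeneity of $\kappa$ with scaling factor $w$ gives $\kappa(w,1-w) = w\,\kappa\!\left(1,(1-w)/w\right)$, hence $w/\lambda(w) = 1/\kappa\!\left(1,(1-w)/w\right)$. Since $w \mapsto (1-w)/w$ is decreasing on $(0,1)$ and $\kappa$ is non-decreasing in its second argument, $w \mapsto \kappa\!\left(1,(1-w)/w\right)$ is non-increasing, so its reciprocal $w \mapsto w/\lambda(w)$ is non-decreasing; this yields the first inequality. The second inequality follows by the symmetric argument: scaling by $1-w$ gives $(1-w)/\lambda(w) = 1/\kappa\!\left(w/(1-w),1\right)$, and $w \mapsto w/(1-w)$ is increasing while $\kappa$ is non-decreasing in its first argument, so $w \mapsto (1-w)/\lambda(w)$ is non-increasing.

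It then remains to handle the boundary rays, which is why I split the interior case off first: the homogeneity step divides by $w$ (respectively $1-w$), so $w = 0$ and $w = 1$ must be checked directly. If $w_1 = 0$ then $w_1/\lambda(w_1) = 0 \le w_2/\lambda(w_2)$ trivially, and the case $w_2 = 1$ is handled the same way for the second inequality; the remaining mixed endpoint cases are immediate. I do not expect a genuine obstacle here — the only subtlety is the endpoint bookkeeping just described. It is worth closing with the geometric reading of the statement: the coordinates $\left(w/\lambda(w),(1-w)/\lambda(w)\right)$ plotted in Figure~\ref{fig:Sec2_gauges} trace a coordinatewise-monotone path, moving rightward and downward as $w$ increases from $0$ to $1$, consistent with the relationship to $\partial G$ in \eqref{eqn:gauge_lambda}.
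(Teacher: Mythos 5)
Your proof is correct and rests on exactly the same ingredients as the paper's --- homogeneity of order one and coordinatewise monotonicity of $\kappa$ --- merely repackaged as monotonicity of the one-variable maps $w \mapsto \kappa\!\left(1,(1-w)/w\right)$ and $w \mapsto \kappa\!\left(w/(1-w),1\right)$, whereas the paper scales by $t = w_1/w_2$ (resp.\ $t=(1-w_2)/(1-w_1)$) and applies monotonicity twice in a direct chain of inequalities. Your endpoint bookkeeping via the lower bound $\lambda(w) \ge \max(w,1-w)$ is sound and matches the paper's observation that those cases are trivial.
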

\begin{proof}
    For the first statement, the proof is trivial if at least one of $w_1$ or $w_2$ equals $0$. Therefore, without loss of generality, assume that $w_1,w_2>0$. Recall $\lambda(w) = \kappa(w,1-w)$, and that $\kappa$ is homogeneous of order 1 and monotonic in both arguments. Setting $t := w_1/w_2 \in (0,1]$, we have
    \begin{align*}  
        \frac{w_1}{w_2}\lambda(w_2) &= t\kappa(w_2,1-w_2) \\
        &= \kappa(w_1,t(1-w_2)) \\
        &\leq \kappa(w_1,t(1-w_1)) \\
        &\leq \kappa(w_1,1-w_1) \\
        &= \lambda(w_1),
    \end{align*}
    implying $w_1/\lambda(w_1) \leq w_2/\lambda(w_2)$. For the second argument, the proof is again trivial if $w_2 = 1$ or $w_1 = w_2 = 1$, so assume $w_1,w_2 < 1$. Setting $t:= (1-w_2)/(1-w_1) \in (0,1]$, similar reasoning shows that $(1-w_1)/\lambda(w_1) \geq (1-w_2)/\lambda(w_2)$, completing the proof.

\end{proof}


\begin{remark}
    The intuition behind Proposition \ref{prop:kappa_set} comes from considering the relationship between the ADF and the boundary set $\partial G$ described in equation \eqref{eqn:gauge_lambda}. Considering the sets described by the blue lines in Figure \ref{fig:Sec2_gauges}, this implies that the $x$-coordinates ($y$-coordinates) of the sets must be increasing (decreasing) as the ray $w$ increases, as is clear from the figure.   
\end{remark}

Proposition \ref{prop:kappa_set} also implies several interesting properties that the ADF must satisfy.
\begin{corollary} \label{prop:lam_shape}
    Suppose there exists $w^* \in [0,0.5]$, or $w^* \in [0.5,1]$, such that $\lambda(w^*) = \max(w^*,1-w^*)$. Then $\lambda(w) = \max(w,1-w)$ for all $w \in [0,w^*]$, or $w \in [w^*,1]$.
\end{corollary}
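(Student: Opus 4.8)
The plan is to derive Corollary~\ref{prop:lam_shape} directly from Proposition~\ref{prop:kappa_set} by exploiting the fact that the first inequality there pins down the ratio $w/\lambda(w)$ once it attains its extreme admissible value. Suppose first that $w^* \in [0,0.5]$, so that $\max(w^*,1-w^*) = 1-w^*$ and the hypothesis reads $\lambda(w^*) = 1-w^*$. Fix any $w \in [0,w^*]$, so $w \le w^*$. Applying the second inequality of Proposition~\ref{prop:kappa_set} with $w_1 = w$ and $w_2 = w^*$ gives $(1-w)/\lambda(w) \ge (1-w^*)/\lambda(w^*) = (1-w^*)/(1-w^*) = 1$, hence $\lambda(w) \le 1-w$. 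On the other hand, the universal lower bound $\lambda(w) \ge \max(w,1-w) \ge 1-w$ (valid since $w \le w^* \le 0.5$) forces $\lambda(w) = 1-w = \max(w,1-w)$. This handles the case $w^* \in [0,0.5]$ on the interval $[0,w^*]$.

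For the symmetric case $w^* \in [0.5,1]$ I would argue analogously using the \emph{first} inequality of Proposition~\ref{prop:kappa_set}. Here $\max(w^*,1-w^*) = w^*$, so the hypothesis is $\lambda(w^*) = w^*$, i.e.\ $w^*/\lambda(w^*) = 1$. For any $w \in [w^*,1]$ we have $w^* \le w$, and the first inequality with $w_1 = w^*$, $w_2 = w$ yields $w/\lambda(w) \ge w^*/\lambda(w^*) = 1$, so $\lambda(w) \le w$. Combined with $\lambda(w) \ge \max(w,1-w) \ge w$ (since $w \ge w^* \ge 0.5$), this gives $\lambda(w) = w = \max(w,1-w)$ for all $w \in [w^*,1]$. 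A brief remark should note that the two cases overlap consistently at $w^* = 0.5$, where $\lambda(0.5) = 0.5$ forces the lower bound on all of $[0,1]$, recovering complete independence at the midpoint in the extremal sense.

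I do not anticipate a genuine obstacle here: the corollary is essentially a one-line consequence of the monotonicity-of-ratios statement in Proposition~\ref{prop:kappa_set} together with the trivial lower bound $\lambda(w) \ge \max(w,1-w)$ recorded in equation~\eqref{eqn:wads_tawn}. The only point requiring a little care is matching up which of the two inequalities in the proposition to invoke with the correct ordering of the arguments, and checking that $\max(w,1-w)$ collapses to the ``right'' linear piece on the relevant sub-interval ($1-w$ on $[0,w^*]\subseteq[0,0.5]$, and $w$ on $[w^*,1]\subseteq[0.5,1]$); both are immediate. If one wanted to be thorough one could also spell out that the endpoint values $\lambda(0)=\lambda(1)=1$ are consistent with the conclusion, but this is automatic since $\max(0,1)=1$.
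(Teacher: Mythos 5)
Your proof is correct and is essentially identical to the paper's: both cases follow by applying the appropriate inequality of Proposition \ref{prop:kappa_set} to get $\lambda(w) \leq \max(w,1-w)$ on the relevant subinterval and then invoking the universal lower bound. The paper only writes out the $w^* \in [0,0.5]$ case explicitly and leaves the other as symmetric, which you have simply spelled out.
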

\begin{proof}
    Considering the first case $w^* \in [0,0.5]$, Proposition \ref{prop:kappa_set} gives that $(1-w)/\lambda(w) \geq (1-w^*)/\lambda(w^*) = (1-w^*)/(1-w^*) = 1$ for any $w \in [0,w^*]$, implying $\max(w,1-w) = (1-w) \geq \lambda(w)$. Since $\lambda(w) \geq \max(w,1-w)$, we must therefore have $\lambda(w) = \max(w,1-w)$. The same reasoning applies for $w \in [w^*,1]$ when $w^* \geq 0.5$ and $\lambda(w^*) = \max(w^*,1-w^*)$.
\end{proof}
Corollary \ref{prop:lam_shape} states that if the ADF equals the lower bound for any angle in the interval $[0,0.5]$ (or the interval $[0.5,1]$), then it must also equal the lower bound for all angles less (greater) than this angle. This has further implications when we consider the conditional extremes modelling framework described in equation \eqref{eqn:heff_tawn}. Let $a_{y\mid x}$ and $a_{x\mid y}$ be the normalising functions for conditioning on the events $X>u$ and $Y>u$ respectively, and let $\alpha_{y\mid x} := \lim_{u \to \infty}a_{y\mid x}(u)/u$ and $\alpha_{x\mid y} := \lim_{u \to \infty}a_{x\mid y}(u)/u$, with $\alpha_{y\mid x},\alpha_{x\mid y} \in [0,1]$. From \citet{Nolde2022}, we have that $g(1,\alpha_{y \mid x}) = 1$ and $g(\alpha_{x \mid y},1) = 1$, with $g$ defined as in equation \eqref{eqn:gauge}, and $\alpha_{y\mid x}$, $\alpha_{x\mid y}$ are the maximum such values satisfying these equations. Assuming that the values of $\alpha_{y \mid x}$ and $\alpha_{x \mid y}$ are known, we have the following result. 
\begin{corollary} \label{prop:cond_ext_lam}
    For all $w \in [0,\alpha^*_{x \mid y}] \bigcup [\alpha^*_{y \mid x},1]$, with $\alpha^*_{x \mid y}:=\alpha_{x \mid y}/(1+\alpha_{x \mid y})$ and $\alpha^*_{y \mid x}:=1/(1+\alpha_{y \mid x})$, we have $\lambda(w) = \max(w,1-w)$.  
\end{corollary}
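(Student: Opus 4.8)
The plan is to reduce the whole statement to two single-angle evaluations and then let Corollary \ref{prop:lam_shape} do the propagation. Since that corollary already spreads the lower bound from any angle outward toward $0$ (or toward $1$), it suffices to prove that $\lambda$ attains its lower bound exactly at the two endpoints $w = \alpha^*_{x\mid y}$ and $w = \alpha^*_{y\mid x}$. First I would record that $\alpha_{x\mid y},\alpha_{y\mid x} \in [0,1]$ forces $\alpha^*_{x\mid y} = \alpha_{x\mid y}/(1+\alpha_{x\mid y}) \in [0,1/2]$ and $\alpha^*_{y\mid x} = 1/(1+\alpha_{y\mid x}) \in [1/2,1]$, so the two endpoints lie in the two halves of $[0,1]$ that Corollary \ref{prop:lam_shape} requires.

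The geometric heart of the argument is the identity \eqref{eqn:gauge_lambda}, which expresses $\lambda(w) = \max(w,1-w)\,r_w^{-1}$ through the region $R_w$ and the limit set $G$. For $w \leq 1/2$ one has $\max(w,1-w) = 1-w$, so the corner of $R_w$ sits at $(w/(1-w),\,1)$; evaluating at $w = \alpha^*_{x\mid y}$ gives $w/(1-w) = \alpha_{x\mid y}$, i.e.\ the corner is precisely $(\alpha_{x\mid y},1)$. Symmetrically, for $w \geq 1/2$ the corner of $R_w$ is $(1,\,(1-w)/w)$, and at $w = \alpha^*_{y\mid x}$ this equals $(1,\alpha_{y\mid x})$. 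Because $g(\alpha_{x\mid y},1) = 1$ and $g(1,\alpha_{y\mid x}) = 1$, both corner points lie on $\partial G \subseteq G$.

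Next I would show that whenever the corner of $R_w$ already lies on $\partial G$, the scaling constant satisfies $r_w = 1$, whence $\lambda(w) = \max(w,1-w)$. Take $w = \alpha^*_{x\mid y}$ with corner $(\alpha_{x\mid y},1)$. On the one hand, $R_w$ demands a second coordinate strictly exceeding $1$, while $G \subseteq [0,1]^2$ forces every point of $G$ to have second coordinate at most $1$; hence $R_w \cap G = \emptyset$, so $r_w \leq 1$. On the other hand, for any $r < 1$ the point $(\alpha_{x\mid y},1) \in \partial G \subseteq G$ satisfies $\alpha_{x\mid y} > r\alpha_{x\mid y}$ and $1 > r$, so it lies in the (left/bottom-open) region $rR_w$; hence $rR_w \cap G \neq \emptyset$ for every $r<1$. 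Together these give $r_w = 1$. The identical argument with the coordinates interchanged, using the containment $G \subseteq [0,1]^2$ on the first coordinate and the point $(1,\alpha_{y\mid x})$, yields $r_w = 1$ at $w = \alpha^*_{y\mid x}$.

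Finally, applying Corollary \ref{prop:lam_shape} with $w^* = \alpha^*_{x\mid y} \in [0,1/2]$ extends the lower bound to all of $[0,\alpha^*_{x\mid y}]$, and with $w^* = \alpha^*_{y\mid x} \in [1/2,1]$ to all of $[\alpha^*_{y\mid x},1]$; the union is exactly the claimed set. I expect the main obstacle to be bookkeeping rather than conceptual: the half-open structure of $R_w$ must be tracked carefully so that the corner point lies in $rR_w$ for $r<1$ yet outside $R_w$ at $r=1$, and the degenerate cases $\alpha_{x\mid y} = 0$ or $\alpha_{y\mid x} = 0$ (where an endpoint collapses to $w=0$ or $w=1$ and $\lambda = 1$ holds trivially) should be dispatched separately. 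Notably the homogeneity and monotonicity of $\kappa$ play no direct role here; all the work sits in the identity \eqref{eqn:gauge_lambda} together with the containment $G \subseteq [0,1]^2$.
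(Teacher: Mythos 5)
Your proposal is correct and follows essentially the same route as the paper: identify the corner of $R_w$ at $w=\alpha^*_{x\mid y}$ and $w=\alpha^*_{y\mid x}$ with the boundary points $(\alpha_{x\mid y},1)$ and $(1,\alpha_{y\mid x})$ of $\partial G$, conclude $r_w=1$ via equation \eqref{eqn:gauge_lambda}, and propagate the lower bound using Corollary \ref{prop:lam_shape}. If anything, your justification that $r_w=1$ (using $G\subseteq[0,1]^2$ for one direction and the half-open structure of $R_w$ for the other) is more careful than the paper's one-line assertion.
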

\begin{proof}
    To begin, consider the ray $\alpha^*_{x \mid y} \in [0,0.5]$ and observe that $(\alpha_{x \mid y},1) \in \partial G$. From this, one can see that $(\alpha_{x \mid y},1) \in R_{\alpha^*_{x \mid y}}$. Equation \eqref{eqn:gauge_lambda} therefore implies that $r_{\alpha^*_{x \mid y}} = 1$, and hence $\lambda(\alpha^*_{x \mid y}) = \max(\alpha^*_{x \mid y},1-\alpha^*_{x \mid y})$. From Corollary \ref{prop:lam_shape}, it follows that $\lambda(w) = \max(w,1-w)$ for all $w \in [0,\alpha^*_{x \mid y}]$. Considering the ray $\alpha^*_{y \mid x} \in [0.5,1]$ in an analogous manner, the result follows.  
\end{proof}
Corollaries \ref{prop:lam_shape} and \ref{prop:cond_ext_lam} are illustrated in Figure \ref{fig:gauge_lambda_result} for a Gaussian copula with $\rho=0.5$. Here, $\alpha_{x \mid y}=\alpha_{y \mid x} = 0.25$, implying $\lambda(w) = \max(w,1-w)$ for all $w \in [0,0.2] \bigcup [0.8,1]$; these rays correspond to the blue lines in the figure. One can observe that for any region $R_w$ defined along either of the blue lines (such as the shaded regions illustrated for $w = 0.1$ and $w = 0.9$), we have that $r_w = 1$, since these regions will intersect $\partial G$ at either the coordinates $(0.25,1)$ or $(1,0.25)$.   

\begin{figure}[!h]
    \centering
    \includegraphics[width=.6\textwidth]{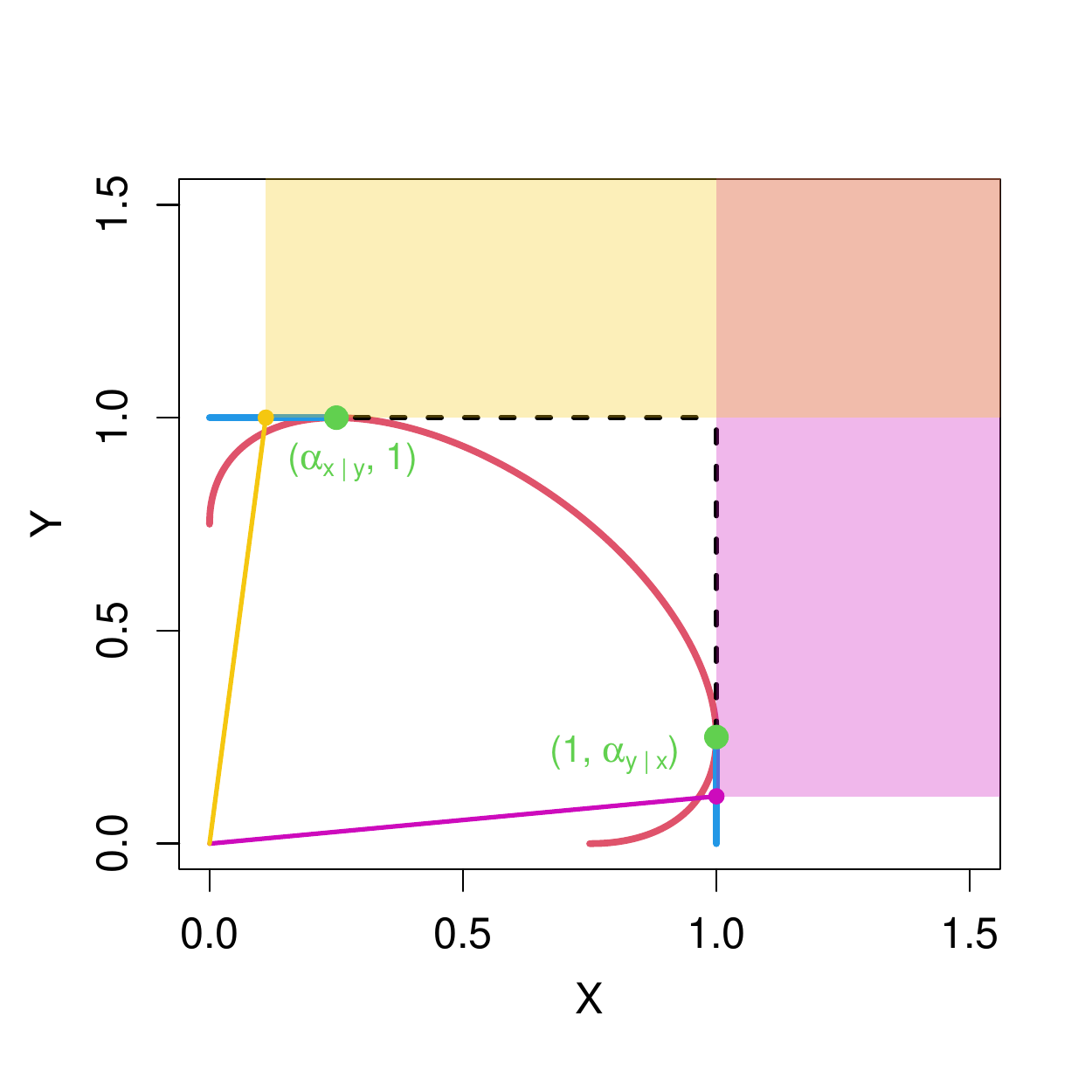}
    \caption{Pictorial illustration of the results described in Corollaries \ref{prop:lam_shape} and \ref{prop:cond_ext_lam}. The boundary set $\partial G$, given in red, is from the bivariate Gaussian copula with $\rho=0.5$, with the points $(1,\alpha_{y \mid x})$ and $(\alpha_{x \mid y},1)$ given in green. The blue lines represent the rays $w \in [0,\alpha^*_{x \mid y}] \bigcup [\alpha^*_{y \mid x},1]$, while the yellow and pink shaded regions represent the set $R_w$ for $w = 0.1$ and $w=0.9$, respectively. }
    \label{fig:gauge_lambda_result}
\end{figure}

Finally, Proposition \ref{prop:kappa_set} also implies constraints on the derivative of the ADF, where this exists. 

\begin{corollary} \label{corol:deriv}
    Let $\lambda'(w)$ denote the derivative of the ADF at $w$, where it exists. For all $w \in (0,1)$, we have that 
    \begin{equation*}
        -\lambda(w)/(1-w) \leq \lambda'(w) \leq \lambda(w)/w.   
    \end{equation*}
\end{corollary}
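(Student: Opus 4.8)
The plan is to read the result off directly from the monotonicity statements in Proposition~\ref{prop:kappa_set}. Set $g_1(w) := w/\lambda(w)$ and $g_2(w) := (1-w)/\lambda(w)$ on $[0,1]$; Proposition~\ref{prop:kappa_set} says precisely that $g_1$ is non-decreasing and $g_2$ is non-increasing. First I would observe that $\lambda(w) \geq \max(w,1-w) \geq 1/2 > 0$ for every $w \in (0,1)$, so both $g_1$ and $g_2$ are well defined on $(0,1)$ and, at any interior point $w$ at which $\lambda'(w)$ exists, they are differentiable there by the quotient rule.

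Next I would invoke the elementary fact that a non-decreasing (resp.\ non-increasing) function has non-negative (resp.\ non-positive) derivative wherever that derivative exists: each difference quotient has numerator and denominator of the same sign, so it is non-negative (resp.\ non-positive), and the limit inherits the inequality. Applying this to $g_1$ gives
\[
g_1'(w) = \frac{\lambda(w) - w\,\lambda'(w)}{\lambda(w)^2} \geq 0,
\]
and since $\lambda(w)^2 > 0$ and $w > 0$ this rearranges to $\lambda'(w) \leq \lambda(w)/w$. Applying the same fact to $g_2$ gives
\[
g_2'(w) = \frac{-\lambda(w) - (1-w)\,\lambda'(w)}{\lambda(w)^2} \leq 0,
\]
which, since $1-w > 0$, rearranges to $\lambda'(w) \geq -\lambda(w)/(1-w)$. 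Combining the two bounds gives the claim.

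There is no genuinely hard step here; the only points needing care are that $\lambda$ is bounded away from $0$ on $(0,1)$ (so differentiability of $\lambda$ transfers to $g_1$ and $g_2$) and the sign of the derivative of a monotone function at a point of differentiability, both of which are routine. I would also note in passing that the bounds are sharp: on a maximal interval where $\lambda(w) = \max(w,1-w)$, which by Corollary~\ref{prop:lam_shape} must contain an endpoint of $[0,1]$, one has $\lambda'(w) = -1 = -\lambda(w)/(1-w)$ on the portion with $w < 1/2$ and $\lambda'(w) = 1 = \lambda(w)/w$ on the portion with $w > 1/2$, so neither inequality can be improved in general.
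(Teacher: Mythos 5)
Your proposal is correct and rests on exactly the same foundation as the paper's proof, namely the monotonicity of $w/\lambda(w)$ and $(1-w)/\lambda(w)$ from Proposition~\ref{prop:kappa_set}; the only cosmetic difference is that you apply the quotient rule to these monotone functions and use the sign of their derivatives, whereas the paper rearranges the two one-sided difference-quotient inequalities for $\lambda$ directly and passes to the limit. The added checks (that $\lambda$ is bounded away from zero so differentiability transfers, and the sharpness remark) are correct and harmless.
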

\begin{proof}
    Given any $w \in (0,1)$ and $\epsilon > 0$ such that $\epsilon \leq \min\{w, 1-w\}$, Proposition \ref{prop:kappa_set} implies that $w/\lambda(w) \leq (w+\epsilon)/\lambda(w+\epsilon)$ and $(w-\epsilon)/\lambda(w-\epsilon) \leq w/\lambda(w)$, which can be rearranged to give
    \begin{equation*}
        \frac{\lambda(w+\epsilon) - \lambda(w)}{\epsilon} \leq \lambda(w)/w, \quad \frac{\lambda(w) - \lambda(w-\epsilon)}{\epsilon} \leq \lambda(w)/w.
    \end{equation*} 
    Taking the limits as $\epsilon \to 0^+$, we obtain $\lambda'(w) \leq \lambda(w)/w$. By rearranging the equations $(1-w)/\lambda(w) \geq (1 - (w+\epsilon))/\lambda(w+\epsilon)$ and $(1 - (w-\epsilon))/\lambda(w-\epsilon)\geq (1-w)/\lambda(w)$ and taking limits in a similar manner, the result follows.  
    
\end{proof}

\begin{remark}
    Note that Corollary \ref{corol:deriv} corresponds to the same derivative constraints originally derived in \citet{Wadsworth2013}, albeit with a different proof. 
\end{remark}


The results introduced in this section provide necessary conditions that the ADF must satisfy. Furthermore, each of the introduced corollaries follow directly from Proposition \ref{prop:lam_shape}; therefore, if an estimator satisfies the conditions of this proposition, it will automatically satisfy the remaining conditions. We therefore incorporate Proposition \ref{prop:lam_shape} into our estimation framework for $\lambda$; see Section \ref{Sec3} for further details. We remark that only one of the existing approaches for estimating the ADF \citep{Murphy-Barltrop2022} has considered such constraints, although they will automatically be satisfied by ADF estimators derived from valid limit set estimates.



\section{Novel estimators for the ADF} \label{Sec3}
Motivated by the goal of global estimation, we propose a range of novel estimators for the ADF. We recall that the ADF and Pickands' dependence function exhibit several theoretical similarities, as listed in Section \ref{Sec1}, and arise as exponential rate parameters for suitable constructions of structure variables \citep{Mhalla2019a}. We therefore begin by reviewing estimation of the Pickands' dependence function.

Because smooth functional estimation for the ADF is desirable, we restrict our review to approaches for the Pickands' dependence function which achieve this: notably spline-based techniques \citep{Hall2000,Cormier2014} and techniques that utilise the family of Bernstein–B\'ezier polynomials \citep{Guillotte2016,Marcon2016,Marcon2017a}. In this paper, we focus on to the latter category, since spline-based techniques typically result in more complex formulations and a larger number of tuning parameters. Moreover, approaches based on Bernstein–B\'ezier polynomials have been shown to improve estimator performance across a wide range of copula examples \citep{Vettori2018}. For estimation of Pickands' dependence function, the following family of functions is considered
\begin{equation*}
    \mathcal{B}_k = \left\{ \sum_{i=0}^k \beta_i \binom{k}{i} w^i(1-w)^{k-i} \; \Bigg\vert \; \pmb{\beta} \in [0,1]^{k+1}, \; w \in [0,1] \right\},
\end{equation*}
where $k \in \NN$ denotes the polynomial degree. Note that $\mathcal{B}_k$ is a sub-family from the class of Bernstein–B\'ezier polynomials. Many approaches assume that the Pickands' dependence function $A \in \mathcal{B}_k$ and propose techniques for estimating the coefficient vector $\pmb{\beta}$, resulting in an estimator $\hat{\pmb{\beta}}$. This automatically ensures $A(t) \leq 1$ for all $t \in [0,1]$, thereby satisfying the theoretical upper bound of the Pickands' dependence function. 


We make a similar assumption about the ADF, and use this to propose novel estimators. However, unlike the Pickands' dependence function, the ADF is unbounded from above, meaning functions in $\mathcal{B}_k$ cannot represent all forms of extremal dependence captured by equation \eqref{eqn:wads_tawn}. Moreover, the endpoint conditions $\lambda(0) = \lambda(1) = 1$ are not necessarily satisfied by functions in $\mathcal{B}_k$. We therefore propose an alternative family of polynomials: given $k \in \NN$, let 
\begin{align} %
    \begin{split} \label{eqn:new_BP}
        \mathcal{B}^*_k = \Bigg\{ (1-w)^k + \sum_{i=1}^{k-1} \beta_i \binom{k}{i} &w^i(1-w)^{k-i} + w^k =: f(w) \; \Bigg\vert \; w \in [0,1], \pmb{\beta} \in [0,\infty)^{k-1} \Bigg\}.   
    \end{split}
\end{align}
Functions in this family are unbounded from above, and $f(0) = f(1) = 1$ for all $f \in \mathcal{B}^*_k$. Note that functions in $\mathcal{B}^*_k$ are not required to satisfy the lower bound of the ADF; this bound is instead imposed in a post-processing procedure, as detailed in Section \ref{Subsec:theory_impose}.


For the remainder of this section, let $\lambda(\cdot ; \; \pmb{\beta}) \in \mathcal{B}^*_k$ represent a form of the ADF from $\mathcal{B}^*_k$. Interest now lies in estimating the coefficient vector $\pmb{\beta}$, which requires choice of the degree $k \in \NN$. This is a trade-off between flexibility and computational complexity; polynomials with small values of $k$ may not be flexible enough to capture all extremal dependence structures, resulting in bias, while high values of $k$ will increase computational burden and parameter variance. 

\subsection{Composite likelihood approach} \label{Subsec3.1}
One consequence of equation \eqref{eqn:cond_wads_tawn} is that, for all $w \in [0,1]$, the conditional variable $T^*_w := (T_w - u_w \mid T_w > u_w) \sim$ Exp($\lambda(w)$), approximately, for large $u_w$. The density of this variable is $f_{T^*_w}(t^*_w) \approx \lambda(w)e^{-\lambda(w)t^*_w}, t^*_w>0$, resulting in a likelihood function for min-projection exceedances of $u_w$. Let $(\mathbf{x},\mathbf{y}) := \{ (x_i,y_i): \;i = 1, \hdots, n\}$ denote $n$ independent observations from the joint distribution of $(X,Y)$. For each $w \in \mathcal{W}$, where $\mathcal{W}$ denotes some finite subset spanning the interval $[0,1]$, let $\mathbf{t}_w := \{ \min(x_i/w,y_i/(1-w)): \; i = 1, \hdots, n\}$ and take $u_w$ to be the empirical $q$ quantile of $\mathbf{t}_w$, with $q$ close to $1$ and fixed across $w$. Letting $\mathbf{t}^*_w := \{ t_w - u_w \mid  t_w \in \mathbf{t}_w, t_w > u_w \}$, we have a set of realisations from the conditional variable $T^*_w$. 

One approach to obtain an estimate of $\lambda(w)$ while considering all $w \in \mathcal{W}$ simultaneously is to use a composite likelihood, in which multiple likelihood components are treated as independent whether or not they are independent. Provided each component is a valid density function, the resulting likelihood function provides unbiased parameter estimates under the true model; see \citet{Varin2011} for further details. For this model, the likelihood function is
\begin{equation} \label{eqn:cl_function}
    \mathcal{L}_C(\pmb{\beta}) = \prod_{w \in \mathcal{W}} \prod_{t^*_w \in \mathbf{t}^*_w} \lambda(w ; \; \pmb{\beta}) e^{-\lambda(w ; \; \pmb{\beta}) t^*_w} = \left[\prod_{w \in \mathcal{W}}\lambda(w;\; \pmb{\beta})^{\lvert \mathbf{t}^*_w \rvert}\right] \times e^{-\sum_{w \in \mathcal{W}}\sum_{t^*_w \in \mathbf{t}^*_w}\lambda(w;\; \pmb{\beta})t_w},
\end{equation}
where $\lvert \mathbf{t}^*_w \rvert$ denotes the cardinality of the set $\mathbf{t}^*_w$. This composite likelihood function has equal weights across all $w \in \mathcal{W}$ (the `components'). An estimator of the ADF, $\hat{\lambda}_{CL}$, is $\lambda(\cdot; \; \hat{\pmb{\beta}}_{CL})$, where $\hat{\pmb{\beta}}_{CL} := \argmax_{\pmb{\beta} \in [0,\infty)^{k-1}} \mathcal{L}_C(\pmb{\beta})$. 

To apply this method in practice, one must first specify a set $\mathcal{W}$ and a probability $q$. Given some large, odd-valued $m \in \NN$, we let $\mathcal{W} := \{ i/(m-1) : i = 0,1,\cdots,m-1 \}$; this corresponds to a set of equally spaced rays $\mathcal{W}$ spanning the interval $[0,1]$, with $\{0,0.5,1 \} \subset \mathcal{W}$. Selection of $m$ and $q$ are discussed in Section \ref{Subsec3.4}. The former is akin to selecting the degree of smoothing, while the latter is analogous to selecting a threshold for the generalised Pareto distribution defined in Section \ref{Sec2} in the univariate setting. 

\subsection{Probability ratio approach}\label{Subsec3.2}
With $\mathcal{W}$ and $\mathbf{t}_w$ defined as in Section \ref{Subsec3.1}, consider two probabilities $q<p<1$, both close to one. Given any $w \in \mathcal{W}$, let $u_w$ and $v_w$ denote the $q$ and $p$ empirical quantiles of $\mathbf{t}_w$, respectively. Assuming the distribution function of $T_w$ is strictly monotonic, equation \eqref{eqn:cond_wads_tawn} implies that 
\begin{equation} \label{eqn:quant_ratio}
    \frac{1-p}{1-q} = \Pr( T_w > v_w \mid T_w > u_w) \approx e^{-\lambda(w)(v_w - u_w)} \Rightarrow \left\vert \frac{1-p}{1-q} - e^{-\lambda(w)(v_w - u_w)} \right\vert \approx 0,   
\end{equation}
Similarly to \citet{Murphy-Barltrop2022}, we exploit equation \eqref{eqn:quant_ratio} to obtain an estimator for the ADF. Firstly, we observe that this representation holds for all $w \in \mathcal{W}$, hence 
\begin{equation*}
    \sum_{w \in \mathcal{W}}\left\vert \frac{1-p}{1-q} - e^{-\lambda(w)(v_w - u_w)} \right\vert \approx 0. 
\end{equation*}
To ensure equation  \eqref{eqn:quant_ratio} holds requires careful selection of $q$ and $p$. This selection also represents a bias-variance trade off: probabilities too small (big) will induce bias (high variability). Moreover, owing to the different rates of convergence to the limiting ADF, a single pair $(q,p)$ is unlikely to be appropriate across all data structures. We instead consider a range of probability pairs simultaneously. Specifically, letting $\{(q_{j},p_{j}) \mid q_{j} < p_{j} < 1, 1 \leq j \leq h\}$, $h \in \NN$, be pairs of probabilities near one, consider the expression
\begin{equation*}
    S(\pmb{\beta}) := \sum_{w \in \mathcal{W}}\sum_{j=1}^h \left\vert   \left[\frac{1-p_j}{1-q_j} \right] - e^{-\lambda(w;\; \pmb{\beta})(v_{w,j} - u_{w,j})} \right\vert, 
\end{equation*}
in which $u_{w,j}$ and $v_{w,j}$ denote $q_j$ and $p_j$ empirical quantiles of $\mathbf{t}_w$, respectively, for each $j = 1, \hdots, h$. Since minimising $S(\pmb{\beta})$ should provide a reasonable estimate  of $\lambda$ over all $\mathcal{W}$, we set $\hat{\pmb{\beta}}_{PR} = \argmin_{\pmb{\beta} \in [0,\infty)^{k-1}} S(\pmb{\beta})$ and denote by $\hat{\lambda}_{PR}$ the estimator $\lambda(\cdot;\; \hat{\pmb{\beta}}_{PR})$. Similarly to $\hat{\lambda}_{CL}$, one must select the sets $\mathcal{W}$ and $\{(q_{j},p_{j}) \mid q_{j} < p_{j} < 1, 1 \leq j \leq h\}$, $h \in \NN$ prior to applying this estimator; see Section \ref{Subsec3.4}. 

\subsection{Incorporating knowledge of conditional extremes parameters}\label{Subsec3.3}
Assuming we know the conditional extremes parameters $\alpha_{y\mid x},\alpha_{x\mid y}$, Corollary \ref{prop:cond_ext_lam} implies that for all $w \in [0,\alpha^*_{x \mid y}] \bigcup [\alpha^*_{y \mid x},1]$, with $\alpha^*_{x \mid y}=\alpha_{x \mid y}/(1+\alpha_{x \mid y})$ and $\alpha^*_{y \mid x}=1/(1+\alpha_{y \mid x})$, $\lambda(w) = \max(w,1-w)$.  In this section, we exploit this result to improve estimation of the ADF.

In practice, $\alpha^*_{x \mid y}$ and $\alpha^*_{y \mid x}$ are unknown; however, estimates $\hat{\alpha}_{y \mid x}$ and $\hat{\alpha}_{x \mid y}$ are commonly obtained using a likelihood function based on a misspecified model for the distribution $D$ in equation \eqref{eqn:heff_tawn} \citep[e.g., ][]{Jonathan2014}. The resulting estimates, denoted $\hat{\alpha}^*_{x \mid y}$, $\hat{\alpha}^*_{y \mid x}$, can be used to approximate the ADF for $w \in [0,\hat{\alpha}^*_{x \mid y}) \bigcup (\hat{\alpha}^*_{y \mid x},1]$. What now remains is to combine this with an estimator for $\lambda(w)$ on $ [\hat{\alpha}^*_{x \mid y},\hat{\alpha}^*_{y \mid x}]$. 

A crude way to obtain an estimator via this framework would be to set $\lambda(w) = \max(w,1-w)$ for $w \in [0,\hat{\alpha}^*_{x \mid y}) \bigcup (\hat{\alpha}^*_{y \mid x},1]$ and $\lambda(w) = \hat{\lambda}_{H}(w)$, $\hat{\lambda}_{CL}(w)$ or $\hat{\lambda}_{PR}(w)$ for $w \in [\hat{\alpha}^*_{x \mid y},\hat{\alpha}^*_{y \mid x}]$. However, this results in discontinuities at $\hat{\alpha}^*_{x \mid y}$ and $\hat{\alpha}^*_{y \mid x}$. Instead, for the smooth estimators, we rescale $\mathcal{B}_k^*$ such that the resulting ADF estimate is continuous for all $w \in [0,1]$. Consider the set of polynomials
\begin{align} %
    \begin{split} \label{eqn:heff_tawn_BP}
         \Tilde{\mathcal{B}}_k = \Bigg\{  (1-\hat{\alpha}^*_{x \mid y}) &\left(1- \frac{v-\hat{\alpha}^*_{x \mid y}}{\hat{\alpha}^*_{y \mid x}-\hat{\alpha}^*_{x \mid y}} \right)^k + \sum_{i=1}^{k-1} \beta_i \binom{k}{i} \left(\frac{v-\hat{\alpha}^*_{x \mid y}}{\hat{\alpha}^*_{y \mid x}-\hat{\alpha}^*_{x \mid y}}\right)^i\left(1-\frac{v-\hat{\alpha}^*_{x \mid y}}{\hat{\alpha}^*_{y \mid x}-\hat{\alpha}^*_{x \mid y}}\right)^{k-i} + \\
        & \hat{\alpha}^*_{y \mid x}\left(\frac{v-\hat{\alpha}^*_{x \mid y}}{\hat{\alpha}^*_{y \mid x}-\hat{\alpha}^*_{x \mid y}}\right)^k =:f(v) \; \Bigg\vert \; v \in [\hat{\alpha}^*_{x \mid y},\hat{\alpha}^*_{y \mid x}], \; \pmb{\beta} \in [0,\infty)^{k-1} \Bigg\}.   
    \end{split}
\end{align}
For all $f \in \Tilde{\mathcal{B}}_k$, we have that $f(\hat{\alpha}^*_{x \mid y}) = (1-\hat{\alpha}^*_{x \mid y})$ and $f(\hat{\alpha}^*_{y \mid x}) = \hat{\alpha}^*_{y \mid x}$, and each $f$ is only defined on the interval $[\hat{\alpha}^*_{x \mid y},\hat{\alpha}^*_{y \mid x}]$. Letting $\Tilde{\lambda}(\cdot \; ; \pmb{\beta}) \in \Tilde{\mathcal{B}}_k$ represent a form of the ADF for $w \in [\hat{\alpha}^*_{x \mid y},\hat{\alpha}^*_{y \mid x}]$, the techniques introduced in Sections \ref{Subsec3.1} and \ref{Subsec3.2} can be used to obtain estimates of the coefficient vectors, which we denote $\hat{\pmb{\beta}}_{CL2}$ and $\hat{\pmb{\beta}}_{PR2}$, respectively. The resulting estimators for $\lambda$ are
\begin{equation*}
    \hat{\lambda}_{CL2}(w) = \begin{cases}
        \Tilde{\lambda}(w;\hat{\pmb{\beta}}_{CL2}) \; & \text{for} \; w \in [\hat{\alpha}^*_{x \mid y},\hat{\alpha}^*_{y \mid x}], \\
        \max(w,1-w) \; & \text{for} \; w \in [0,\hat{\alpha}^*_{x \mid y}) \bigcup (\hat{\alpha}^*_{y \mid x},1],
    \end{cases} 
\end{equation*}
with $\hat{\lambda}_{PR2}$ defined analogously. We lastly define the discontinuous estimator $\hat{\lambda}_{H2}$ as
\begin{equation*}
    \hat{\lambda}_{H2} := \begin{cases}
        \hat{\lambda}_{H}(w) \; & \text{for} \; w \in [\hat{\alpha}^*_{x \mid y},\hat{\alpha}^*_{y \mid x}], \\
        \max(w,1-w) \; & \text{for} \; w \in [0,\hat{\alpha}^*_{x \mid y}) \bigcup (\hat{\alpha}^*_{y \mid x},1].
    \end{cases} 
\end{equation*}
This is obtained by combining the pointwise Hill estimator with the information provided by the estimates $\hat{\alpha}^*_{x \mid y},\hat{\alpha}^*_{y \mid x}$. Illustrations of all the estimators discussed in this section, as well as in Section \ref{Sec2}, can be found in the Supplementary Material. 

We note that estimation of the parameters, $\alpha_{x \mid y},\alpha_{y \mid x},$ is subject to uncertainty and will not give a perfect representation of $\alpha^*_{x \mid y}$ and $\alpha^*_{y \mid x}$. However, we believe it is worth exploring the quality of the resulting estimates, and find in Section \ref{Sec4} that they generally provide improvement over estimators that do not exploit this link. We also note the potential for independent estimates of $\lambda$ to be used for improving estimation of $\alpha_{x \mid y}$ and $\alpha_{y \mid x}$ in the conditional extremes model, but leave exploration of this to future work.

\subsection{Incorporating theoretical properties} \label{Subsec:theory_impose}
All estimators introduced so far are not required to satisfy the property of $\lambda$ introduced in Proposition \ref{prop:lam_shape}, or the lower bound on the ADF discussed in Section \ref{Sec1}. Furthermore, the estimator $\Hat{\lambda}_{H}$ is also not guaranteed to satisfy the endpoint conditions, i.e., $\lambda(0) = \lambda(1) = 1$.

There are several techniques one could use to impose these properties. For instance, one could incorporate penalty terms to objective functions $\mathcal{L}_C(\pmb{\beta})$ or $S(\pmb{\beta})$ to penalise for cases when the conditions of Proposition \ref{prop:lam_shape}, or the ADF lower bound, are not satisfied. Alternatively, one could also impose the theoretical properties of the ADF in post-processing steps; such a procedure can be applied regardless of whether the original estimator is smooth or local. In unreported results, we considered penalising the objective function $\mathcal{L}_C(\pmb{\beta})$ for violations of ADF properties, with larger penalties for greater violations. This approach allows for simpler optimisation of the objective function than imposing a very strong penalty for any violation, but introduces the choice of a penalty parameter and does not guarantee the properties are fully satisfied. We therefore opt instead to apply a post-processing procedure to each of the ADF estimators; this has the added advantage of also being applicable to the pointwise Hill estimators.


For any estimator $\hat{\lambda}_{-}$, assume that the set $\{ w \in [0,1] \mid \hat{\lambda}_{-}(w) < \max(w,1-w)\}$ is non-empty. To ensure the ADF is bounded from below, and satisfies the endpoint conditions $\lambda(0) = \lambda(1) = 1$, we set 
\begin{equation*}
    \hat{\lambda}_{-}(w) = \begin{cases}
        \max \left\{ \hat{\lambda}_{-}(w), \max(w,1-w) \right\} \; & \text{for} \; w \in (0,1), \\
        1 \; & \text{for} \; w \in \{0,1\}.
    \end{cases} 
\end{equation*}
Next, we ensure the conditions outlined in Proposition 
\ref{prop:lam_shape} are satisfied. Define the angular sets $\mathcal{W}^{\leq 0.5} = (w^{\leq 0.5}_1,w^{\leq 0.5}_2,\hdots,w^{\leq 0.5}_{(m-1)/2}) := \{ i/(m-1) : i = (m-1)/2,(m-1)/2 - 1,\hdots,0 \} \subset \mathcal{W}$ and $\mathcal{W}^{\geq 0.5} = (w^{\geq 0.5}_1,w^{\geq 0.5}_2,\hdots,w^{\geq 0.5}_{(m-1)/2}) := \{ i/(m-1) : i = (m-1)/2,(m-1)/2 + 1,\hdots,m-1 \} \subset \mathcal{W}$. We propose the following algorithm.

\begin{algorithm}
    \SetAlgoLined
    \caption{Algorithm for imposing Property \ref{prop:kappa_set}.} \label{algo:kappa_alg}
    
    \For{$i \leftarrow 2$ \KwTo $(m-1)/2$}{
        \If{$w^{\leq 0.5}_{i-1}/\hat{\lambda}_{-}(w^{\leq 0.5}_{i-1}) < w^{\leq 0.5}_{i}/\hat{\lambda}_{-}(w^{\leq 0.5}_{i})$}{
            set $\hat{\lambda}_{-}(w^{\leq 0.5}_{i}) := w^{\leq 0.5}_{i}\hat{\lambda}_{-}(w^{\leq 0.5}_{i-1})/w^{\leq 0.5}_{i-1}$ \;
        }
        \If{$(1-w^{\geq 0.5}_{i-1})/\hat{\lambda}_{-}(w^{\geq 0.5}_{i-1}) < (1-w^{\geq 0.5}_{i})/\hat{\lambda}_{-}(w^{\geq 0.5}_{i})$}{
            set $\hat{\lambda}_{-}(w^{\geq 0.5}_{i}) := (1-w^{\geq 0.5}_{i})\hat{\lambda}_{-}(w^{\geq 0.5}_{i-1})/(1-w^{\geq 0.5}_{i-1})$ \;
        }
         
    }
\end{algorithm}

This ensures the processed estimator $\hat{\lambda}_{-}$ satisfies the conditions of Properties \ref{prop:lam_shape} for $w \in \mathcal{W}$; this is the finite window that we use to represent $[0,1]$ in practice. This processing is applied to all pointwise and novel estimators, i.e., $\Hat{\lambda}_{H}$, $\Hat{\lambda}_{CL}, \Hat{\lambda}_{PR}$, $\Hat{\lambda}_{H2}$, $\Hat{\lambda}_{CL2}$ and $\Hat{\lambda}_{PR2}$, ensuring that the ADF estimates from each approach are always theoretically valid. In practice, we found that imposing these theoretical results also improved estimation quality within the resulting ADF estimates, both in terms of bias and variance. An illustration of the processing procedure is given in the Supplementary Material. 

\subsection{Tuning parameter selection} \label{Subsec3.4}

Prior to using any of the ADF estimators introduced in this section, we are required to select at least one tuning parameter. For the probability values required by the estimators introduced in Sections \ref{Subsec3.1} and \ref{Subsec3.2}, we set $q=0.90$, $\{q_{j} \}_{j=1}^h := \{0.87 + (j-1)\times 0.002\}_{j=1}^h$ and $p_{j} = q_{j} + 0.05$ for $j = 1, \ldots, h$, with $h=31$. These values were chosen to evaluate whether the resulting estimators improve upon the base estimator $\hat{\lambda}_H$ using (approximately) the same amount of tail information in all cases. We tested a range of probabilities for both estimators and found that the ADF estimates were not massively sensitive to these across different extremal dependence structures. For example, for $\hat{\lambda}_{CL}$, a lower $q$ resulted in mild improvements for asymptotically independent copulas, while simultaneously worsening the quality of ADF estimates for asymptotically dependent examples, while a higher $q$ led to higher variance. 

For the angular interval $\mathcal{W}$, we set $m = 1,000$, i.e., $\mathcal{W} = \{0, 0.001, 0.002, \hdots,0.999, 1\}$. This set was sufficient to ensure a high degree of smoothness in the resulting ADF estimates without too high a computational burden.  

For each of the novel estimators (except $\hat{\lambda}_{H2}$), we must also specify the degree $k \in \NN$ for the polynomial families described by equations \eqref{eqn:new_BP} and \eqref{eqn:heff_tawn_BP}. In the case of the Pickands' dependence function, studies have found that higher values of $k$ are preferable for very strong positive dependence, while the opposite is true for weak dependence \citep{Marcon2017a,Vettori2018}. We prefer to select a single value of $k$ that performs well across a range of dependence structures, while minimising the computational burden; this avoids the need to select this parameter when obtaining ADF estimates in practice. 

To achieve this objective, we estimated the root mean integrated squared error (RMISE), as defined in Section \ref{Subsec4.1}, of the estimators $\hat{\lambda}_{CL}$ and $\hat{\lambda}_{PR}$ with $k = 4,\hdots,11$ using $200$ samples from two Gaussian copula examples, corresponding to strong ($\rho=0.9$) and weak ($\rho=0.1$) positive dependence. Assessment of how the RMISE estimates vary over $k$ for both estimators suggests that $k=7$ is sufficient to accurately capture both dependence structures. The full results can be found in the Supplementary Material. We remark that this approach for selecting $k$ is somewhat ad hoc, and in practice, one could employ various diagnostic tools, such as the tool discussed in Section \ref{Subsec5.3}, to select $k$ on a case-by-case basis.


For each of the `combined' estimators in Section \ref{Subsec3.3}, we take the same tuning parameters as for the `non-combined' counterpart, since the combined estimators have near identical formulations only defined on a subset of $[0,1]$. For example, the empirical 90\% threshold of the min-projection is used for both $\hat{\lambda}_{H}$ and $\hat{\lambda}_{H2}$. Finally, when estimating the conditional extremes parameters, empirical $90\%$ conditioning thresholds are used.

\section{Simulation Study} \label{Sec4}

\subsection{Overview}\label{Subsec4.1}
In this section, we use simulation to compare the estimators proposed in Section \ref{Sec3} to the existing techniques described in Section \ref{Sec2}. For the comparison, we introduce nine copula examples, representing a wide variety of extremal dependence structures, and encompassing both extremal dependence regimes. 

The first three examples are from the bivariate Gaussian distribution, for which the dependence is characterised by the correlation coefficient $\rho \in [-1,1]$. We consider $\rho \in \{-0.6,0.1,0.6\}$, resulting in data structures exhibiting medium negative, weak positive, and medium positive dependence, respectively. Note that in the case of $\rho = -0.6$, the choice of exponential margins will hide the dependence structure \citep{Keef2013,Nolde2022}. 

For the next two examples, we consider the bivariate extreme value copula with logistic and asymmetric logistic families \citep{Gumbel1960,Tawn1988}. In both cases, the dependence is characterised by the parameter $r \in (0,1]$; we set $r = 0.8$, corresponding to weak positive extremal dependence. For the asymmetric logistic family, we also require two asymmetry parameters $(k_1,k_2) \in [0,1]^2$, which we fix to be $(k_1,k_2) = (0.3,0.7)$, resulting in a mixture structure. 

We next consider the inverted bivariate extreme value copula \citep{Ledford1997} for the logistic and asymmetric logistic families, with the dependence again characterised by the parameters $r$ and $(r,k_1,k_2)$, respectively. We set $r = 0.4$, corresponding to moderate positive dependence, and again fix $(k_1,k_2) = (0.3,0.7)$. Note that for this copula, the model described in equation \eqref{eqn:wads_tawn} is exact: see \citet{Wadsworth2013}.

Lastly, we consider the bivariate student t copula, for which dependence is characterised by the correlation coefficient $\rho \in [-1,1]$ and the degrees of freedom $\nu > 0$. We consider $\rho = 0.8$, $\nu = 2$ and $\rho = 0.2$, $\nu = 5$, corresponding to strong and weak positive dependence. 

Illustrations of the true ADFs for each copula are given in Figure \ref{fig:true_ADFs_copulas}, showing a range of extremal dependence structures. For examples where the ADF equals the lower bound, the copula exhibits asymptotic dependence. While the fifth copula exhibits asymmetric dependence, the limiting ADF is symmetric; the same is not true for its inverted counterpart.   

\begin{figure}[!h]
    \centering
    \includegraphics[width=\textwidth]{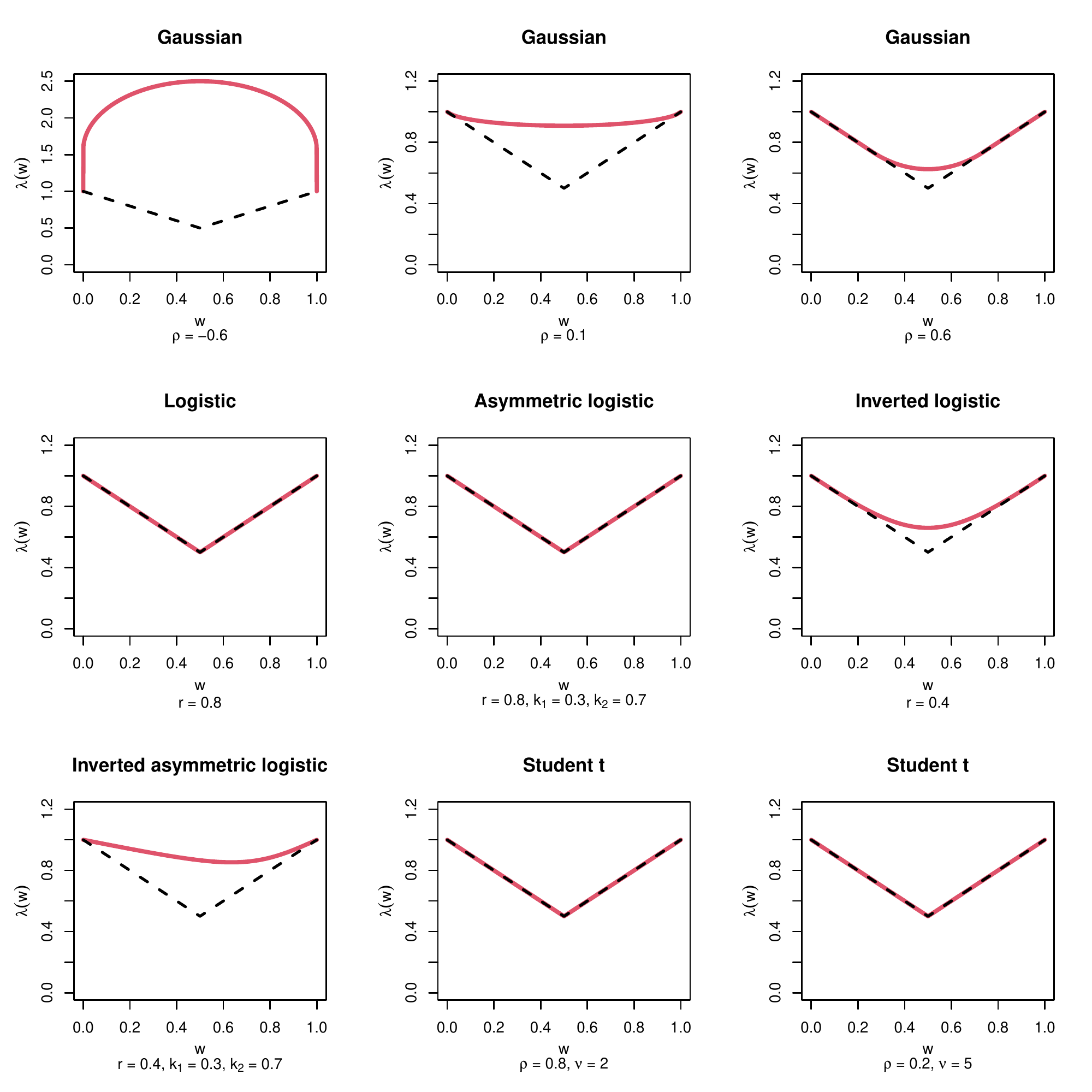}
    \caption{True ADFs (in red) for each copula introduced in Section \ref{Subsec4.1}, along with the theoretical lower bound (black dotted line).}
    \label{fig:true_ADFs_copulas}
\end{figure}

To evaluate estimator performance, we use the RMISE 
\begin{equation*}
    \text{RMISE}\left(\hat{\lambda}_{-}\right) = \left( \mathbb{E} \left[\int_{0}^1\left\{\hat{\lambda}_{-}(w) - \lambda(w)\right\}^2 \mathrm{d}w \right] \right)^{1/2},
\end{equation*}
where $\hat{\lambda}_{-}$ denotes an arbitrary estimator. Simple rearrangement shows that this metric is equal to the square root of the sum of integrated squared bias (ISB) and integrated variance (IV) \citep{Gentle2002}, i.e.,  
\begin{equation*}
    \text{RMISE}\left(\hat{\lambda}_{-}\right) = \left( \underbrace{\int_{0}^1\left[ \mathbb{E}\left\{\hat{\lambda}_{-}(w)\right\} - \lambda(w)\right]^2 \mathrm{d}w}_{\text{ISB}} + \underbrace{\int_{0}^1 \mathbb{E} \left( \left[ \mathbb{E} \left\{\hat{\lambda}_{-}(w)\right\} - \hat{\lambda}_{-}(w)\right]^2 \right) \mathrm{d}w}_{\text{IV}} \right)^{1/2}.
\end{equation*}
Therefore, the RMISE summarises the quality of an estimator in terms of both bias and variance, and can be used as a means to compare different estimators. 

\subsection{Results} \label{Subsec4.4}
For the copulas described in Section \ref{Subsec4.1}, data from each copula example was simulated on standard exponential margins with a sample size of $n = $10,000, and the integrated squared error (ISE) of each estimator was approximated for $1,000$ samples using the trapezoidal rule; see the Supplementary Material for further information. The square root of the mean of these estimates was then computed, resulting in a Monte–Carlo estimate of the RMISE.



The RMISE estimates for each estimator and copula combination are shown in Table \ref{table:RMISE_values}. Tables containing the corresponding Monte–Carlo error, ISB and IV values can be found in the Supplementary Material, along with root mean squared error estimates for individual rays $w \in \{0.1,0.3,0.5,0.7,0.9\}$. For each estimator, the bias varies significantly across the different copulas. However, in the majority of cases, the bias/variance are similar across most of the estimators. We remark that the magnitudes of the Monte–Carlo errors, as reported in the Supplementary Material, are small enough such that the ordering of RMISE estimates in Table \ref{table:RMISE_values} is likely to be a true reflection of the relative performance for each estimator.

\renewcommand{\arraystretch}{1.5}
\begin{table}[!h]

\caption{RMISE values (multiplied by 100) for each estimator and copula combination. Smallest RMISE values in each row are highlighted in bold, with values reported to 3 significant figures. Copulas 1-3 refer to the Gaussian copula with $\rho = -0.6$, $\rho = 0.1$ and $\rho = 0.6$, respectively. Copulas 4-7 refer to the logistic, asymmetric logistic, inverted logistic, inverted asymmetric logistic copulas, respectively. Copulas 8-9 refer to the student t copula with parameters $\rho = 0.8$, $\nu = 2$ and $\rho = 0.2$, $\nu = 5$, respectively.
\label{table:RMISE_values}}
\centering
\begin{tabular}[t]{cccccccc}
\toprule
Copula & $\hat{\lambda}_{H}$ & $\hat{\lambda}_{CL}$ & $\hat{\lambda}_{PR}$ & $\hat{\lambda}_{H2}$ & $\hat{\lambda}_{CL2}$ & $\hat{\lambda}_{PR2}$ & $\hat{\lambda}_{ST}$\\
\hline
Copula 1 & \textbf{61.2} & 61.3 & 66.2 & 61.4 & 61.9 & 66.7 & 63.7\\
\hline
Copula 2 & 3.44 & 3.36 & 3.67 & 3.41 & 3.35 & 3.66 & \textbf{2.95}\\
\hline
Copula 3 & 3.43 & 3.46 & 3.83 & 3.21 & 3.22 & 3.57 & \textbf{1.09}\\
\hline
Copula 4 & 4.58 & 4.71 & 6.89 & 4.24 & 4.24 & 6.17 & \textbf{2.77}\\
\hline
Copula 5 & 14 & 14.1 & 17.1 & 14 & 14.1 & 17.1 & \textbf{12.1}\\
\hline
Copula 6 & 2.05 & 2 & 2.18 & 1.78 & \textbf{1.75} & 1.92 & 2.12\\
\hline
Copula 7 & 2.79 & \textbf{2.68} & 2.93 & 2.77 & 2.69 & 2.93 & 3.96\\
\hline
Copula 8 & 1.04 & 1.05 & 1.44 & 0.562 & \textbf{0.535} & 0.72 & 1.87\\
\hline
Copula 9 & 11.9 & 12 & 14.9 & 11.9 & 12 & 14.9 & \textbf{11.1}\\
\bottomrule
\end{tabular}
\end{table}


While no estimator consistently outperforms the others, $\hat{\lambda}_{CL2}$ and $\hat{\lambda}_{ST}$ tend to have lower RMISE, ISB and IV values, on average. This is especially the case when comparing to the base estimator $\hat{\lambda}_H$. Furthermore, the `combined' estimators outperform their non-combined counterparts in many cases, suggesting that incorporating parameter estimates from the conditional extremes model can reduce bias and variance. The Gaussian copula with $\rho = -0.6$ has much higher RMISE values, indicating that none of the estimators capture negative dependence well, though this is in part due to the choice of exponential margins. 

We note that while the true ADFs are identical for the asymptotically dependent logistic, asymmetric logistic, and student t copulas, the corresponding RMISE values in Table \ref{table:RMISE_values} vary significantly. Notably, the asymmetric logistic, which possesses a complex asymmetric structure, has substantially higher RMISE values compared to the other asymptotically dependent copulas. We suspect these disparities arise at finite levels due to the different rates of convergence to the limiting ADF in equation \eqref{eqn:wads_tawn}, alongside the fact many multivariate extreme value models perform poorly in the case of asymmetric dependence \citep{Tendijck2021}.

Overall, these results indicate that no one estimator is preferable across all extremal dependence structures. However, we suggest using the estimators $\hat{\lambda}_{CL2}$ and $\hat{\lambda}_{ST}$ since, on average, these appeared to result in less bias and variance. The form of extremal dependence appears to affect the performance of both of these estimators; since this is often difficult to quantify a priori, we suggest using both estimators and evaluating relative performance via diagnostics, as we do in Section \ref{Sec5}. 

\section{Case Study} \label{Sec5}
\subsection{Overview} \label{Subsec5.1} 
Understanding the probability of observing extreme river flow events (i.e., floods) at multiple sites simultaneously is important in a variety of sectors, including insurance \citep{Keef2013a,Quinn2019,Rohrbeck2021a} and environmental management \citep{Lamb2010,Gouldby2017}. Valid risk assessments therefore require accurate evaluation of the extremal dependence between different sites. 

In this section, we estimate the ADF of river flow data sets obtained from gauges in the north of England, UK, which can be subsequently used to construct bivariate return curves. Daily average flow values ($m^3/s$) at six river gauge locations on different rivers were considered. The gauge sites are illustrated in Figure \ref{fig:riv_spatial_plot}. For each location, data is available between May 1993 and September 2020; however, we only consider dates where a measurement is available for every location. To avoid seasonality, we consider the interval October–March only; from our analysis, it appears that the highest daily flow values are observed in this period. This results in $n=4,659$ data points for each site. Plots of the daily flow time series can be found in the Supplementary Material; these plots suggest that an assumption of  stationarity is reasonable for the extremes of each data set. 

\begin{figure}[!h]
    \centering
    \includegraphics[width=.7\textwidth]{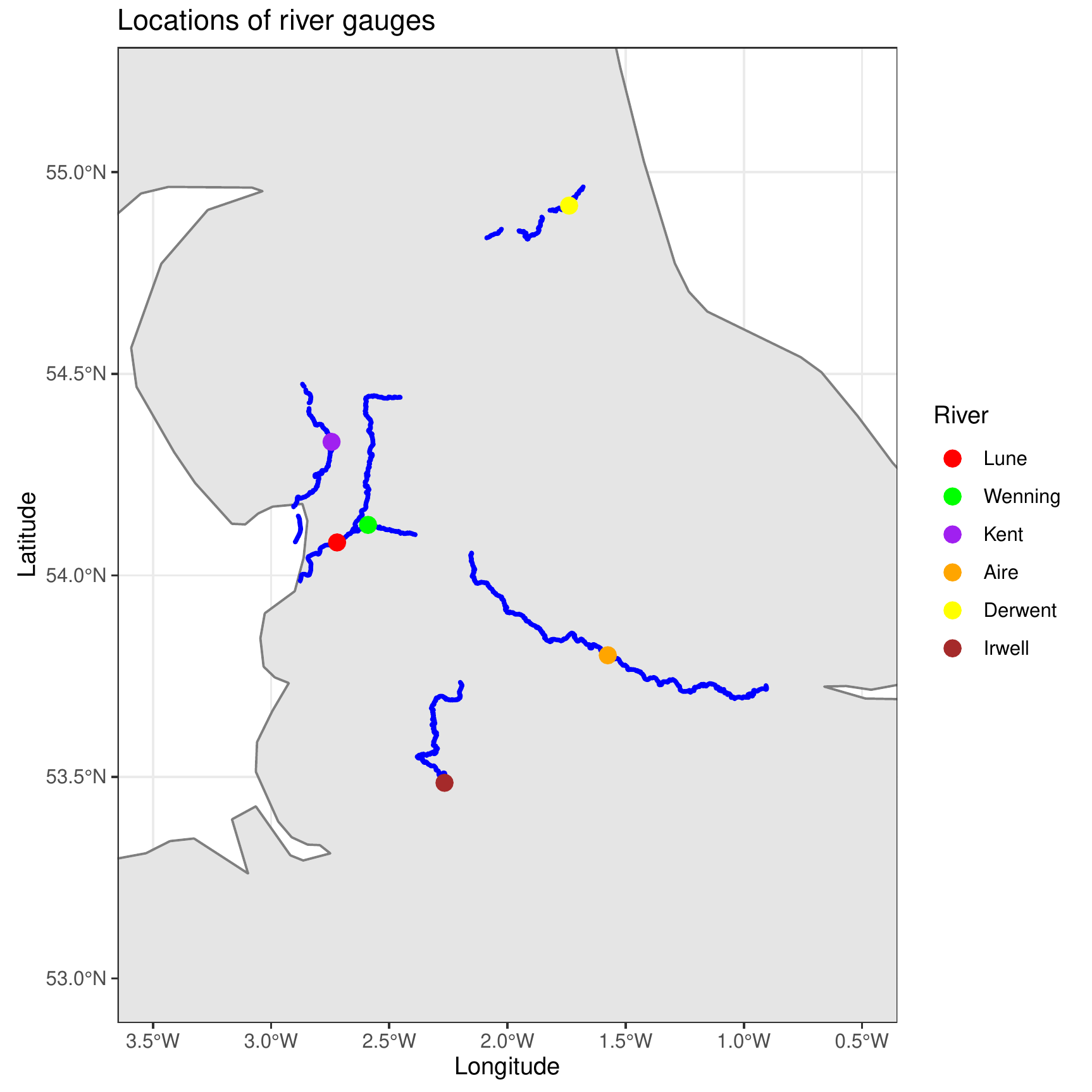}
    \caption{Locations of river gauges in the north of England, UK. Individual rivers illustrated in blue alongside the corresponding gauge locations.}
    \label{fig:riv_spatial_plot}
\end{figure} 

We fix the site on the river Lune to be our reference site and consider the extremal dependence between this and all other gauges. We first estimate the extremal dependence measure $\chi$ and the coefficient $\eta$ using the upper $10\%$ of the corresponding joint tails. Both $\chi$ and $\eta$ are limiting values; however, in practice, we are unable to evaluate such limits without a closed form for the joint distribution. We therefore calculate these values empirically. Taking $\chi$, for example, an estimate is $\hat{\chi}_{q} = \hat{\Pr}(X>\hat{x}_{q},Y>\hat{y}_{q})/\hat{\Pr}(X>\hat{x}_{q})$, where $\hat{\Pr}(\cdot)$ denotes an empirical probability estimate and $\hat{x}_{q}$ and $\hat{y}_{q}$ denote empirical $q$ quantile estimates for the variables $X$ and $Y$, respectively, and $q$ is some value close to $1$. Specifically, we take $q=0.9$. In practice, we are unlikely to observe $\chi = \hat{\chi}_{q}$, even at the most extreme quantile levels, i.e., as $q \to 1$. This can be problematic when trying to quantify the form of extremal dependence, since $\hat{\chi}_q>0$ may arise for asymptotically independent data sets (for example). Therefore, the estimated coefficients should act only as a rough guide for this quantification.

The dependence measure estimates and $95\%$ confidence intervals are shown in Figure \ref{fig:dep_coeff} as a function of distance from the reference site. Here and throughout, all confidence intervals are obtained via block bootstrapping with block size $b=40$; this value appears appropriate to account for the varying degrees of temporal dependence observed across the six gauge sites. These estimates suggest that asymptotic independence may exist for at least three of the site pairings; therefore, modelling techniques based on bivariate regular variation would likely fail to capture the observed extremal dependencies in this scenario.

\begin{figure}[!h]
    \centering
    \includegraphics[width=\textwidth]{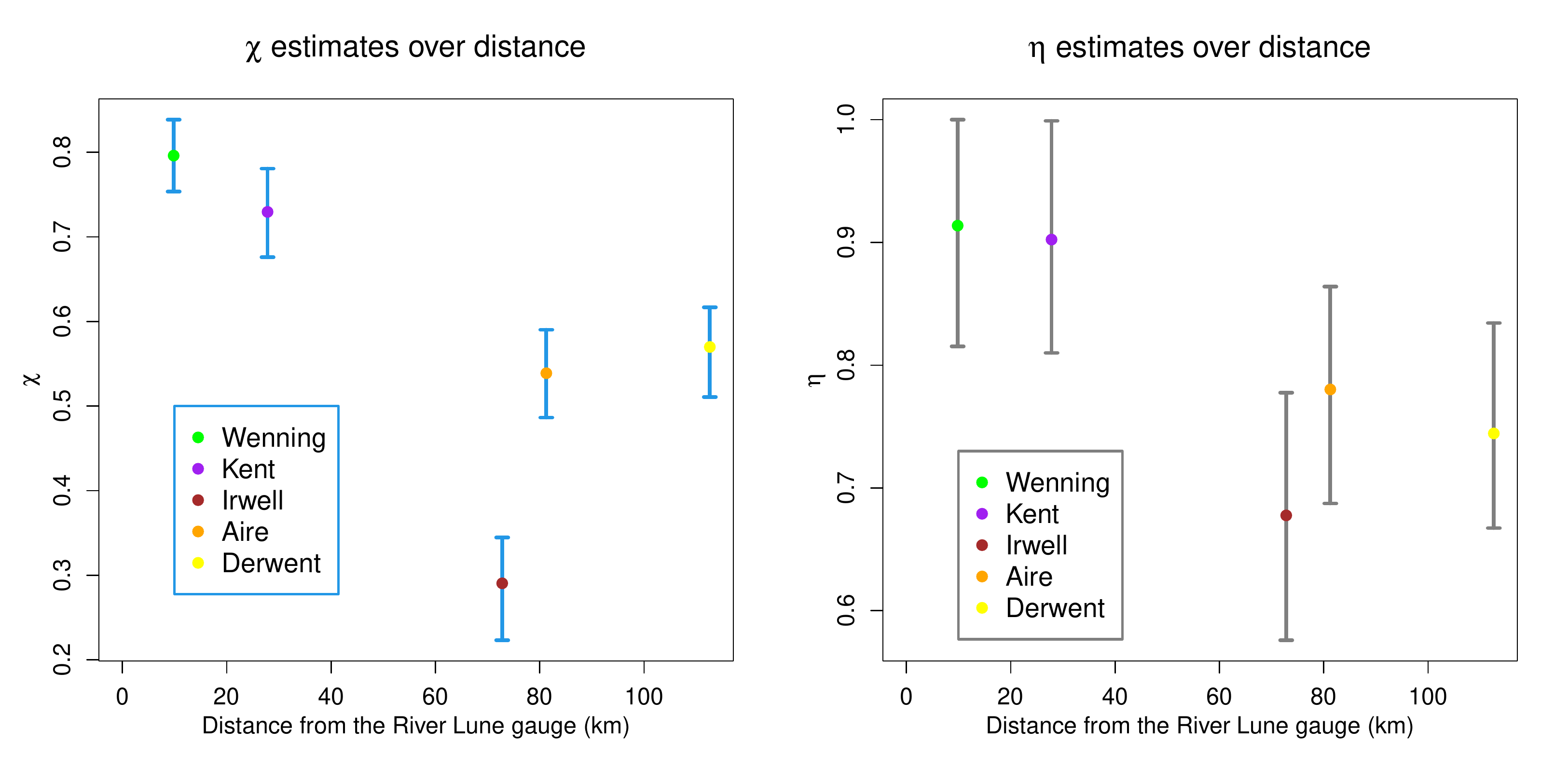}
    \caption{Estimated dependence coefficients as a function of distance from the Lune gauge, with $95\%$ pointwise confidence intervals given by the shaded regions. Left: Estimates of $\chi$ (blue). Right: Estimates of $\eta$ (grey).}
    \label{fig:dep_coeff}
\end{figure}

\subsection{ADF Estimation} \label{Subsec5.2} 

We transform each marginal data set to exponential margins using the semi-parametric approach of \citet{Coles1991}, whereby a generalised Pareto distribution is fitted to the upper tail and the body is modelled empirically. The generalised Pareto distribution thresholds are selected using the technique proposed in \citet{Murphy2023}. In spite of the data violating the independence assumption, diagnostic plots found in the Supplementary Material indicate reasonable model fits. Since our results from Section \ref{Sec4} suggest that the estimators $\hat{\lambda}_{CL2}$ and $\hat{\lambda}_{ST}$ perform best overall, we used these, alongside the base estimator $\hat{\lambda}_{H}$, to estimate the ADF for each combination of the reference gauge and the other five gauges. The resulting ADF estimates can be found in Figure \ref{fig:adf_ests}.

\begin{figure}[!h]
    \centering
    \includegraphics[width=\textwidth]{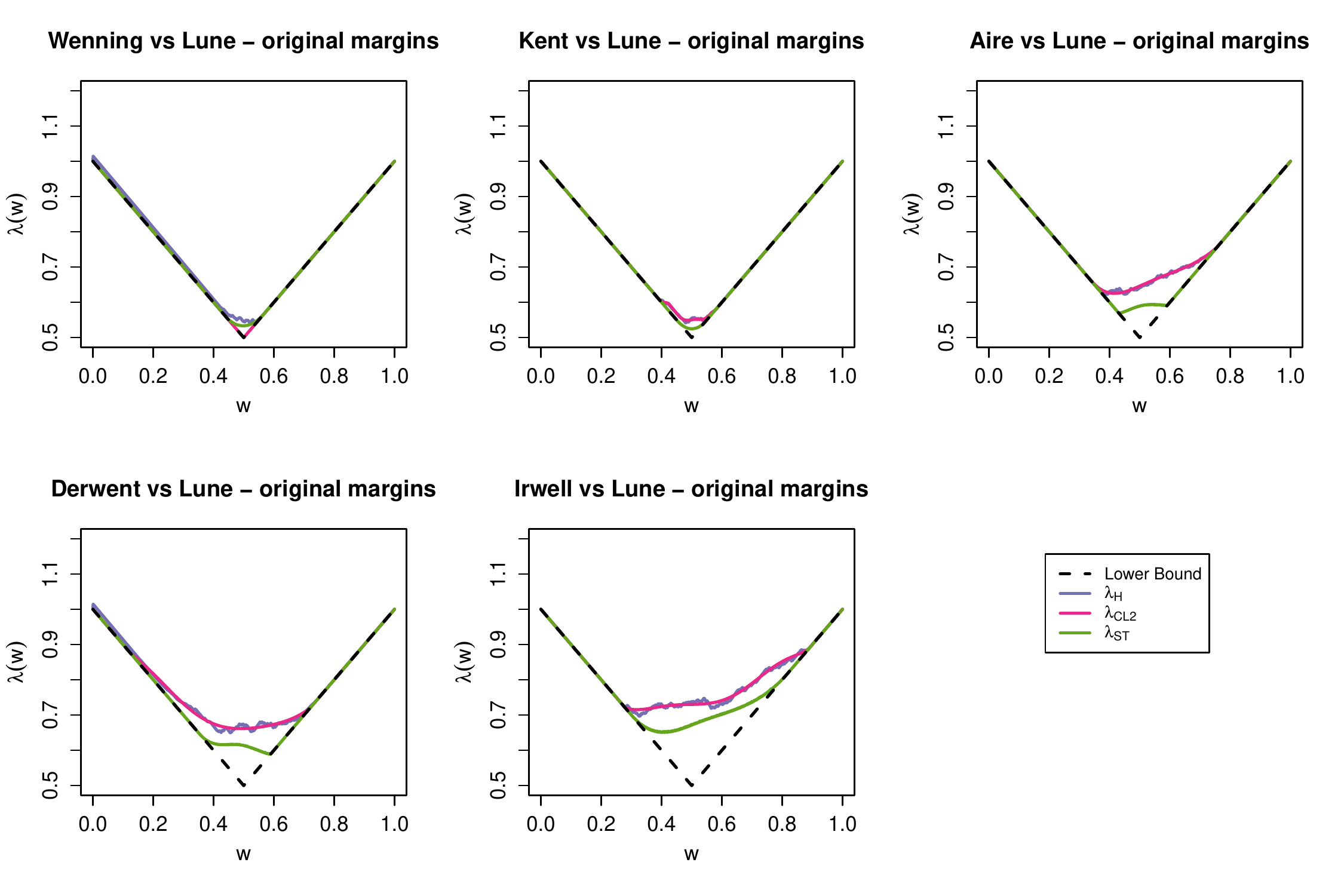}
    \caption{ADF estimates for each pair of gauge sites. The purple, pink and green lines represent the estimates from $\hat{\lambda}_{H}$, $\hat{\lambda}_{CL2}$ and $\hat{\lambda}_{ST}$, respectively, with the theoretical lower bound denoted by the black dotted lines.}
    \label{fig:adf_ests}
\end{figure}

One can observe contrasting shapes across the different pairs of gauges, illustrating the variety in the observed extremal dependence structures. These results illustrate that on the whole, the estimator $\hat{\lambda}_{CL2}$ is very much a smoothed version of $\hat{\lambda}_{H}$ on the interval $(\hat{\alpha}^*_{x \mid y}$, $\hat{\alpha}^*_{y \mid x})$, owing to the form of likelihood function used.

\subsection{Assessing goodness of fit for ADF estimates} \label{Subsec5.3} 

Recall that, from equation \eqref{eqn:cond_wads_tawn}, we have $T^*_w \sim \text{Exp}(\lambda(w))$ as $u_w \to \infty$ for all $w \in [0,1]$. We exploit this result to assess the goodness of fit for ADF estimates. 

Let $\hat{\lambda}(w), w \in [0,1]$, denote an estimated ADF obtained using the sample $\{ (x_i,y_i): \;i = 1, \hdots, n\}$. Given $w \in [0,1]$, let $u_w$ denote some high empirical quantile from the sample $\mathbf{t}_w$, and consider the sample $\mathbf{t}^*_w$, with $\mathbf{t}_w$ and $\mathbf{t}^*_w$ defined as in Section \ref{Subsec3.1}. If $\mathbf{t}^*_w$ is indeed a sample from an $\text{Exp}(\hat{\lambda}(w))$ distribution, we would expect good agreement between empirical and model quantiles. Analogously, $\hat{\lambda}(w)\mathbf{t}^*_w$ should represent a sample from an $\text{Exp}(1)$ distribution if $T^*_w \sim \text{Exp}(\lambda(w))$, independent of the ray $w$.

We use these results to derive local and global diagnostics for the ADF. First, for a subset of rays $w \in [0,1]$, corresponding to a range of joint survival regions, let $n_w = \lvert\mathbf{t}^*_w\rvert$ and $\mathbf{t}^*_{w(j)}$ denote the $j$-th order statistic of $\mathbf{t}^*_{w}$ and consider the set of pairs
\begin{equation*} 
    \left\{ \left( -\log(1-j/(n_w + 1))/\hat{\lambda}(w) , \mathbf{t}^*_{w(j)} \right) : j = 1, \hdots, n_w \right\}.
\end{equation*}
With $u_w$ fixed to be the 90\% empirical quantile of $\mathbf{t}_w$, quantile-quantile (QQ) plots for five individual rays, $w \in \{0.1,0.3,0.5,0.7,0.9\}$, are given in the first five panels Figure \ref{fig:adf_diag} for the third pair of gauges and the $\hat{\lambda}_{CL2}$ estimator. Uncertainty intervals are obtained via block bootstrapping on the set $\mathbf{t}^*_{w}$, i.e., the order statistics. We acknowledge a deficiency of this scheme that all sampled quantiles will be bounded by the interval $[\mathbf{t}^*_{w(1)},\mathbf{t}^*_{w(n_w)}]$. However, alternative uncertainty quantification approaches, such as parametric bootstraps or the use of beta quantiles for uniform order statistics, would fail to account for the observed temporal dependence within the data. 

For the global diagnostic, we propose the following: for each $i \in \{1,\hdots,n\}$, define the corresponding angular observation $w_i := x_i/(x_i + y_i)$ and let $u_{w_i}$ denote the 90\% empirical quantile of $\mathbf{t}_{w_i}$. Randomly sample one point $t^*$ from the set $\mathbf{t}^*_{w_i}$ and set $e_i := \lambda(w_i)(t^* - u_{w_i})$; repeating this process over all $i \in \{1,\hdots,n\}$, we obtain the set $\mathcal{E}:= \{ e_i \mid i \in \{1,\hdots,n\}\}$. We then consider the set of pairs $$\left\{ \left( -\log(1-j/(n + 1)) , e_{(j)} \right) : j = 1, \hdots, n \right\},$$ 
with $e_{(j)}$ denoting the $j$-th order statistic of $\mathcal{E}$; this comparison provides an overall summary for the quality of model fit across all angles. The corresponding QQ plot for the third pair of gauges and the $\hat{\lambda}_{CL2}$ estimator is given in the bottom right panel of Figure \ref{fig:adf_diag}, with uncertainty bounds obtained via block bootstrapping on set $\mathcal{E}$. This diagnostic possesses a degree of stochasticity due to sampling from the set $\mathbf{t}^*_{w_i}$. We can check the impact of this on our impression of the fit by considering a few different random seeds; see the Supplementary Material for further details.

On the whole, the estimated exponential quantiles appear in good agreement with the observed quantiles, indicating the underlying ADF estimate accurately captures the tail behaviour for each min-projection variable. Analogous plots for $\hat{\lambda}_H$ and $\hat{\lambda}_{ST}$ are given in the Supplementary Material. Similar plots were obtained when the other pairs of gauges were considered. 

\begin{figure}[!h]
    \centering
    \includegraphics[width=\textwidth]{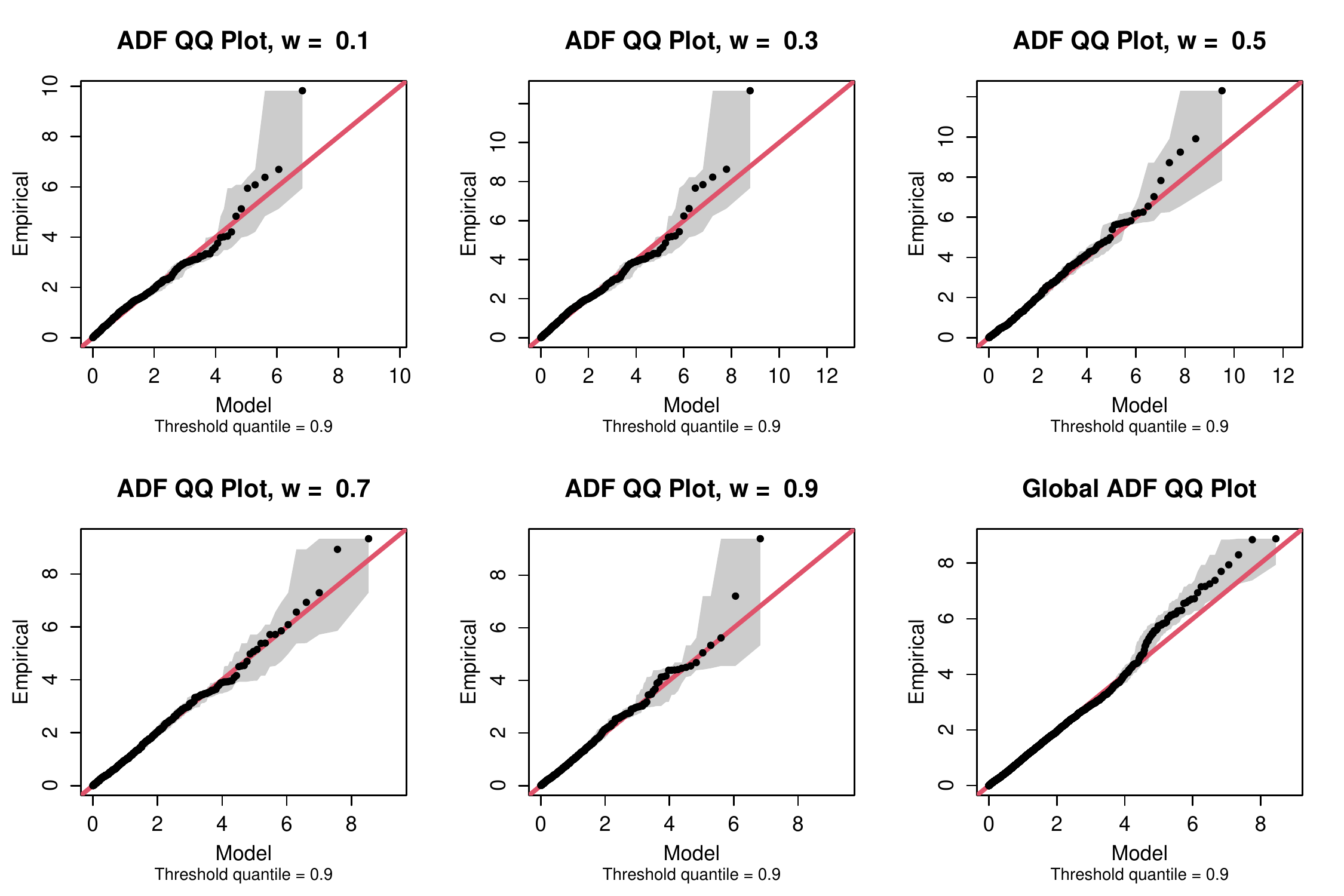}
    \caption{Local and global ADF QQ plots for the third pair of gauges, using the ADF estimate obtained via $\hat{\lambda}_{CL2}$. Estimates given in black, with 95\% pointwise tolerance intervals represented by the grey shaded regions. Red lines correspond to the $y=x$ line.}
    \label{fig:adf_diag}
\end{figure}

\subsection{Estimating return curves} \label{Subsec5.4} 
To quantify the risk of joint flooding events across sites, we follow \citet{Murphy-Barltrop2023} and use the ADF to estimate a bivariate risk measure known as a return curve, $\RC{p}{}$, as defined in Section \ref{Sec1}. This measure is a direct bivariate extension of a return level, which is commonly used to quantify risk in the univariate setting \citep{Coles2001}. Taking probability values $p$ close to zero gives a summary of the joint extremal dependence, thus allowing for comparison across different data structures. In the context of extreme river flows, return curves can be used to evaluate at which sites joint extremes (floods) are more/less likely to occur. For illustration, we fix $p$ to correspond to a $5$ year return period, i.e., $p = 1/(5n_y)$, where $n_y$ is the average number of points observed in a given year \citep{Brunner2016}. Excluding missing observations, we have $28$ years of data, hence the resulting curve should be within the range of data whilst simultaneously representing the joint tail. The resulting return curve estimates for each ADF estimator and pair of gauge sites can be found in Figure \ref{fig:original_return_curves}. 

\begin{figure}[!h]
    \centering
    \includegraphics[width=\textwidth]{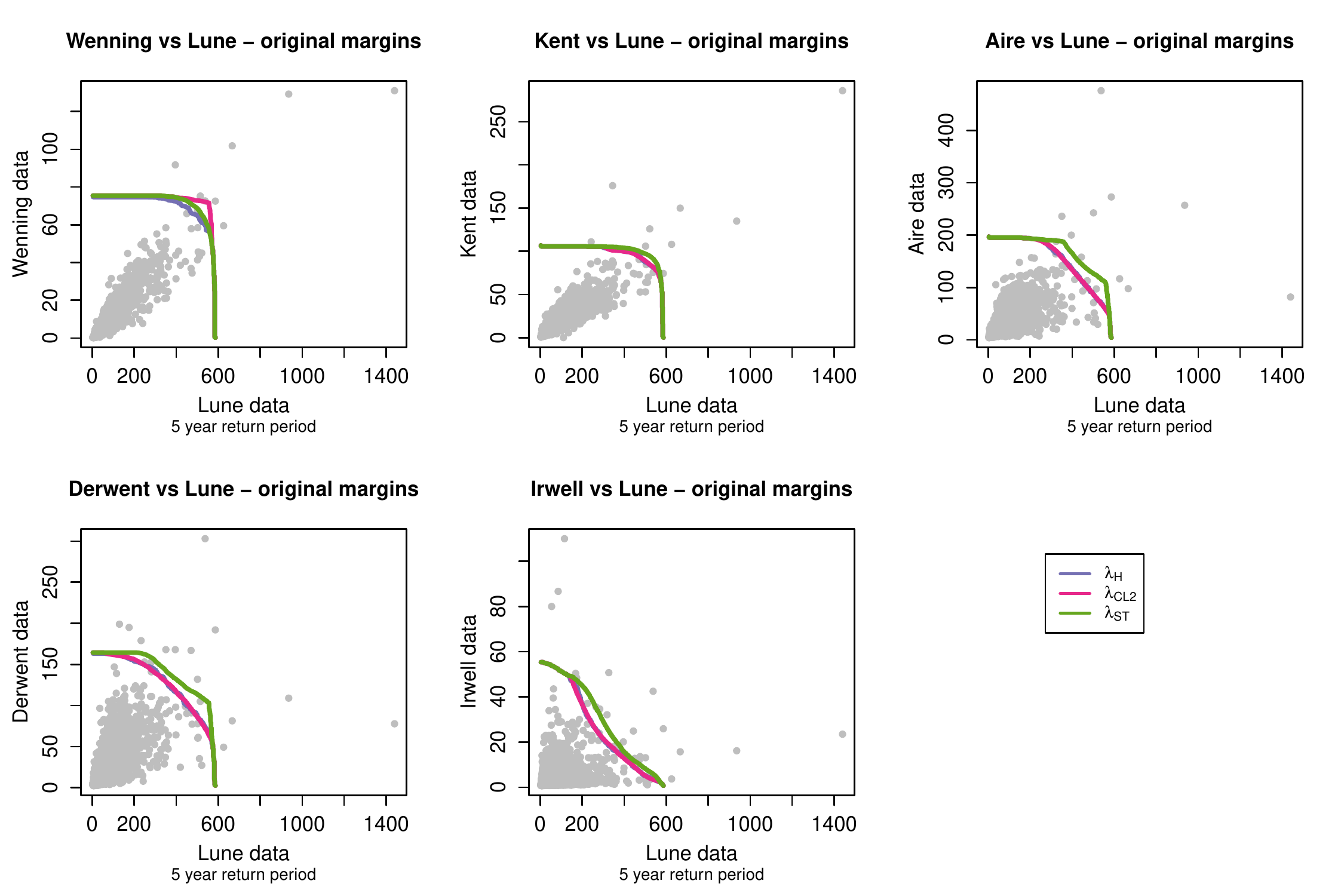}
    \caption{Estimated 5-year return curves (on original margins) for each pair of gauges. The purple, pink and green lines represent the curve estimates from $\hat{\lambda}_{H}$, $\hat{\lambda}_{CL2}$ and $\hat{\lambda}_{ST}$, respectively.}
    \label{fig:original_return_curves}
\end{figure}

There is generally good agreement among the estimated curves. The almost-square shapes of the estimates for the first two pairs of gauges indicate higher likelihoods of observing simultaneous flood events at the corresponding gauge sites; this is as expected given the close spatial proximity of these sites. In all cases, the curves derived via $\hat{\lambda}_H$ are quite rough and unrealistic, and are subsequently ignored. To assess the goodness of fit of the remaining return curve estimates, we consider the first and fifth examples and apply the diagnostic introduced in \citet{Murphy-Barltrop2023}. Our results suggest good quality model fits for both of the estimates obtained using $\hat{\lambda}_{CL2}$ and $\hat{\lambda}_{ST}$, though with potentially a slight preference for the estimate based on $\hat{\lambda}_{CL2}$ for the fifth pair. Furthermore, we also obtain $95\%$ return curve confidence intervals for these examples. The resulting plots illustrating the diagnostics and confidence intervals, along with a brief explanation of the diagnostic tool, are given in the Supplementary Material.

\section{Discussion} \label{Sec6}
We have introduced a range of novel global estimators for the ADF, as detailed in Section \ref{Sec3}. We compared these estimators to existing techniques through a systematic simulation study and found our novel estimators to be competitive in many cases. In particular, the estimators derived via the composite likelihood approach of Section \ref{Subsec3.1}, alongside the estimator of \citet{Simpson2022}, appear to have lower bias and variance, on average, compared to alternative estimation techniques. We also applied ADF estimation techniques to a range of river flow data sets, and obtained estimates of return curves for each data set. The results suggest that our estimation procedures are able to accurately capture the range of extremal dependence structures exhibited in the data.

From Section \ref{Sec4}, one can observe that the `combined' estimators proposed in Section \ref{Subsec3.3} outperform their `uncombined' counterparts in the majority of instances. This indicates that incorporating the knowledge obtained from the conditional extremes parameters leads to improvements in ADF estimates. Furthermore, in most cases, ADF estimates obtained via approximations of the set $\partial G$ appeared to have lower bias compared to alternative estimation techniques. More generally, these results suggest that inferential techniques that exploit the results of \citet{Nolde2022} are superior to techniques which do not. Estimation of $\partial G$, and its impact on estimation of other extremal dependence properties, represents an important line of research.

As noted in Section \ref{Sec1}, few applications of the modelling framework described in equation \eqref{eqn:wads_tawn} exist, even though this model offers advantages over the widely used approach of \citet{Heffernan2004} when evaluating joint tail probabilities. Inference via the ADF ensures consistency in extremal dependence properties, and one can obtain accurate estimates of certain risk measures, such as return curves. 

For each of the existing and novel estimators introduced in Sections \ref{Sec2} and \ref{Sec3}, we were required to select quantile levels, which is equivalent to selecting thresholds of the min-projection. With the exception of $\hat{\lambda}_{ST}$, similar quantile levels were considered for each estimator so as to provide some degree of comparability. However, due to variation in estimation procedures, we acknowledge that the selected quantile levels are not readily comparable since the quantity of joint tail data used for estimation varies across different estimators. Moreover, as noted in Section \ref{Subsec3.4}, trying to select `optimal' quantile levels appears a fruitless exercise since the performance of each estimator does not appear to alter much across different quantile levels.

As noted in Section \ref{Subsec3.4}, our proposed estimators require selection of several tuning parameters. For all practical applications, we recommend this selection is done using a combination of the diagnostic tools outlined in Section \ref{Sec5}, since it is unlikely that one set of tuning parameters will be appropriate across all observed dependence structures and sample sizes.

Finally, we acknowledge the lack of theoretical treatment for the proposed ADF estimators which is an important consideration for understanding properties of the methodology. However, theoretical results of this form typically require in-depth analyses and strict assumptions, which themselves may be hard to verify, whilst in practice one can only ever look at diagnostics obtained from the data. We have therefore opted for a more practical treatment of the proposed estimators.

\section*{Supplementary Material}
\begin{itemize}
    \item \textbf{Supplementary Material for ``Improving estimation for asymptotically independent bivariate extremes via global estimators for the angular dependence functions"}: File containing additional figures and tables. (.pdf file)
    \item \textbf{Code and data}: Zip file containing R scripts and the case study data sets. The script can be used to reproduce results from Sections \ref{Sec4} and \ref{Sec5}. (.zip file)
\end{itemize}

\section*{Declarations}
\subsection*{Funding}
This work was supported by EPSRC grant numbers EP/L015692/1 and EP/X010449/1. 
\subsection*{Competing interests}
The authors have no relevant financial or non-financial interests to disclose.
\subsection*{Data availability}
The data sets analysed in Section \ref{Sec5} are freely available online from the National River Flow Archive \citep{National2022}. The gauge ID numbers for the six considered stations are as follows: 
River Lune - 72004, River Wenning - 72009, River Kent - 73012, River Irwell - 69002, River Aire - 27028, River Derwent - 23007. Date downloaded: October 24th, 2022.  
\subsection*{Ethical Approval}
Not Applicable
\subsection*{Authors' contributions}
All authors contributed to the study conception and design. Material preparation, data collection and analysis were performed by Callum Murphy-Barltrop. The first draft of the manuscript was written by Callum Murphy-Barltrop and all authors commented on previous versions of the manuscript. All authors read and approved the final manuscript.
\subsection*{Acknowledgments}
This paper is based on work partly completed while Callum Murphy-Barltrop was part of the EPSRC funded STOR-i centre for doctoral training (EP/L015692/1). We are grateful to the referee and editor for constructive comments that have improved this article.

\clearpage

\begin{center}
\textbf{\large Supplementary Material to `Improving estimation for asymptotically independent bivariate extremes via global estimators for the angular dependence function'}
\end{center}
\setcounter{equation}{0}
\setcounter{figure}{0}
\setcounter{table}{0}
\setcounter{page}{1}
\setcounter{section}{0}
\makeatletter
\renewcommand{\theequation}{S\arabic{equation}}
\renewcommand{\thefigure}{S\arabic{figure}}
\renewcommand{\thesection}{S\arabic{section}}
\renewcommand{\bibnumfmt}[1]{[S#1]}
\renewcommand{\citenumfont}[1]{S#1}

\section{Post-processing scheme illustration}
Figure \ref{fig:processing_ADF} illustrates an example ADF estimate before and after processing. Observe that enforcing Proposition 3.1 forces the ADF to equal the lower bound for $w \leq 0.39$.

\begin{figure}[h]
    \centering
    \includegraphics[width=.95\textwidth]{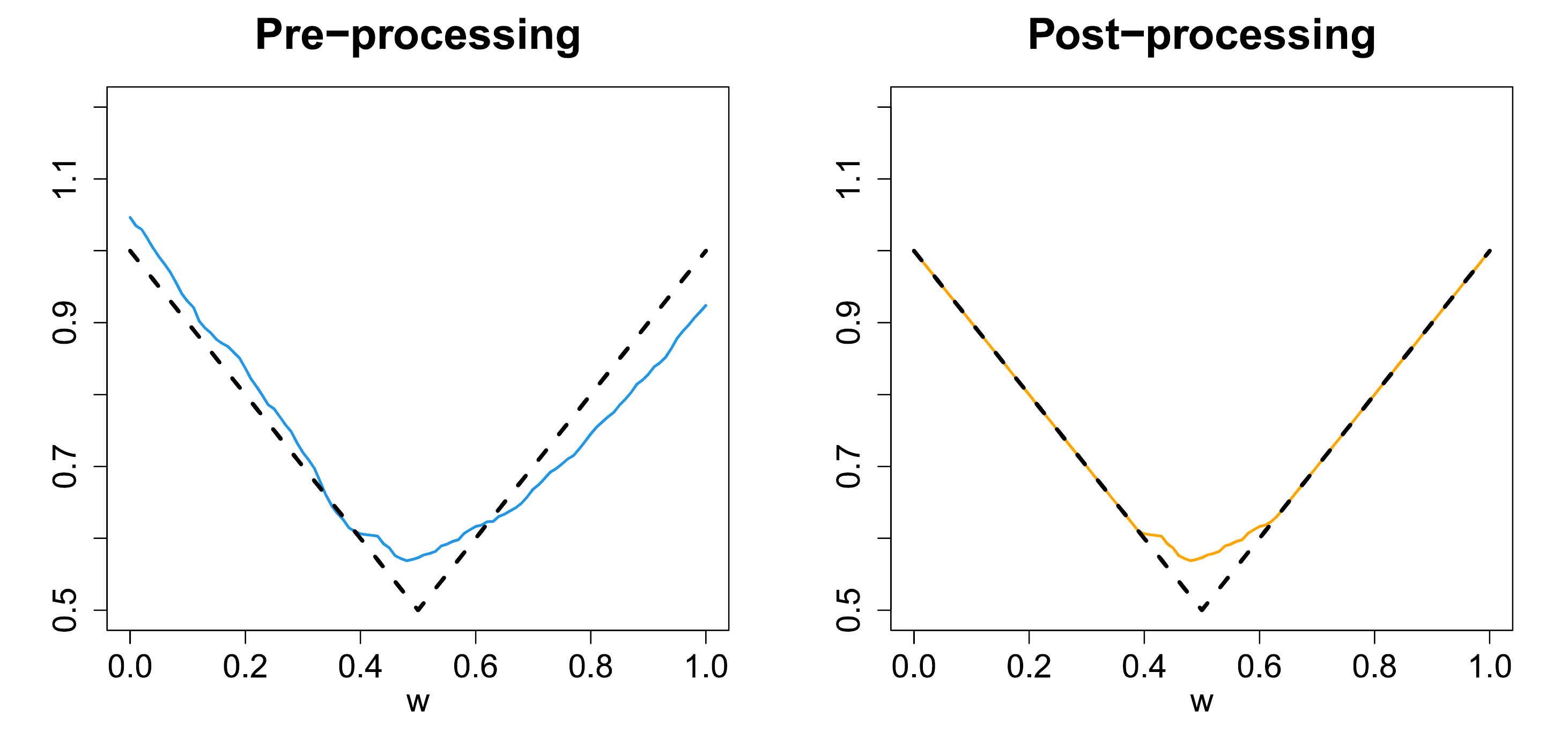}
    \caption{A single ADF estimate before (blue, left) and after (orange, right) processing. The black dotted lines denote the lower bound $\max(w,1-w)$. }
    \label{fig:processing_ADF}
\end{figure}

\section{Example ADF estimates}
Examples of ADF estimates obtained using each of the estimators discussed in the main article are given in Figure \ref{fig:ADF_estimates_examples} for a bivariate Gaussian copula with $\rho = 0.5$. The different estimates are in good agreement, and one can observe the roughness in estimates obtained via the Hill estimator. 
\begin{figure}[!h]
    \centering
    \includegraphics[width=0.6\textwidth]{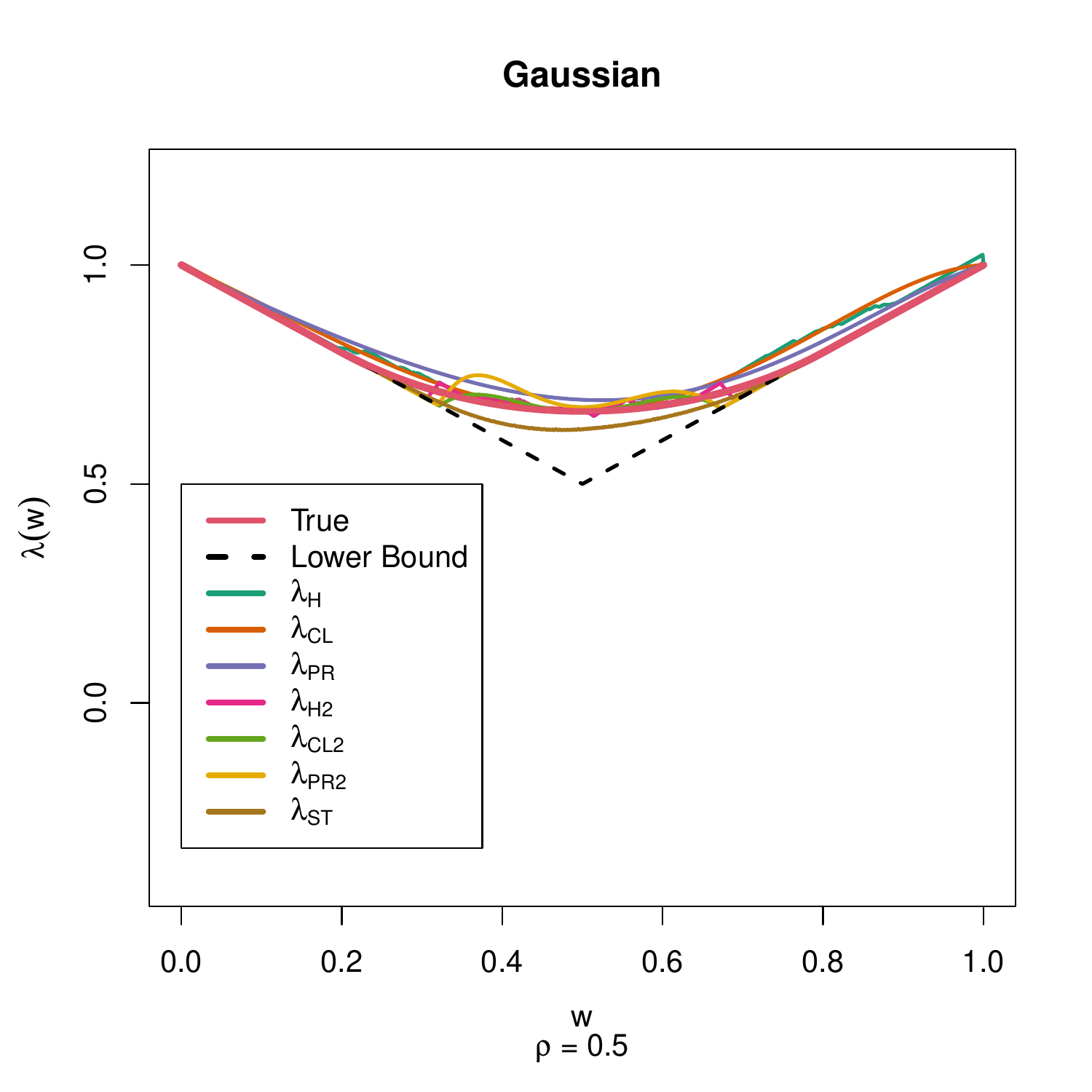}
    \caption{ADF estimates from each of the estimators discussed in the main article. Red represents the true ADF, with the theoretical lower bound given by the dotted black lines.}
    \label{fig:ADF_estimates_examples}
\end{figure}

\section{Example boundary set estimates}

Figure \ref{fig:gauge_estimates} illustrates estimates of the boundary set $\partial G$ obtained using the technique proposed in \citet{Simpson2022} for three copula examples. 


\begin{figure}[!h]
    \centering
    \includegraphics[width=\textwidth]{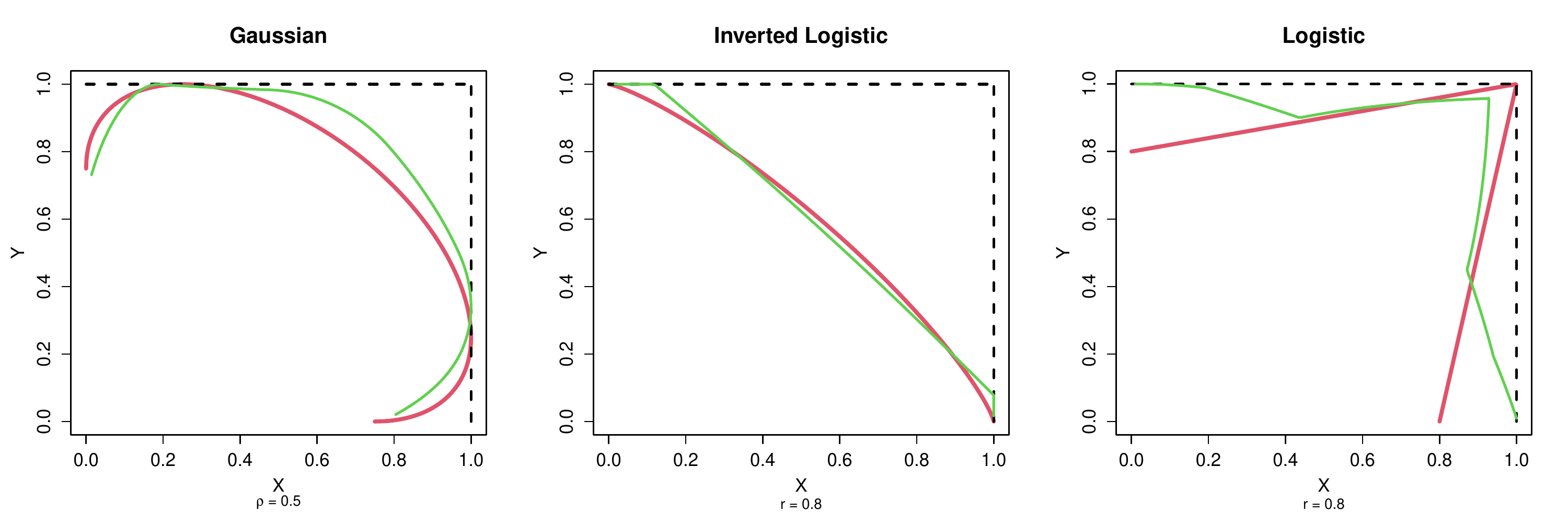}
    \caption{The boundary set $\partial G$ (given in red) for three copula examples, with estimates from \citet{Simpson2022} given in green. Left: bivariate Gaussian copula with correlation coefficient $\rho = 0.5$. Centre: inverted logistic copula with dependence parameter $r = 0.8$. Right: logistic copula with dependence parameter $r = 0.8$. In each plot, the coordinate limits of $\partial G$ are denoted by the black dotted lines.}
    \label{fig:gauge_estimates}
\end{figure}


\section{Tuning parameter selection}
Figures \ref{fig:rmise_est_cl} and \ref{fig:rmise_est_pr} illustrate plots of scaled RMISE estimates against the polynomial degree $k$ for the estimators $\hat{\lambda}_{CL}$ and $\hat{\lambda}_{PR}$, respectively. RMISE was estimated using Monte–Carlo techniques: first, for each $j = 1, 2, \hdots, N$, where $N=200$ denotes the number of samples, the ISE was estimated via the trapezium rule, i.e.,
\begin{equation*}
    \widehat{\text{ISE}}\left(\hat{\lambda}_j\right) = \frac{0.001}{2} \left( (\hat{\lambda}_j(0) - \lambda(0))^2 + \sum_{w \in \mathcal{W}\setminus\{0,1\}}2(\hat{\lambda}_j(w) - \lambda(w))^2 + (\hat{\lambda}_j(1) - \lambda(1))^2 \right),
\end{equation*}
where $\hat{\lambda}_j$ denotes the ADF estimate for sample $j$ and $\mathcal{W}$ denotes the set of rays spanning the interval $[0,1]$, as defined in Section 3.4 the main article. Next, the mean integrated squared error is estimated using Monte–Carlo approximation, i.e., $\widehat{\text{MISE}}:= (1/N)\sum_{j=1}^N\widehat{\text{ISE}}\left(\hat{\lambda}_j \right)$. Finally, an estimate of the RMISE given by
\begin{equation*}
    \widehat{\text{RMISE}} = \sqrt{\widehat{\text{MISE}}}.
\end{equation*}

To quantify simulation error, we also compute the Monte–Carlo variability for each of the $\widehat{\text{RMISE}}$ estimates. We first estimate the Monte–Carlo variance of the ISE estimates, i.e.,
\begin{equation*} 
    \widehat{\text{Var}}(\widehat{\text{ISE}}) = \frac{1}{N-1} \sum_{j=1}^N\left\{\widehat{\text{ISE}}\left(\hat{\lambda}_j \right) - \widehat{\text{MISE}} \right\}^2.
\end{equation*}
Then, since the estimates $\widehat{\text{ISE}}\left(\hat{\lambda}_j \right)$, $j = 1,2,\hdots,N$, are independent and identically distributed, we have
\begin{equation*}
    \widehat{\text{Var}}(\widehat{\text{MISE}}) = \frac{1}{N} \widehat{\text{Var}}(\widehat{\text{ISE}}).
\end{equation*}
Finally, applying the delta method, the Monte–Carlo variance of the RMISE estimate is given by
\begin{equation} \label{eqn:mcv}
    \widehat{\text{Var}}(\widehat{\text{RMISE}}) = \frac{\widehat{\text{Var}}(\widehat{\text{MISE}})}{4\widehat{\text{MISE}}} = \frac{\frac{1}{N-1} \sum_{j=1}^N\left\{\widehat{\text{ISE}}\left(\hat{\lambda}_j \right) - \widehat{\text{MISE}} \right\}^2}{ \frac{4}{N} \sum_{j=1}^N\widehat{\text{ISE}}\left(\hat{\lambda}_j \right)}.
\end{equation}


For the polynomial degree, we considered $k \in \{4,\hdots,11\}$; higher values of $k$ were not considered due to computational complexity. The left and right panels of Figures \ref{fig:rmise_est_cl} and \ref{fig:rmise_est_pr} correspond to Gaussian copulas exhibiting strong ($\rho = 0.9$) and weak ($\rho=0.1$) positive dependence, respectively. For both figures, the error bars correspond to $\widehat{\text{RMISE}} \pm 2\widehat{\text{SD}}(\widehat{\text{RMISE}})$, where $\widehat{\text{SD}}(\widehat{\text{RMISE}}) = \sqrt{\widehat{\text{Var}}(\widehat{\text{RMISE}})}$.

\begin{figure}[!h]
    \centering
    \includegraphics[width=\textwidth]{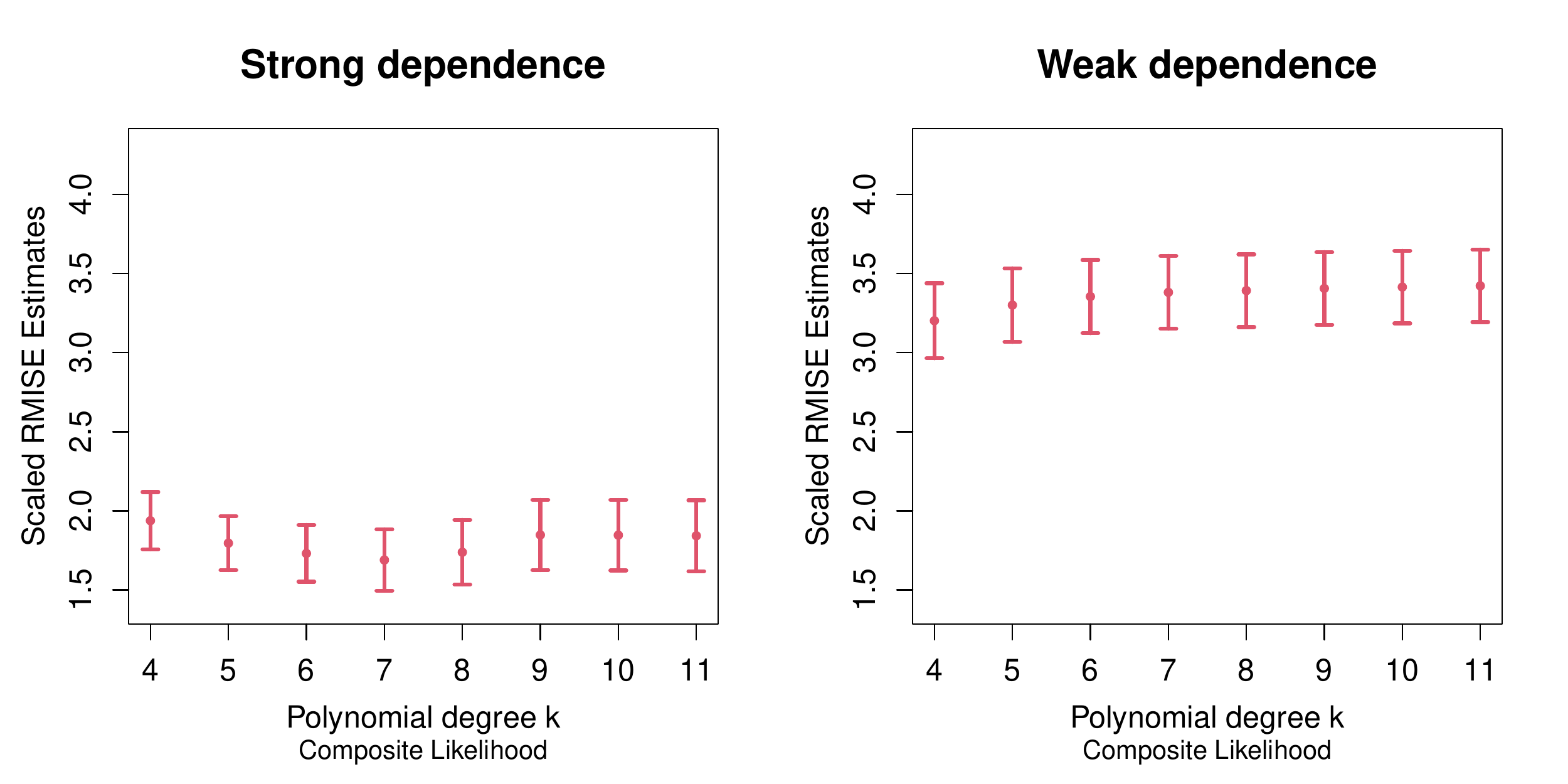}
    \caption{RMISE estimates with error bars (multiplied by 100) over $k$ obtained for $\hat{\lambda}_{CL}$ using $N=200$ from Gaussian copulas with strong (left, $\rho=0.9$) and weak (right, $\rho=0.1$) positive dependence.}
    \label{fig:rmise_est_cl}
\end{figure}

\begin{figure}[!h]
    \centering
    \includegraphics[width=\textwidth]{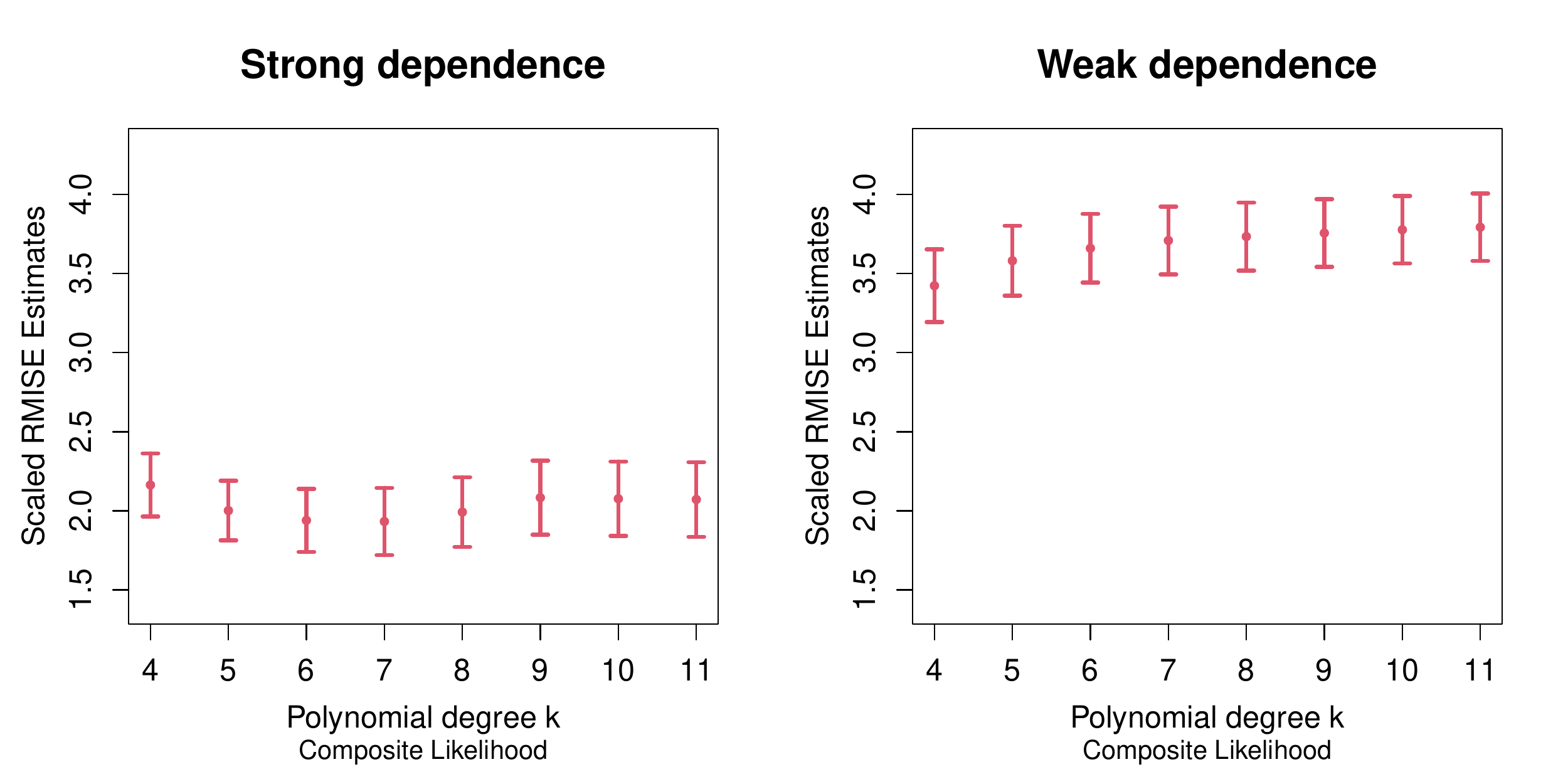}
    \caption{RMISE estimates with error bars (multiplied by 100) over $k$ obtained for $\hat{\lambda}_{PR}$ using $N=200$ from Gaussian copulas with strong (left, $\rho=0.9$) and weak (right, $\rho=0.1$) positive dependence.}
    \label{fig:rmise_est_pr}
\end{figure}

One can observe that, for the strongly dependent example, the RMISE estimates tend to decrease as the value of $k$ increases up to $7$. This is in agreement with findings for the Pickands' dependence function \citep[e.g.,][]{Marcon2017a,Vettori2018}; strongly dependent copulas require a higher degree of flexibility to capture the triangle-like shapes of the resulting dependence functions. However, the RMISE estimates again increase as $k$ increases above $7$, owing to the higher variability that arises for large basis dimensions.


On the other hand, for the weakly dependent Gaussian copula, the value of $k$ made little difference to the resulting RMISE estimates. There is a slight increase in RMISE estimates as $k$ increases; this suggests that having a higher polynomial degrees for such data structures may lead to over-fitting.



In practice, we set $k=7$; this value is sufficient for obtaining low RMISE estimates under both copula examples. Furthermore, for the strongly dependent Gaussian copula, there does not appear to be any advantage in setting $k$ higher than $7$. This polynomial degree therefore appears to offer sufficient flexibility without high computational burden and/or parameter variability.

\section{Additional simulation study results}
The Monte–Carlo errors of the RMISE estimates for each estimator, computed by taking the square root of equation \eqref{eqn:mcv}, are given in Table \ref{table:prec_values}.

\begin{table}[!h]

\caption{Monte–Carlo error of RMISE estimates (multiplied by 100) for each estimator and copula combination, with values reported to 3 significant figures. Copulas 1-3 refer to the Gaussian copula with $\rho = -0.6$, $\rho = 0.1$ and $\rho = 0.6$, respectively. Copulas 4-7 refer to the logistic, asymmetric logistic, inverted logistic, inverted asymmetric logistic copulas, respectively. Copulas 8-9 refer to the student t copula with parameters $\rho = 0.8$, $\nu = 2$ and $\rho = 0.2$, $\nu = 5$, respectively.
\label{table:prec_values}}
\centering
\begin{tabular}[t]{cccccccc}
\toprule
Copula & $\hat{\lambda}_{H}$ & $\hat{\lambda}_{CL}$ & $\hat{\lambda}_{PR}$ & $\hat{\lambda}_{H2}$ & $\hat{\lambda}_{CL2}$ & $\hat{\lambda}_{PR2}$ & $\hat{\lambda}_{ST}$\\
\hline
Copula 1 & 0.076 & 0.0757 & 0.0687 & 0.0788 & 0.0844 & 0.0784 & 0.174\\
\hline
Copula 2 & 0.0431 & 0.0442 & 0.0424 & 0.0427 & 0.0437 & 0.0418 & 0.0469\\
\hline
Copula 3 & 0.0451 & 0.0453 & 0.0462 & 0.0405 & 0.0404 & 0.0402 & 0.0173\\
\hline
Copula 4 & 0.0435 & 0.0444 & 0.0475 & 0.039 & 0.04 & 0.0441 & 0.0624\\
\hline
Copula 5 & 0.0577 & 0.0578 & 0.0566 & 0.0578 & 0.0579 & 0.0563 & 0.0778\\
\hline
Copula 6 & 0.0374 & 0.0376 & 0.04 & 0.031 & 0.0313 & 0.0337 & 0.0373\\
\hline
Copula 7 & 0.0315 & 0.0326 & 0.0323 & 0.0309 & 0.0319 & 0.0312 & 0.0484\\
\hline
Copula 8 & 0.0498 & 0.037 & 0.0411 & 0.0256 & 0.027 & 0.0303 & 0.0685\\
\hline
Copula 9 & 0.0589 & 0.0589 & 0.0578 & 0.0591 & 0.0592 & 0.058 & 0.0938\\
\bottomrule
\end{tabular}
\end{table}

\clearpage

The ISB and IV estimates for each estimator are given in Tables \ref{table:ISB_values} and \ref{table:IV_values}. One can observe that, while the $\hat{\lambda}_{ST}$ appears to perform best in terms of ISB, the $\hat{\lambda}_{CL}$ estimator exhibits the least IV for five out of the nine copula examples. For the most part, one can observe similar ISB and IV values across the different estimators. 

\renewcommand{\arraystretch}{1.5}
\begin{table}[!h]

\caption{ISB values (multiplied by 1,000) for each estimator and copula combination. Smallest ISB values in each row are highlighted in bold, with values reported to 3 significant figures. Copulas 1-3 refer to the Gaussian copula with $\rho = -0.6$, $\rho = 0.1$ and $\rho = 0.6$, respectively. Copulas 4-7 refer to the logistic, asymmetric logistic, inverted logistic, inverted asymmetric logistic copulas, respectively. Copulas 8-9 refer to the student t copula with parameters $\rho = 0.8$, $\nu = 2$ and $\rho = 0.2$, $\nu = 5$, respectively.
\label{table:ISB_values}}
\centering
\begin{tabular}[t]{cccccccc}
\toprule
Copula & $\hat{\lambda}_{H}$ & $\hat{\lambda}_{CL}$ & $\hat{\lambda}_{PR}$ & $\hat{\lambda}_{H2}$ & $\hat{\lambda}_{CL2}$ & $\hat{\lambda}_{PR2}$ & $\hat{\lambda}_{ST}$\\
\hline
Copula 1 & \textbf{372} & 374 & 436 & 374 & 380 & 442 & 402\\
\hline
Copula 2 & 0.348 & 0.366 & 0.452 & 0.321 & 0.333 & 0.409 & \textbf{0.00815}\\
\hline
Copula 3 & 0.812 & 0.847 & 1.04 & 0.753 & 0.77 & 0.943 & \textbf{0.0181}\\
\hline
Copula 4 & 1.74 & 1.84 & 4.2 & 1.52 & 1.49 & 3.25 & \textbf{0.299}\\
\hline
Copula 5 & 19 & 19.2 & 28.4 & 18.9 & 19.1 & 28.1 & \textbf{13.6}\\
\hline
Copula 6 & 0.0385 & 0.0412 & 0.0517 & \textbf{0.00212} & 0.00222 & 0.00365 & 0.0045\\
\hline
Copula 7 & 0.00711 & 0.0041 & 0.00818 & \textbf{0.00157} & 0.0023 & 0.00339 & 0.55\\
\hline
Copula 8 & 0.0236 & 0.0521 & 0.0953 & 0.0107 & \textbf{0.00757} & 0.0157 & 0.0755\\
\hline
Copula 9 & 13.5 & 13.8 & 21.3 & 13.4 & 13.7 & 21.1 & \textbf{10.9}\\
\bottomrule
\end{tabular}
\end{table}

\begin{table}[!h]

\caption{IV values (multiplied by 1,000) for each estimator and copula combination. Smallest IV values in each row are highlighted in bold, with values reported to 3 significant figures. Copulas 1-3 refer to the Gaussian copula with $\rho = -0.6$, $\rho = 0.1$ and $\rho = 0.6$, respectively. Copulas 4-7 refer to the logistic, asymmetric logistic, inverted logistic, inverted asymmetric logistic copulas, respectively. Copulas 8-9 refer to the student t copula with parameters $\rho = 0.8$, $\nu = 2$ and $\rho = 0.2$, $\nu = 5$, respectively.
\label{table:IV_values}}
\centering
\begin{tabular}[t]{cccccccc}
\toprule
Copula & $\hat{\lambda}_{H}$ & $\hat{\lambda}_{CL}$ & $\hat{\lambda}_{PR}$ & $\hat{\lambda}_{H2}$ & $\hat{\lambda}_{CL2}$ & $\hat{\lambda}_{PR2}$ & $\hat{\lambda}_{ST}$\\
\hline
Copula 1 & 2.58 & \textbf{2.02} & 2.11 & 3.09 & 3.23 & 3.13 & 4.7\\
\hline
Copula 2 & 0.834 & \textbf{0.761} & 0.897 & 0.843 & 0.787 & 0.927 & 0.863\\
\hline
Copula 3 & 0.363 & 0.35 & 0.426 & 0.276 & 0.269 & 0.334 & \textbf{0.1}\\
\hline
Copula 4 & 0.362 & 0.371 & 0.551 & \textbf{0.281} & 0.31 & 0.55 & 0.468\\
\hline
Copula 5 & 0.717 & \textbf{0.674} & 0.868 & 0.734 & 0.718 & 0.944 & 1.07\\
\hline
Copula 6 & 0.382 & 0.359 & 0.423 & 0.316 & \textbf{0.306} & 0.366 & 0.445\\
\hline
Copula 7 & 0.772 & \textbf{0.714} & 0.848 & 0.768 & 0.719 & 0.854 & 1.02\\
\hline
Copula 8 & 0.0842 & 0.0586 & 0.113 & \textbf{0.0208} & 0.021 & 0.0362 & 0.273\\
\hline
Copula 9 & 0.751 & \textbf{0.701} & 0.901 & 0.79 & 0.763 & 0.983 & 1.28\\
\bottomrule
\end{tabular}
\end{table}

\clearpage

We also consider the performance of each estimator at different rays (i.e., different regions in $\RR^2_+$). For this, estimates of the the root mean squared error (RMSE) were computed via Monte–Carlo techniques for the rays $w \in \{0.1,0.3,0.5,0.7,0.9\}$; these estimates are given in Tables \ref{table:RMSE_values0.1}-\ref{table:RMSE_values0.9}.

These results do not reveal any obvious patterns, besides the fact the $\hat{\lambda}_{ST}$ estimator tends to perform best for $w = 0.5$ (i.e., on the $y=x$ line). As with the RMISE, ISB and IV estimates, the performance of each estimator appears to vary over copulas. This is likely due to the different rates of convergence to the limiting ADF for finite sample sizes.

We also observe that for $w=0.1$ and $w=0.9$, the $\hat{\lambda}_{ST}$ has some notably small RMSE estimates for the Gaussian copula with $\rho = 0.6$. What this shows is that for this dependence structure, the estimates of $\partial G$ at each iteration touch the boundary of the unit box $[0,1]^2$ at some ray in the ranges $[0.1,0.5]$ and $[0.5,0.9]$, ensuring that the estimated ADF correctly equals the lower bound for $w = 0.1, 0.9$. The estimators $\hat{\lambda}_{H2}$, $\hat{\lambda}_{CL2}$ and $\hat{\lambda}_{PR2}$ will all lead to this conclusion if the estimates of the conditional extremes parameters satisfy $\alpha^*_{x \mid y} > 0.1$ and $\alpha^*_{y \mid x} < 0.9$. This explains the notable improvements over their non-combined counterparts in this case, and similar improvements for copulas $4$ and $8$.


\begin{table}[!h]

\caption{RMSE values (multiplied by 100) for each estimator and copula combination at $w = 0.1$. Smallest RMSE values in each row are highlighted in bold, with values reported to 3 significant figures. Copulas 1-3 refer to the Gaussian copula with $\rho = -0.6$, $\rho = 0.1$ and $\rho = 0.6$, respectively. Copulas 4-7 refer to the logistic, asymmetric logistic, inverted logistic, inverted asymmetric logistic copulas, respectively. Copulas 8-9 refer to the student t copula with parameters $\rho = 0.8$, $\nu = 2$ and $\rho = 0.2$, $\nu = 5$, respectively.
\label{table:RMSE_values0.1}}
\centering
\begin{tabular}[t]{cccccccc}
\toprule
Copula & $\hat{\lambda}_{H}$ & $\hat{\lambda}_{CL}$ & $\hat{\lambda}_{PR}$ & $\hat{\lambda}_{H2}$ & $\hat{\lambda}_{CL2}$ & $\hat{\lambda}_{PR2}$ & $\hat{\lambda}_{ST}$\\
\hline
Copula 1 & 59.7 & \textbf{56.9} & 62.2 & 59.8 & 58.3 & 63.5 & 61.8\\
\hline
Copula 2 & 3.45 & 3.74 & 4.12 & 3.48 & 3.7 & 4.02 & \textbf{2.47}\\
\hline
Copula 3 & 1.76 & 2.15 & 2.4 & 0.424 & 0.416 & 0.436 & $\mathbf{1.11 \times 10^{-14}}$\\
\hline
Copula 4 & 2.01 & 2.52 & 3.19 & 0.424 & 0.458 & \textbf{0.42} & 2.78\\
\hline
Copula 5 & 4.15 & 4.68 & 5.48 & 3.27 & \textbf{3.1} & 3.54 & 4.58\\
\hline
Copula 6 & 1.88 & 2.21 & 2.42 & 1.08 & 1.09 & 1.17 & \textbf{0.175}\\
\hline
Copula 7 & \textbf{3} & 3.29 & 3.69 & 3.15 & 3.44 & 3.82 & 4.84\\
\hline
Copula 8 & 1.07 & 0.755 & 1.27 & \textbf{0.237} & 0.244 & 0.315 & 2.42\\
\hline
Copula 9 & 6.76 & 7.59 & 8.72 & 6.64 & 7.27 & 8.26 & \textbf{5.98}\\
\bottomrule
\end{tabular}
\end{table}

\begin{table}[!h]

\caption{RMSE values (multiplied by 100) for each estimator and copula combination at $w = 0.3$. Smallest RMSE values in each row are highlighted in bold, with values reported to 3 significant figures. Copulas 1-3 refer to the Gaussian copula with $\rho = -0.6$, $\rho = 0.1$ and $\rho = 0.6$, respectively. Copulas 4-7 refer to the logistic, asymmetric logistic, inverted logistic, inverted asymmetric logistic copulas, respectively. Copulas 8-9 refer to the student t copula with parameters $\rho = 0.8$, $\nu = 2$ and $\rho = 0.2$, $\nu = 5$, respectively.
\label{table:RMSE_values0.3}}
\centering
\begin{tabular}[t]{cccccccc}
\toprule
Copula & $\hat{\lambda}_{H}$ & $\hat{\lambda}_{CL}$ & $\hat{\lambda}_{PR}$ & $\hat{\lambda}_{H2}$ & $\hat{\lambda}_{CL2}$ & $\hat{\lambda}_{PR2}$ & $\hat{\lambda}_{ST}$\\
\hline
Copula 1 & \textbf{62} & 63.7 & 68.6 & 62 & 63.7 & 68.6 & 63.8\\
\hline
Copula 2 & 3.5 & 3.36 & 3.63 & 3.5 & 3.31 & 3.59 & \textbf{3.13}\\
\hline
Copula 3 & 3.65 & 3.67 & 4.07 & 3.64 & 3.85 & 4.24 & \textbf{0.71}\\
\hline
Copula 4 & 3.22 & 3.2 & 5.31 & \textbf{2.59} & 2.6 & 3.58 & 2.93\\
\hline
Copula 5 & 10.2 & 10.1 & 13.7 & 10.2 & 10.3 & 13.8 & \textbf{9.83}\\
\hline
Copula 6 & 2.18 & \textbf{2.01} & 2.22 & 2.19 & 2.19 & 2.44 & 2.67\\
\hline
Copula 7 & 2.89 & \textbf{2.66} & 2.91 & 2.89 & 2.68 & 2.95 & 4.46\\
\hline
Copula 8 & 0.837 & 0.587 & 0.99 & \textbf{0.257} & 0.26 & 0.337 & 1.88\\
\hline
Copula 9 & 10.9 & \textbf{10.7} & 13.9 & 10.9 & 10.8 & 13.9 & 11.6\\
\bottomrule
\end{tabular}
\end{table}

\begin{table}[!h]

\caption{RMSE values (multiplied by 100) for each estimator and copula combination at $w = 0.5$. Smallest RMSE values in each row are highlighted in bold, with values reported to 3 significant figures. Copulas 1-3 refer to the Gaussian copula with $\rho = -0.6$, $\rho = 0.1$ and $\rho = 0.6$, respectively. Copulas 4-7 refer to the logistic, asymmetric logistic, inverted logistic, inverted asymmetric logistic copulas, respectively. Copulas 8-9 refer to the student t copula with parameters $\rho = 0.8$, $\nu = 2$ and $\rho = 0.2$, $\nu = 5$, respectively.
\label{table:RMSE_values0.5}}
\centering
\begin{tabular}[t]{cccccccc}
\hline
Copula & $\hat{\lambda}_{H}$ & $\hat{\lambda}_{CL}$ & $\hat{\lambda}_{PR}$ & $\hat{\lambda}_{H2}$ & $\hat{\lambda}_{CL2}$ & $\hat{\lambda}_{PR2}$ & $\hat{\lambda}_{ST}$\\
\toprule
Copula 1 & 62.6 & \textbf{60.8} & 66.6 & 62.6 & \textbf{60.8} & 66.6 & 63.9\\
\hline
Copula 2 & 3.65 & 3.49 & 3.75 & 3.65 & 3.59 & 3.85 & \textbf{3.49}\\
\hline
Copula 3 & 5.13 & 5.08 & 5.56 & 5.13 & 5.18 & 5.74 & \textbf{2.39}\\
\hline
Copula 4 & 12.1 & 12.4 & 16.3 & 12.1 & 12 & 16.2 & \textbf{4.63}\\
\hline
Copula 5 & 26.4 & 26.6 & 32 & 26.4 & 26.6 & 32.1 & \textbf{20.8}\\
\hline
Copula 6 & 2.03 & \textbf{1.96} & 2.05 & 2.03 & 1.97 & 2.15 & 2.67\\
\hline
Copula 7 & 2.8 & \textbf{2.62} & 2.73 & 2.8 & 2.63 & 2.78 & 3.61\\
\hline
Copula 8 & 2.79 & 4.27 & 5 & 2.78 & 2.52 & 3.26 & \textbf{1.8}\\
\hline
Copula 9 & 23.3 & 23.5 & 27.9 & 23.3 & 23.5 & 27.9 & \textbf{19}\\
\bottomrule
\end{tabular}
\end{table}

\begin{table}[!h]

\caption{RMSE values (multiplied by 100) for each estimator and copula combination at $w = 0.7$. Smallest RMSE values in each row are highlighted in bold, with values reported to 3 significant figures. Copulas 1-3 refer to the Gaussian copula with $\rho = -0.6$, $\rho = 0.1$ and $\rho = 0.6$, respectively. Copulas 4-7 refer to the logistic, asymmetric logistic, inverted logistic, inverted asymmetric logistic copulas, respectively. Copulas 8-9 refer to the student t copula with parameters $\rho = 0.8$, $\nu = 2$ and $\rho = 0.2$, $\nu = 5$, respectively.
\label{table:RMSE_values0.7}}
\centering
\begin{tabular}[t]{cccccccc}
\toprule
Copula & $\hat{\lambda}_{H}$ & $\hat{\lambda}_{CL}$ & $\hat{\lambda}_{PR}$ & $\hat{\lambda}_{H2}$ & $\hat{\lambda}_{CL2}$ & $\hat{\lambda}_{PR2}$ & $\hat{\lambda}_{ST}$\\
\hline
Copula 1 & \textbf{62.1} & 63.8 & 68.4 & 62.1 & 63.8 & 68.4 & 63.8\\
\hline
Copula 2 & 3.53 & 3.38 & 3.8 & 3.53 & 3.34 & 3.76 & \textbf{3.27}\\
\hline
Copula 3 & 3.7 & 3.67 & 4.13 & 3.7 & 3.89 & 4.35 & \textbf{0.592}\\
\hline
Copula 4 & 3.1 & 3.08 & 5.16 & \textbf{2.52} & 2.6 & 3.69 & 2.82\\
\hline
Copula 5 & 16.8 & 16.9 & 19.3 & 16.8 & 16.9 & 19.2 & \textbf{16.4}\\
\hline
Copula 6 & 2.21 & \textbf{2.06} & 2.27 & 2.22 & 2.25 & 2.49 & 2.71\\
\hline
Copula 7 & 2.78 & \textbf{2.54} & 2.66 & 2.78 & 2.57 & 2.71 & 4.42\\
\hline
Copula 8 & 0.835 & 0.571 & 0.858 & 0.233 & \textbf{0.227} & 0.269 & 1.85\\
\hline
Copula 9 & 10.9 & \textbf{10.7} & 13.9 & 10.9 & 10.7 & 13.9 & 11.6\\
\bottomrule
\end{tabular}
\end{table}

\begin{table}[!h]

\caption{RMSE values (multiplied by 100) for each estimator and copula combination at $w = 0.9$. Smallest RMSE values in each row are highlighted in bold, with values reported to 3 significant figures. Copulas 1-3 refer to the Gaussian copula with $\rho = -0.6$, $\rho = 0.1$ and $\rho = 0.6$, respectively. Copulas 4-7 refer to the logistic, asymmetric logistic, inverted logistic, inverted asymmetric logistic copulas, respectively. Copulas 8-9 refer to the student t copula with parameters $\rho = 0.8$, $\nu = 2$ and $\rho = 0.2$, $\nu = 5$, respectively.
\label{table:RMSE_values0.9}}
\centering
\begin{tabular}[t]{cccccccc}
\toprule
Copula & $\hat{\lambda}_{H}$ & $\hat{\lambda}_{CL}$ & $\hat{\lambda}_{PR}$ & $\hat{\lambda}_{H2}$ & $\hat{\lambda}_{CL2}$ & $\hat{\lambda}_{PR2}$ & $\hat{\lambda}_{ST}$\\
\hline
Copula 1 & 59.7 & \textbf{57.1} & 62.5 & 59.8 & 58.4 & 63.7 & 61.7\\
\hline
Copula 2 & 3.38 & 3.65 & 3.93 & 3.44 & 3.62 & 3.85 & \textbf{2.54}\\
\hline
Copula 3 & 1.84 & 2.24 & 2.47 & 0.352 & 0.289 & 0.268 & $\mathbf{1.11 \times {10}^{-14}}$\\
\hline
Copula 4 & 2 & 2.5 & 3.25 & \textbf{0.392} & 0.394 & 0.431 & 2.71\\
\hline
Copula 5 & 7.22 & 7.42 & 7.91 & 7.21 & 7.48 & 7.94 & \textbf{2.58}\\
\hline
Copula 6 & 2.05 & 2.37 & 2.56 & 1.04 & 1.06 & 1.16 & \textbf{0.183}\\
\hline
Copula 7 & \textbf{2.76} & 3.03 & 3.36 & 2.82 & 2.86 & 3.07 & 2.8\\
\hline
Copula 8 & 1.06 & 0.734 & 1.1 & \textbf{0.143} & 0.157 & 0.183 & 2.38\\
\hline
Copula 9 & 6.6 & 7.41 & 8.4 & 6.51 & 7.12 & 7.98 & \textbf{5.91}\\
\bottomrule
\end{tabular}
\end{table}

\clearpage

\section{Additional case study figures}

This section contains additional figures for the case study detailed in Section 4 of the main article. Figure \ref{fig:river_flow_time_series} illustrates daily river flow time series for each of the six gauges in the north of England, UK. These series suggest a stationarity assumption is reasonable for the extremes of each data set. 

\begin{figure}[!h]
    \centering
    \includegraphics[width=\textwidth]{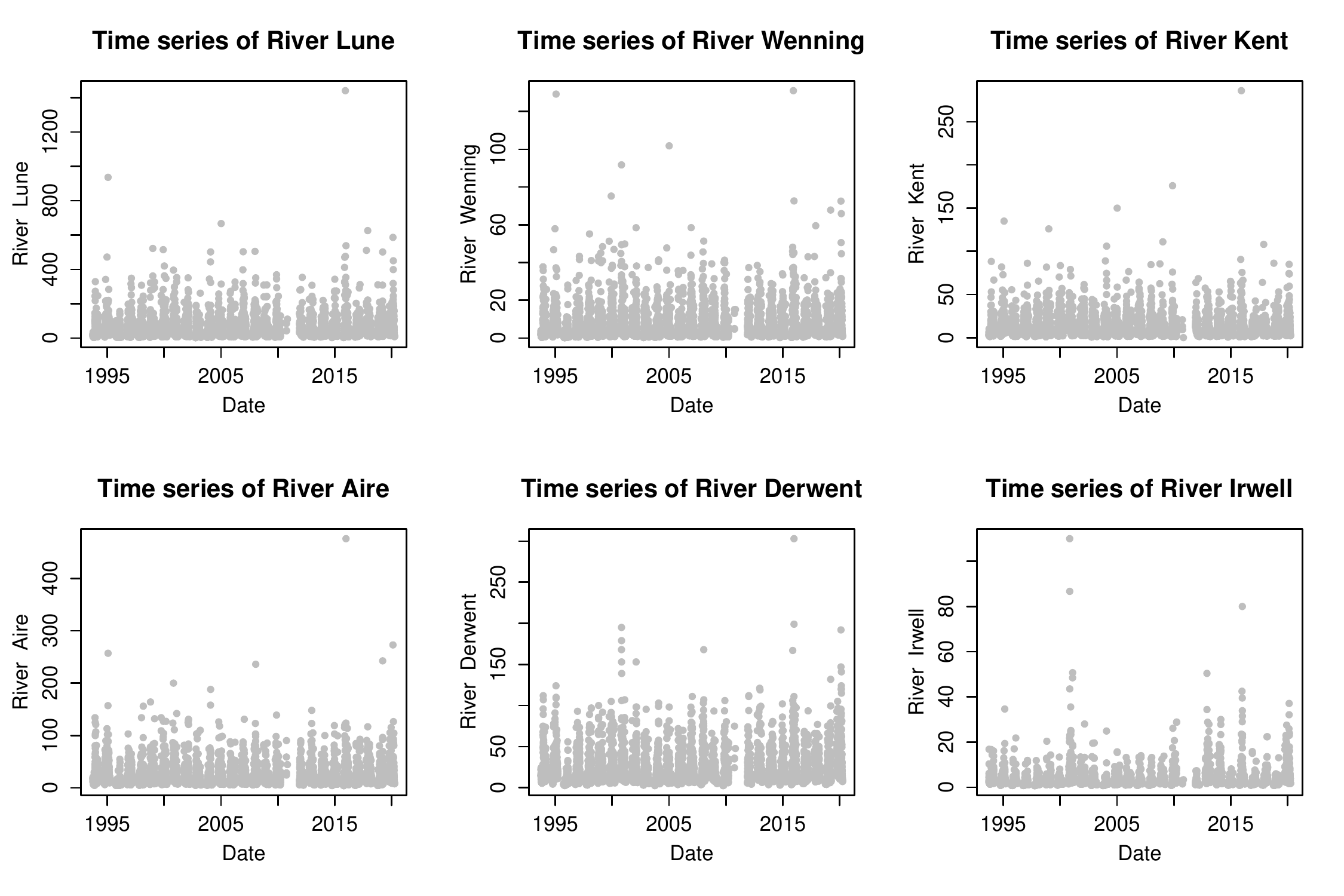}
    \caption{Daily river flow time series for the six gauges in the north of England, UK.}
    \label{fig:river_flow_time_series}
\end{figure}

Figure \ref{fig:qq_plots} illustrates the QQ plots from the fitted GPDs at each of the six gauges. Uncertainty intervals are obtained via block bootstrapping on the order statistics; see the main article for further details. One can observe that, in each case, the majority of points lie close to the $y=x$ line, indicating the fitted models capture the upper tails well. 

\begin{figure}[!h]
    \centering
    \includegraphics[width=\textwidth]{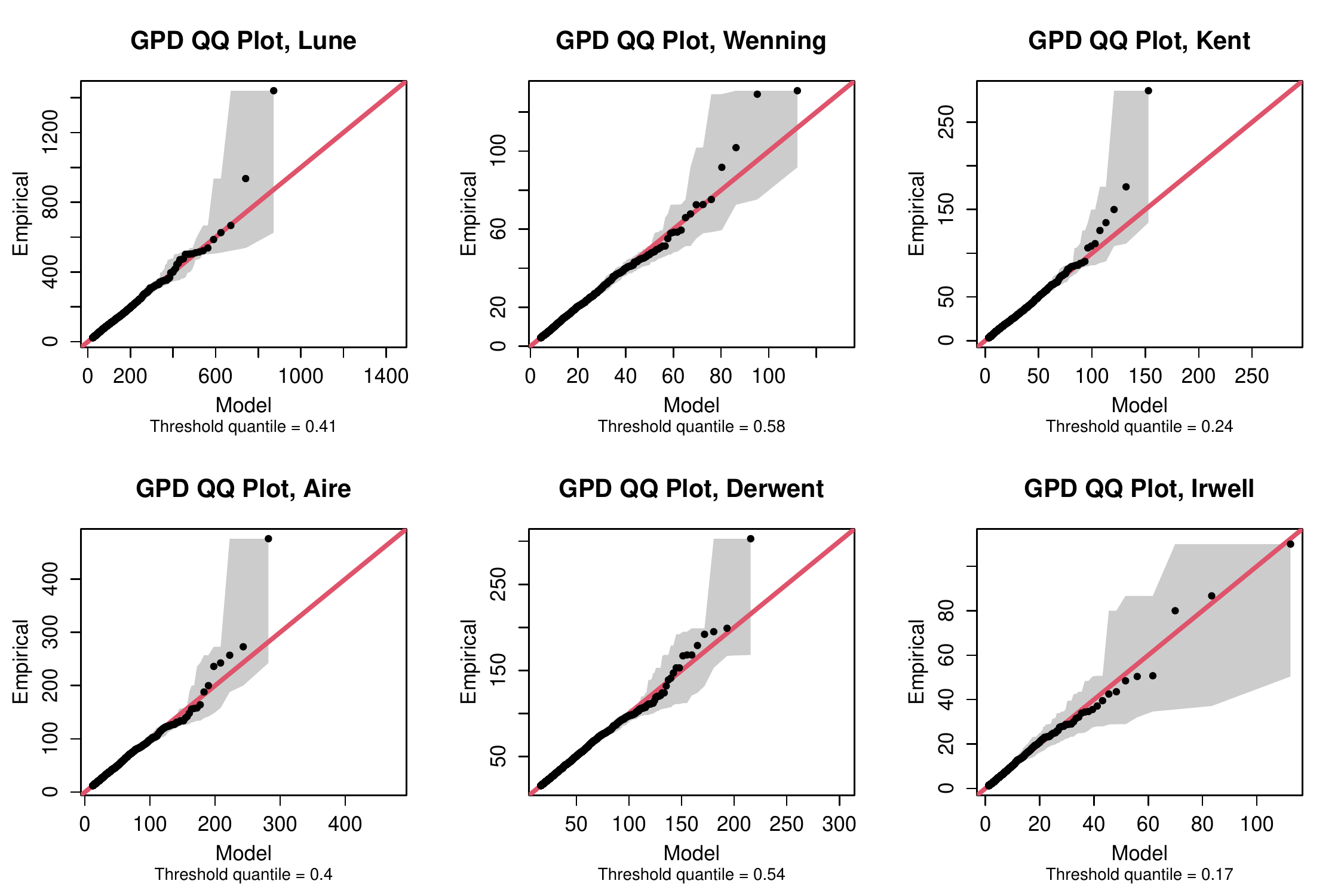}
    \caption{QQ plots for each of the fitted GPDs at each of the six gauges. Estimates are given in black, with 95\% bootstrapped tolerance intervals represented by the grey shaded regions. The red line corresponds to the $y=x$ line. The corresponding threshold quantile levels are given in the subtitle of each plot.}
    \label{fig:qq_plots}
\end{figure}

Figures \ref{fig:adf_diag2} and \ref{fig:adf_diag3} illustrate the ADF QQ plots for the third pair of gauges using the estimates obtained via $\hat{\lambda}_{ST}$ and $\hat{\lambda}_{H}$, respectively. The estimated and observed quantiles appear in good agreement at each of the considered rays. 


\begin{figure}[!h]
    \centering
    \includegraphics[width=\textwidth]{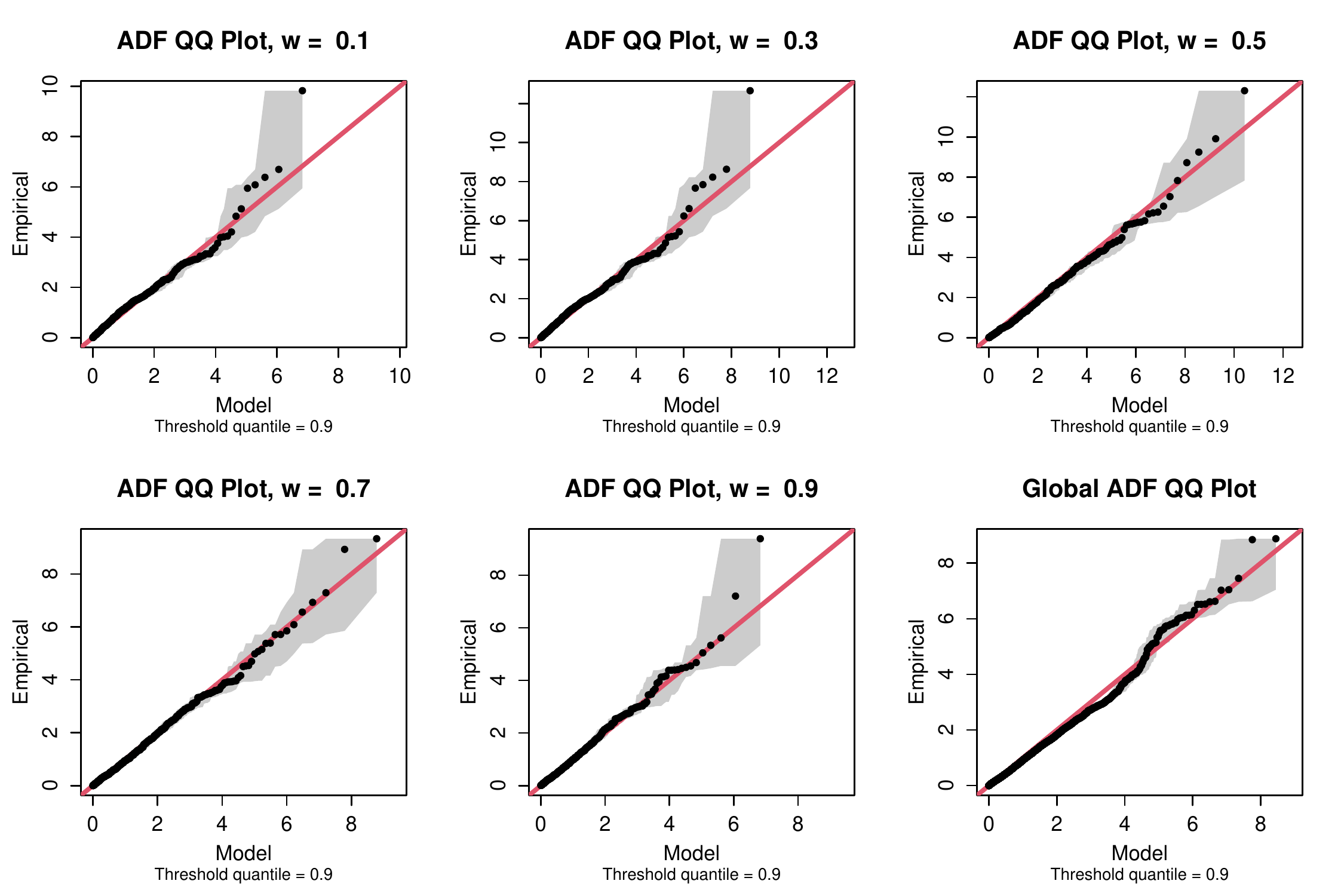}
    \caption{Local and global ADF QQ plots for the third pair of gauges, using the ADF estimate obtained via $\hat{\lambda}_{ST}$. Estimates given in black, with 95\% pointwise tolerance intervals represented by the grey shaded regions. Red lines correspond to the $y=x$ line.}
    \label{fig:adf_diag2}
\end{figure}

\begin{figure}[!h]
    \centering
    \includegraphics[width=\textwidth]{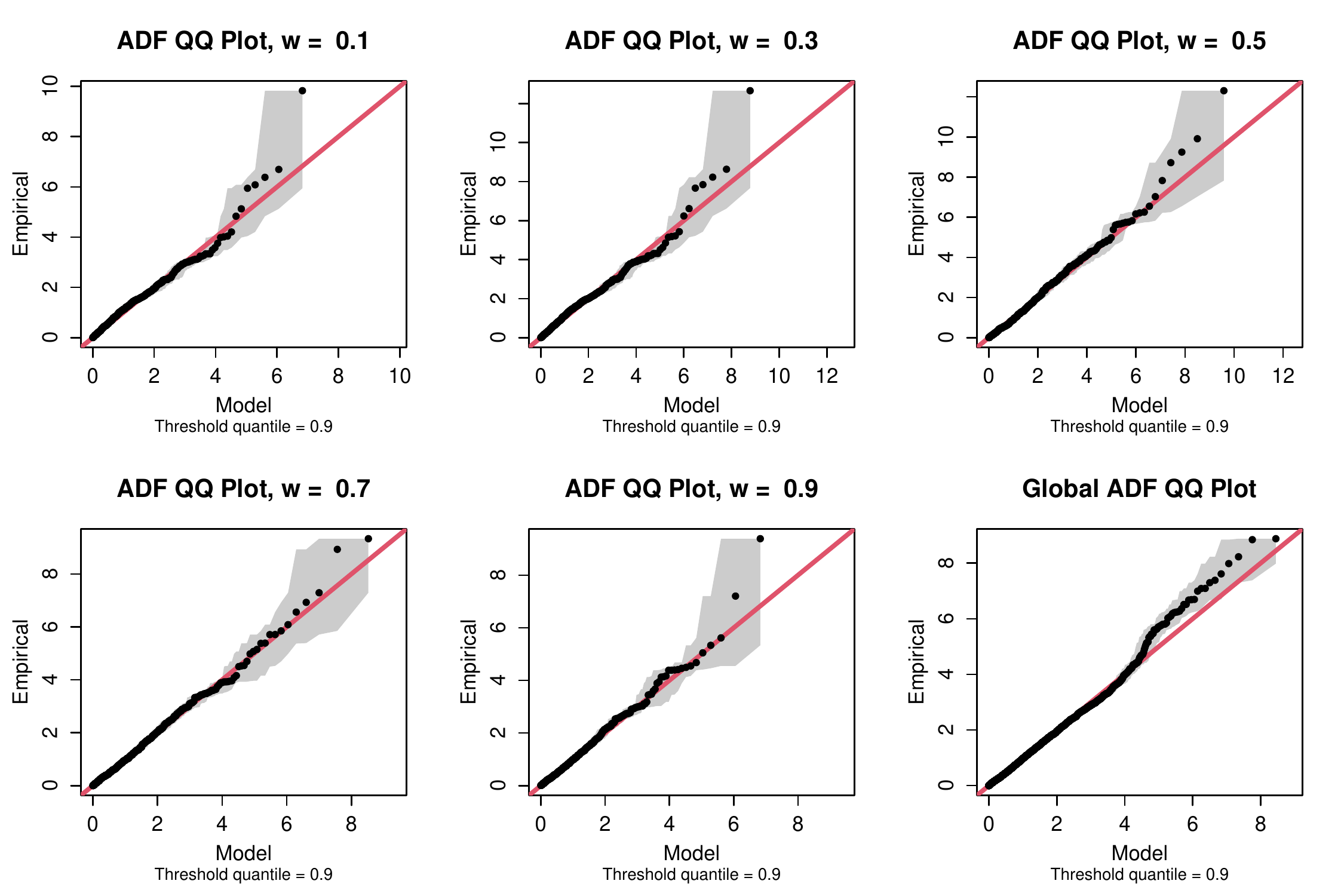}
    \caption{Local and global ADF QQ plots for the third pair of gauges, using the ADF estimate obtained via $\hat{\lambda}_{H}$. Estimates given in black, with 95\% pointwise tolerance intervals represented by the grey shaded regions. Red lines correspond to the $y=x$ line.}
    \label{fig:adf_diag3}
\end{figure}

Figures \ref{fig:adf_global}, \ref{fig:adf_global2} and \ref{fig:adf_global3} illustrate the global ADF diagnostic over six random seeds for the third pair of gauges using the estimates obtained via $\hat{\lambda}_{CL2}$, $\hat{\lambda}_{ST}$ and $\hat{\lambda}_{H}$, respectively. One can clearly observe the variability in this diagnostic over different random seeds. Therefore, in practice, we recommend computing this diagnostic multiple times to ensure one can fairly compare the model fits from each estimator.

\begin{figure}[!h]
    \centering
    \includegraphics[width=\textwidth]{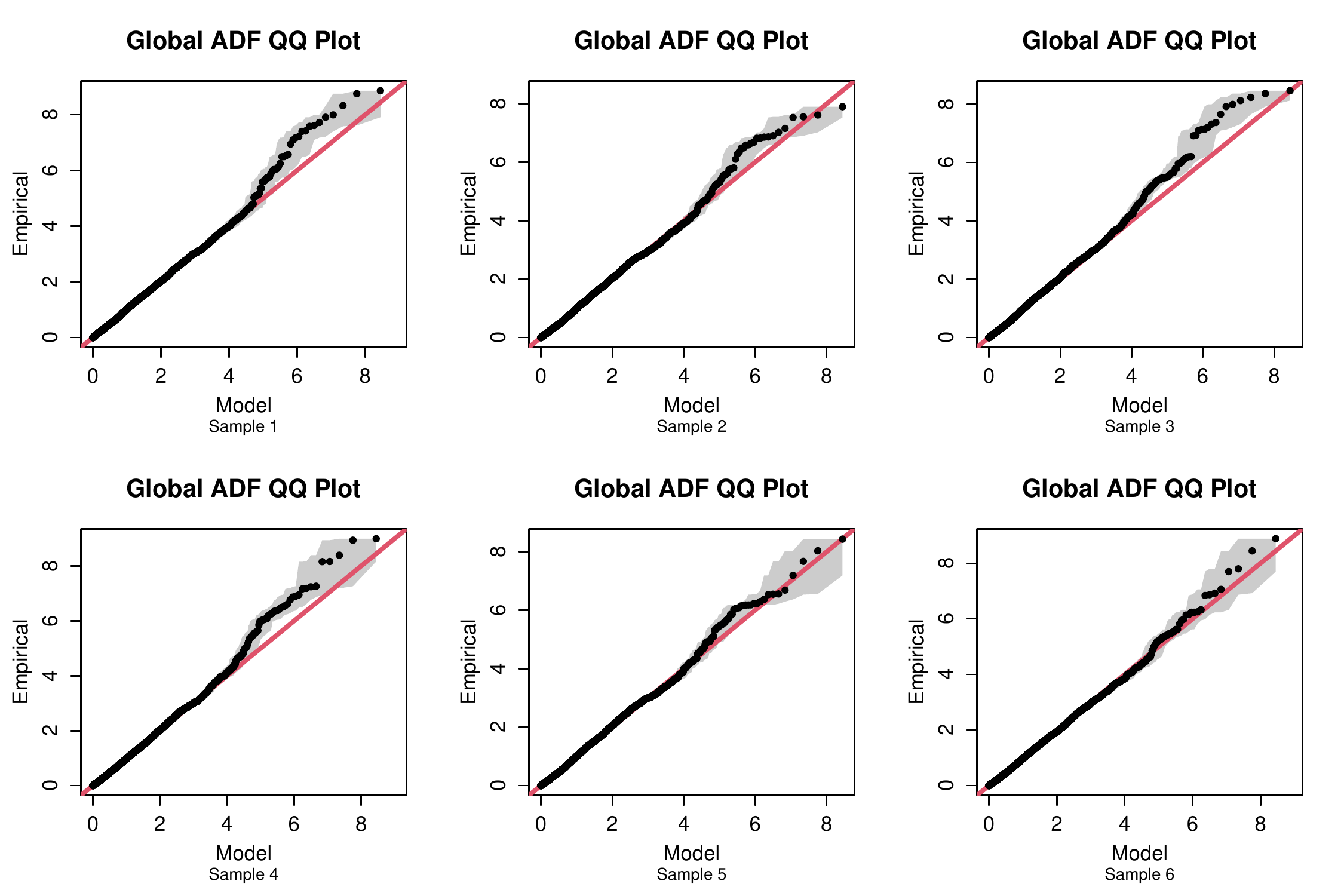}
    \caption{Global ADF QQ plot over six random seeds for the third pair of gauges, using the ADF estimate obtained via $\hat{\lambda}_{CL2}$. Estimates given in black, with 95\% pointwise tolerance intervals represented by the grey shaded regions. Red lines correspond to the $y=x$ line.}
    \label{fig:adf_global}
\end{figure}

\begin{figure}[!h]
    \centering
    \includegraphics[width=\textwidth]{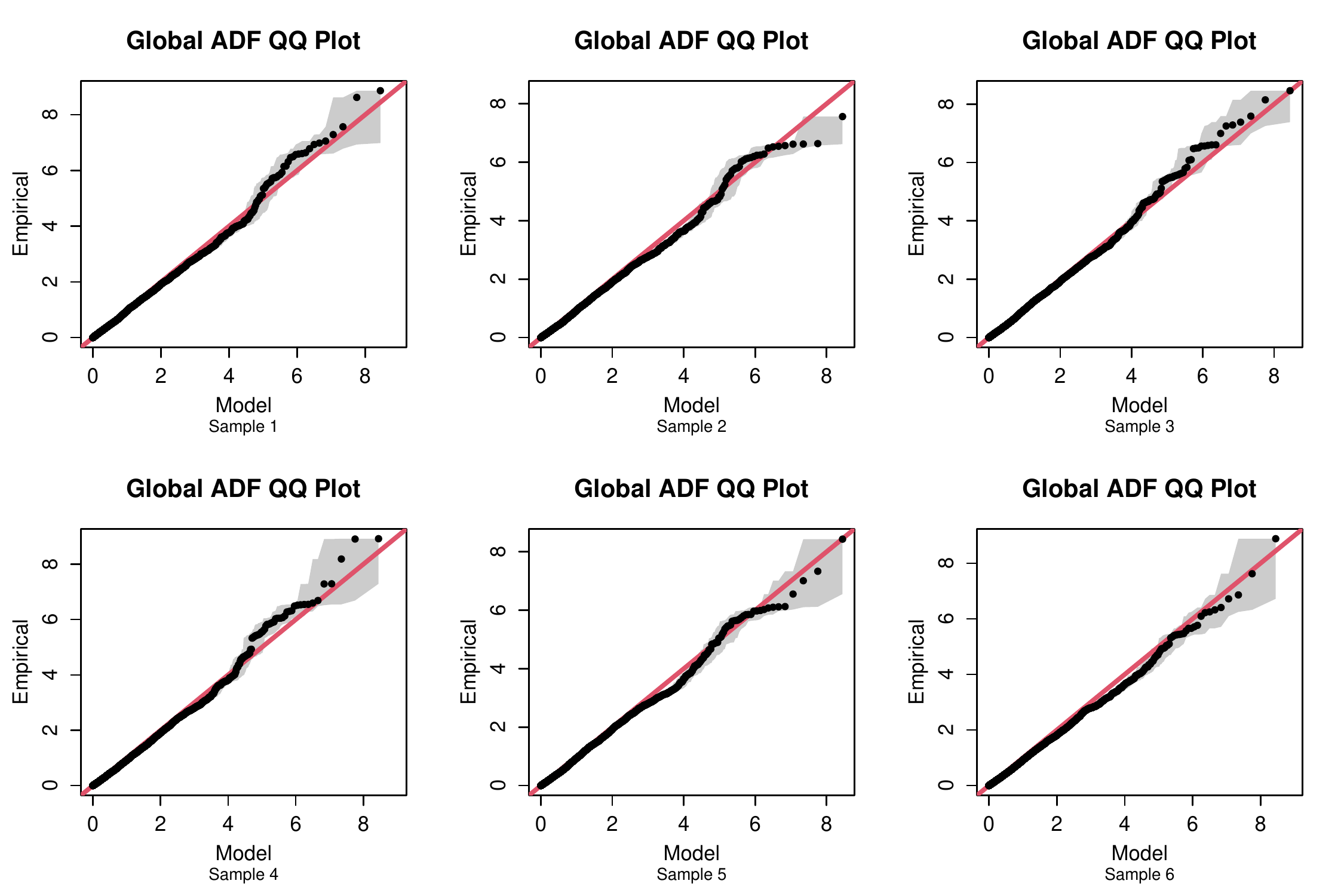}
    \caption{Global ADF QQ plot over six random seeds for the third pair of gauges, using the ADF estimate obtained via $\hat{\lambda}_{ST}$. Estimates given in black, with 95\% pointwise tolerance intervals represented by the grey shaded regions. Red lines correspond to the $y=x$ line.}
    \label{fig:adf_global2}
\end{figure}

\begin{figure}[!h]
    \centering
    \includegraphics[width=\textwidth]{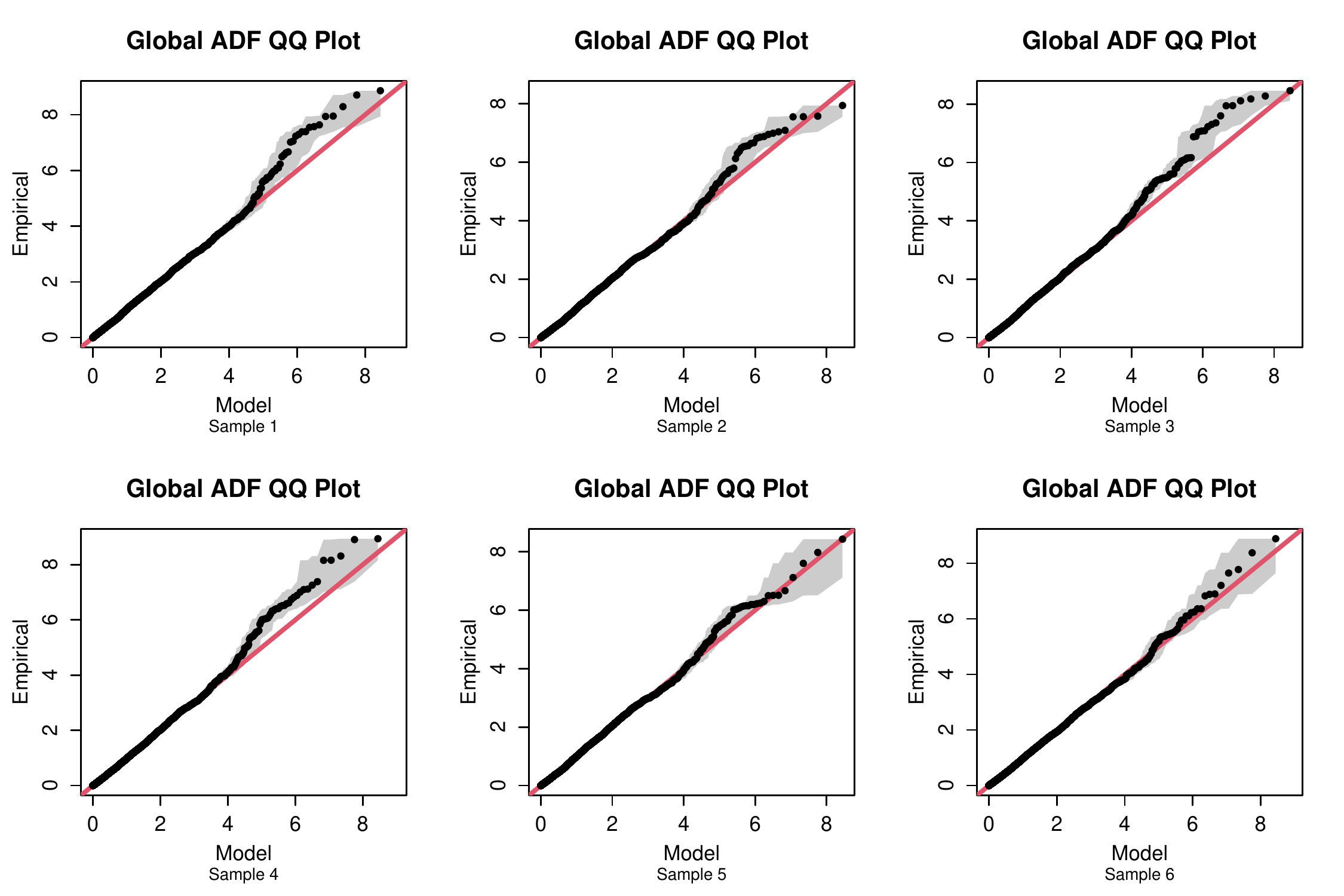}
    \caption{Global ADF QQ plot over six random seeds for the third pair of gauges, using the ADF estimate obtained via $\hat{\lambda}_{H}$. Estimates given in black, with 95\% pointwise tolerance intervals represented by the grey shaded regions. Red lines correspond to the $y=x$ line.}
    \label{fig:adf_global3}
\end{figure}

Figures \ref{fig:diag_st} and \ref{fig:diag_cl2} illustrate the return curve diagnostic of \citet{Murphy-Barltrop2023} for the estimators $\hat{\lambda}_{ST}$ and $\hat{\lambda}_{CL2}$, respectively. Note that for both of these figures, the `true survival probabilities' refer to the probability $p$ at which return curve estimates have been obtained. For this diagnostic, a subset of points are selected on a return curve estimate; these points correspond to a set of $m=150$ equally spaced angles $\theta$ in the interval $[0,\pi/2]$, i.e., given $(x,y) \in \RC{p}{}$, we have $\theta = \tan^{-1}(y/x)$. Empirical estimates of the joint survival function are computed for each point and bootstrapping is used to evaluate uncertainty. Finally, the median empirical estimates, alongside 95\% pointwise confidence intervals, are plotted against the angle index and compared to the true probability; see \citet{Murphy-Barltrop2023} for further details. 

Both estimators appear to give a similar level of accuracy, though for the fifth gauge site pairing, both $\hat{\lambda}_{ST}$ and $\hat{\lambda}_{CL2}$ fail to capture the true probability at all angles. 

\begin{figure}[!h]
    \centering
    \includegraphics[width=\textwidth]{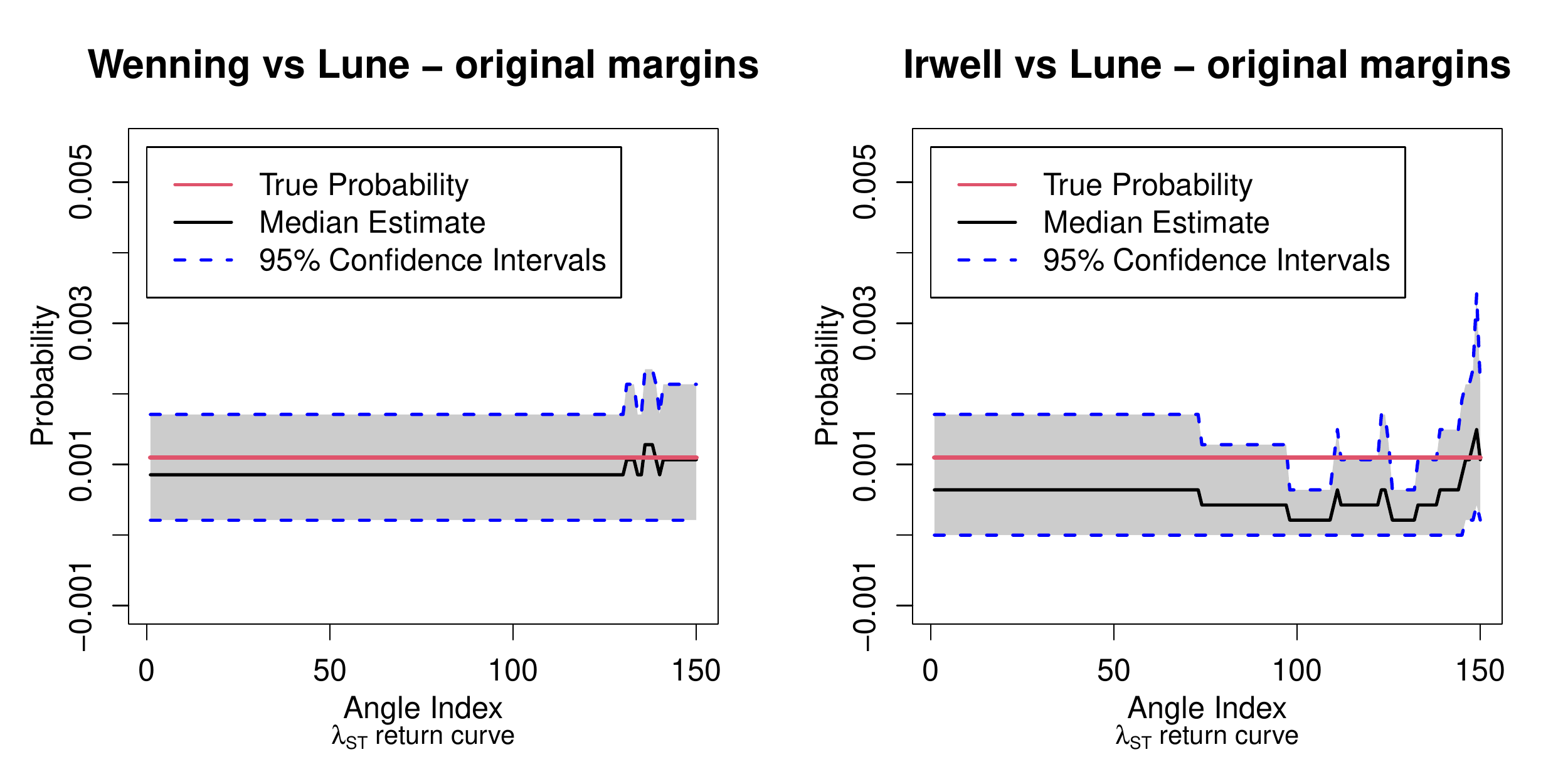}
    \caption{Diagnostic plots of the return curve estimates from the $\hat{\lambda}_{ST}$ estimator for the first and fifth gauge site pairings. The black and red lines indicate the empirical median and true survival probabilities, respectively, with 95\% bootstrapped confidence intervals denoted by the shaded regions.}
    \label{fig:diag_st}
\end{figure}

\begin{figure}[!h]
    \centering
    \includegraphics[width=\textwidth]{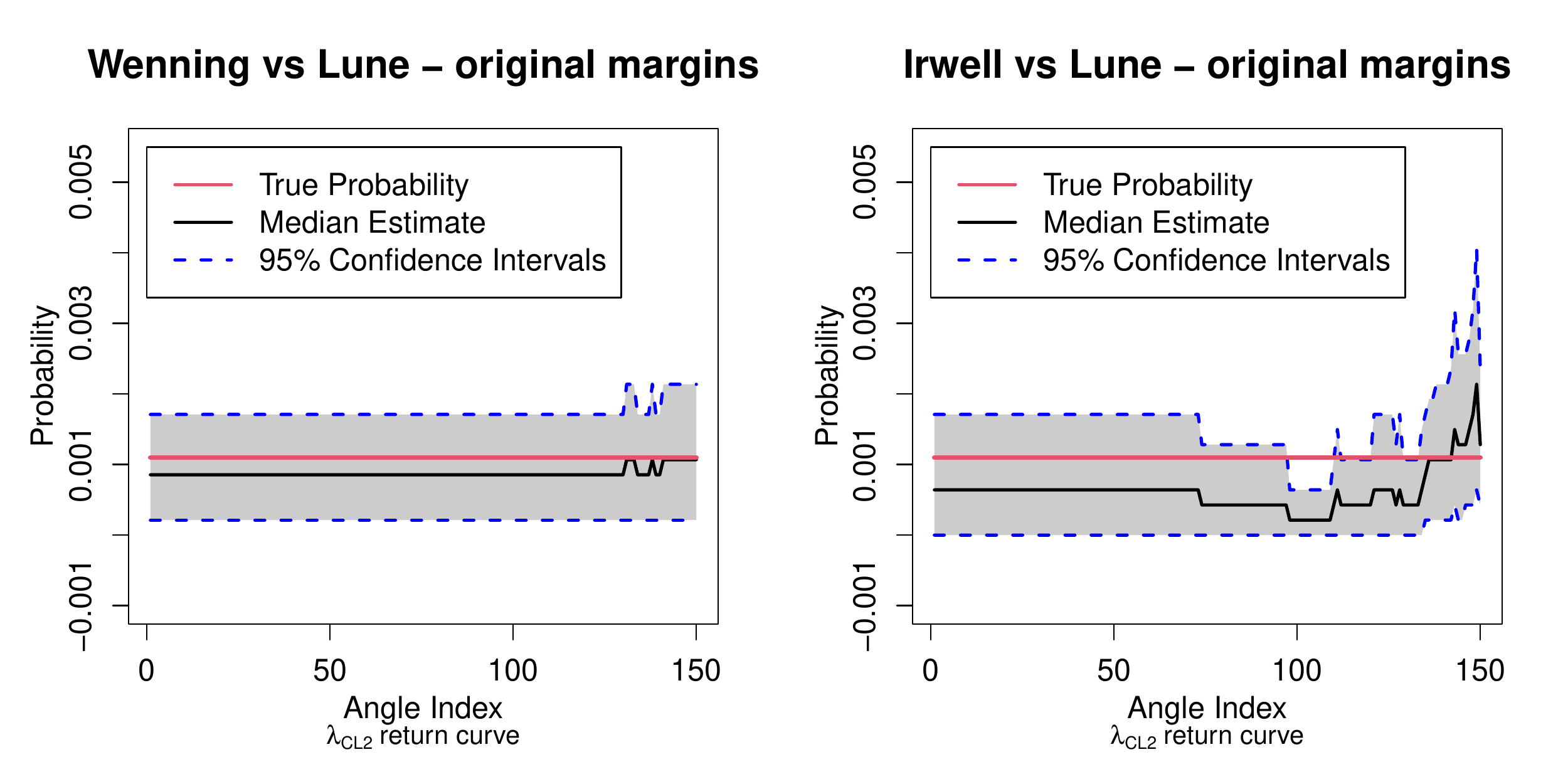}
    \caption{Diagnostic plots of the return curve estimates from the $\hat{\lambda}_{CL2}$ estimator for the first and fifth examples. The black and red lines indicate the empirical median and true survival probabilities, respectively, with 95\% bootstrapped confidence intervals denoted by the shaded regions}
    \label{fig:diag_cl2}
\end{figure}

Finally, Figures \ref{fig:uncert_rc2} and \ref{fig:uncert_rc} illustrate estimated return curve uncertainty intervals obtained using the $\hat{\lambda}_{ST}$ and $\hat{\lambda}_{CL2}$ estimators, respectively, for the first and fifth gauge site pairings. In both figures, one can observe the contrast in shapes of the uncertainty regions between the two site pairings. 

\begin{figure}[!h]
    \centering
    \includegraphics[width=\textwidth]{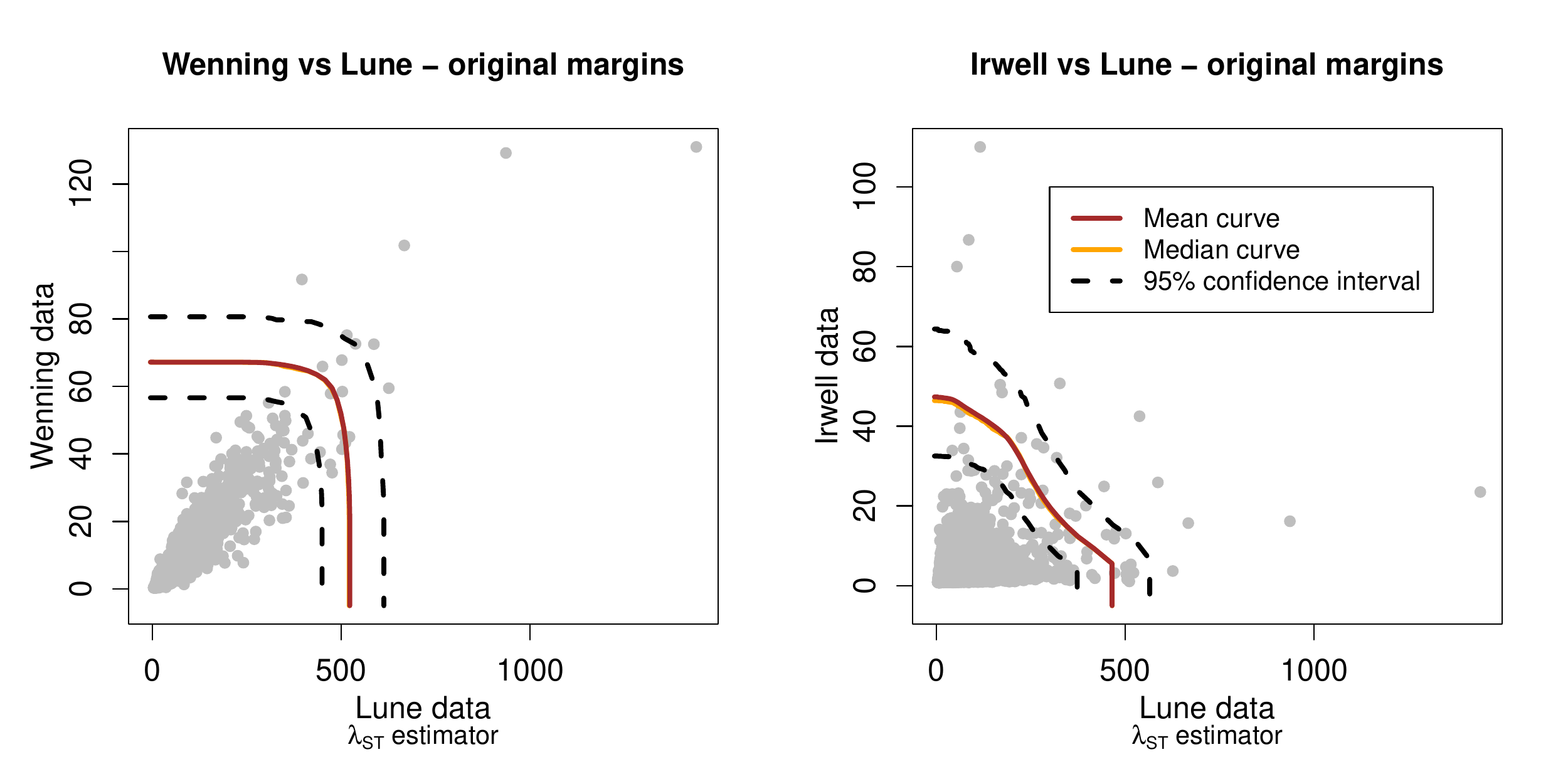}
    \caption{Median and mean return curve estimates in orange and brown, respectively, obtained using the $\hat{\lambda}_{ST}$ estimator for the first and fifth examples. The black dotted lines indicate 95\% confidence intervals.}
    \label{fig:uncert_rc2}
\end{figure}

\begin{figure}[!h]
    \centering
    \includegraphics[width=\textwidth]{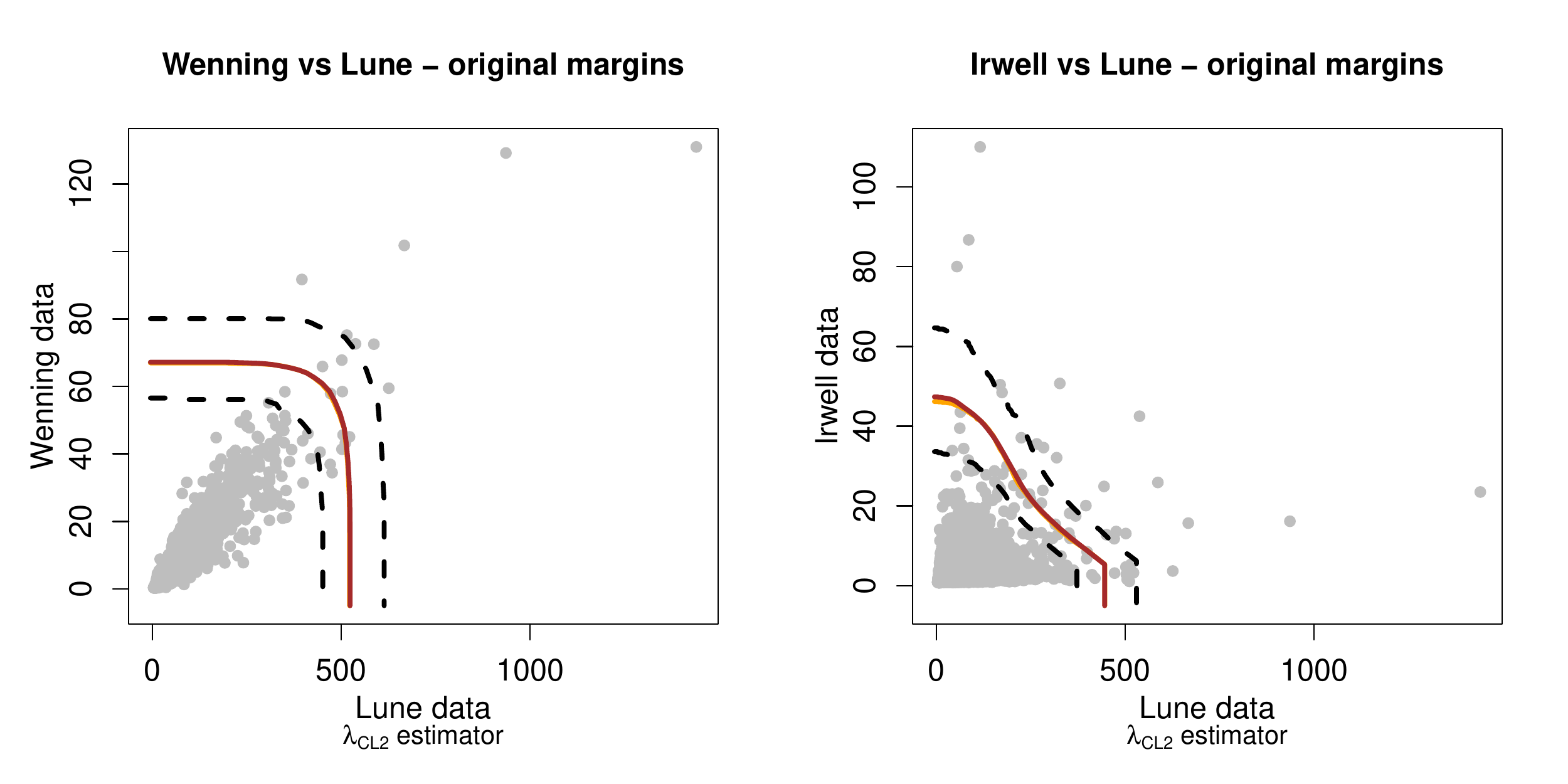}
    \caption{Median and mean return curve estimates in orange and brown, respectively, obtained using the $\hat{\lambda}_{CL2}$ estimator for the first and fifth examples. The black dotted lines indicate 95\% confidence intervals.}
    \label{fig:uncert_rc}
\end{figure}

\clearpage

\bibliography{library,nrfa}

\end{document}